\documentclass[11pt,letterpaper]{article}

\usepackage[margin=1in]{geometry}        
\usepackage{setspace}                     
\usepackage[T1]{fontenc}                  
\usepackage[utf8]{inputenc}               
\usepackage{lmodern} 
\usepackage{graphicx}
\usepackage{natbib}
\usepackage{booktabs}
\usepackage{authblk}
\usepackage{enumerate}
\usepackage{enumitem}

\usepackage{amsmath,amssymb,amsthm,mathtools,bbm}

\numberwithin{equation}{section}

\theoremstyle{plain}
\newtheorem{theorem}{Theorem}[section]
\newtheorem{lemma}[theorem]{Lemma}
\newtheorem{proposition}[theorem]{Proposition}
\newtheorem{corollary}[theorem]{Corollary}

\theoremstyle{definition}
\newtheorem{definition}{Definition}
\newtheorem{assumption}{Assumption}
\newtheorem{remark}{Remark}[section]

\usepackage[ruled, linesnumbered]{algorithm2e}
\usepackage{xcolor}

\usepackage[
  colorlinks,
  linkcolor=blue,
  citecolor=blue,
  urlcolor=blue,
  breaklinks
]{hyperref}

\def\E{\mathbb{E}}
\def\P{\mathbb{P}}
\def\R{\mathbb{R}}
\def\N{\mathbb{N}}
\newcommand{\bx}{\mathbf{x}}
\newcommand{\bX}{\mathbf{X}}

\def\Hcal{\mathcal{H}}
\def\Ccal{\mathcal{C}}
\def\Pcal{\mathcal{P}}

\newcommand{\One}[1]{{\mathbbm{1}}\left\{{#1}\right\}}
\newcommand{\one}[1]{{\mathbbm{1}}_{{#1}}}
\def\conch{{\textsc{CONCH}}}
\DeclareMathOperator*{\argmax}{argmax}

  \newcommand\independent{\protect\mathpalette{\protect\independenT}{\perp}}
\def\independenT#1#2{\mathrel{\rlap{$#1#2$}\mkern2mu{#1#2}}}

\usepackage[nameinlink,noabbrev]{cleveref}
\crefname{assumption}{assumption}{assumptions}

\newcommand{\papertitle}{Conformal changepoint localization}
\newcommand{\paperauthorA}{Rohan Hore}
\newcommand{\paperauthorB}{Aaditya Ramdas}
\newcommand{\affilOne}{Department of Statistics and Data Science, Carnegie Mellon University}

\newcommand{\corrEmail}{rhore@andrew.cmu.edu}
\newcommand{\keywordslist}{Offline changepoint localization, conformal inference, conformal p-values, Confidence set, Distribution-free inference.}

\title{\papertitle}

\author[1]{\paperauthorA\thanks{Corresponding author: \corrEmail}}
\author[1]{\paperauthorB}
\affil[1]{\affilOne}

\date{\today}

\newcommand{\paperabstract}[1]{%
  \begin{abstract}
    #1
  \end{abstract}
}
\newcommand{\paperkeywords}[1]{%
  \vspace{0.5em}
  \noindent\textbf{Keywords: } #1
}

\usepackage[nameinlink,noabbrev]{cleveref}

\begin{document}

\maketitle

\paperabstract{%
We study the problem of offline changepoint localization in a distribution-free setting. One observes a vector of data with a single changepoint, assuming that the data before and after the changepoint are i.i.d.\ (or more generally exchangeable) from arbitrary and unknown distributions. The goal is to produce a finite-sample confidence set for the index at which the change occurs without making any other assumptions. Existing methods often rely on parametric assumptions, tail conditions, or asymptotic approximations, or only produce point estimates. In contrast, our distribution-free algorithm, CONformal CHangepoint localization (\textsc{CONCH}), only leverages exchangeability arguments to construct confidence sets with finite sample coverage.  
By proving a \emph{conformal Neyman--Pearson lemma}, we derive principled score functions that yield informative (small) sets. Moreover, with such score functions, the normalized length of the confidence set shrinks to zero under weak assumptions. We also establish a universality result showing that any distribution-free changepoint localization method must be an instance of \conch{}. Experiments suggest that \textsc{CONCH} delivers precise confidence sets even in challenging settings involving images or text.
}

\paperkeywords{\keywordslist}

\section{Introduction} \label{sec:introduction}
We study the problem of offline changepoint localization. Given an ordered sequence of $\mathcal{X}$-valued observations $X_1,\ldots,X_n$ with an unknown changepoint location $\xi\in\{1,\ldots,n-1\}$, the goal is not merely to provide a point estimate $\hat{\xi}$, but also to quantify the uncertainty associated with that estimate and construct a confidence set for the true changepoint.

Specifically, suppose that $X_1,\ldots,X_\xi$ are drawn independently from a distribution $P_0$, while $X_{\xi+1},\ldots,X_n$ are drawn independently from a different distribution $P_1\neq P_0$. Our objective is to perform changepoint localization in a distribution-free manner; that is, given a level $\alpha\in(0,1)$, we seek a confidence set $C$ for $\xi$ satisfying
\begin{equation}\label{eq:DF-guarantee}
\inf_{P_0,\xi,P_1\neq P_0} \P(\xi\in C)\geq 1-\alpha.
\end{equation}
In words, the coverage guarantee should hold uniformly over all changepoint locations $\xi$ and all pairs of distinct pre-change and post-change distributions $P_0$ and $P_1$. The term ``distribution-free'' refers precisely to this unrestricted infimum, in contrast to approaches that assume specific distributional families such as Gaussian, bounded, or other parametric models.

Importantly, no structural assumptions are imposed on the observation space $\mathcal X$. Consequently, the same framework applies not only to Euclidean data, but also to more complex data modalities such as images, text, and other structured objects.

In domains such as operations engineering, econometrics, and biostatistics, the ability to retrospectively (and confidently) assess the time of distributional change can be critical. Consider, for instance, a manufacturing context: quality measurements of a component may remain stable until a machine begins to malfunction, after which the measurements exhibit a systematic shift. Once the production batch has concluded, it becomes essential to confidently determine when this shift first arose in order to diagnose the source of the malfunction and implement corrective measures. 

\subsection{Related work}\label{sec:review}

Given its wide practical relevance, offline changepoint analysis has been extensively studied; see \citet{truong2020selective,duggins2010parametric} for surveys. Some of this literature focuses on changepoint detection, where the goal is to test the null hypothesis that there is no changepoint against the alternative that at least one changepoint is present, while some papers additionally try to provide a point estimate for the change location that is asymptotically consistent. In contrast, our problem of changepoint localization additionally seeks to quantify uncertainty around an estimated changepoint location and construct a confidence set for the true changepoint. As we discuss later, this amounts to testing the null hypothesis that a candidate location $t$ is the true changepoint, separately for each $t$. While related, these are different statistical objectives, calling for different techniques.
Existing approaches to changepoint analysis can be broadly categorized as follows.

\medskip\textbf{(a) Changepoint detection methods.}
The changepoint detection problem is classical and well studied. Beginning with CUSUM \citep{page1955test}, a large literature has developed around modern detection procedures, including kernel changepoint procedures (KCP) \citep{arlot2019kernel}, ChangeForest \citep{londschien2023random}, and graph-based methods \citep{chu2019asymptotic}. Conformal martingale methods \citep{vovk2003testing,volkhonskiy2017inductive,vovk2021testing,vovk2021retrain,nouretdinov2021conformal,shin2022detectors} provide powerful tools for online changepoint detection, but do not yield confidence sets for localization.

\medskip\textbf{(b) Parametric localization methods.}
Several localization procedures rely on strong model assumptions. Likelihood-based approaches assume specific parametric families (e.g., Gaussian mean-shift or linear regression models) and derive point estimators and confidence intervals through asymptotic likelihood theory \citep{kim1989likelihood,quandt1958estimation,gurevich2006guaranteed}. More recent post-detection localization procedures \citep{saha2025post} likewise require restrictive assumptions, such as known and non-overlapping pre-change and post-change model classes.

\medskip\textbf{(c) Nonparametric localization methods.}
Several nonparametric approaches provide localization guarantees only asymptotically. These include SMUCE \citep{frick2014multiscale}, regression-based methods \citep{xu2024change}, and Gaussian mean-shift confidence intervals \citep{fotopoulos2010exact}, among others \citep{bhattacharyya1968nonparametric,zou2007empirical}. The procedure of \citet{verzelen2023optimal} achieves strong theoretical optimality guarantees, but depends on non-computable constants that limit its practical applicability.

\medskip\textbf{(d) Resampling-based localization methods.}
Bootstrap-based approaches \citep{cho2022bootstrap} provide uncertainty quantification for changepoint estimates but generally lack finite-sample validity guarantees and can be computationally intensive. Rank-based nonparametric procedures \citep{pettitt1979non,ross2012two} are distribution-free for detection, but do not provide confidence sets for localization. Multi-changepoint methods \citep{anastasiou2022detecting,truong2020selective} typically follow isolate-detect strategies and primarily return point estimates.

Overall, existing localization methods either rely on restrictive model assumptions, provide only asymptotic guarantees, or focus primarily on point estimation rather than finite-sample valid uncertainty quantification.

\medskip\noindent\textbf{Conformal approaches to localization.}
Recently, \citet{dandapanthula2025conformal} introduced MCP localization, the first fully distribution-free approach to changepoint localization (in the sense of~\eqref{eq:DF-guarantee}). Their method constructs confidence sets using a matrix of conformal $p$-values and enjoys finite-sample validity guarantees. However, as we demonstrate later in Appendix~\ref{app:gaussian_all_pvalue_comaprison}, MCP often produces confidence sets that are wider than necessary, motivating the search for sharper distribution-free localization procedures.

More broadly, our work is inspired by the same conformal inference literature. Originally introduced by \citet{vovk1999machine,shafer2008tutorial} for distribution-free predictive inference, conformal methods have since been extended to a wide range of statistical tasks, including outlier detection \citep{bates2023testing}, post-prediction screening \citep{jin2023selection}, and conditional two-sample testing \citep{wu2024conditional}, among others.

In this work, we build on these ideas to develop a simple and principled framework for changepoint localization that is distribution-free, finite-sample valid, and capable of producing informative confidence sets. We also prove that every distribution-free confidence set must be an instance of CONCH. We formalize the objective of distribution-free changepoint localization in Section~\ref{sec:dist_free_chpt_loc}.

\subsection{Summary of our contributions} 
\begin{itemize}
    \item We introduce \conch{} (CONformal CHangepoint localization) in \Cref{alg:conch}, a novel framework that, given any changepoint plausibility score $S$ and a confidence level $1-\alpha$, produces a finite-sample, distribution-free confidence set for the true changepoint without making any restrictive assumptions on the pre- and post-change distributions.
    
    \item While our framework is valid for any choice of score function, offering great flexibility to the user, its statistical performance can be substantially improved by employing scores tailored to the problem at hand. We derive an expression of optimal score function based on a novel ``Conformal Neyman–Pearson'' lemma (Lemma~\ref{lem:conch_NP_lemma}), which may be of independent interest.

    \item While the optimal score requires oracle knowledge, we propose practical ``near-optimal'' alternatives learnable from data that yield narrow confidence sets. We further show that, under mild regularity conditions, the normalized length of the confidence
    set converges to zero as the sample size grows (Theorem~\ref{thm:conch-consistency}).

   \item Next, we show that \conch{} has a universality property (Theorem~\ref{thm:universality_changepoint}): any distribution-free confidence set for the changepoint is an instance of our framework. This further enables us to give a simple calibration technique that turns any heuristic confidence set or any point estimate for the changepoint location into a distribution-free valid confidence set.

   \item We illustrate the practical effectiveness of \conch{} across a range of synthetic and real-world datasets, covering both image and text domains. In all cases, the resulting \conch{} confidence sets concentrate tightly around the true changepoint. Notably, our framework can wrap around any black-box classifier trained to distinguish pre- and post-change samples, yielding informative confidence sets even under subtle distributional shifts.
\end{itemize}

\section{Distribution-free changepoint localization}\label{sec:dist_free_chpt_loc}
We formally describe the problem of distribution-free offline changepoint localization that was alluded to in~\eqref{eq:DF-guarantee}. Throughout, $\N$ denotes the natural numbers and, for $K\in\N$, we write $[K]:=\{1,\dots,K\}$. For any set $S$, let $\mathcal{M}(S)$ denote the collection of probability measures on $S$, and let $2^{S}$ denote its power set. We use $\overset{d}{=}$ to denote equality in distribution.
 
With this notation in place, consider an ordered sequence of $\mathcal{X}$-valued random variables $\bX = (X_1,\ldots,X_n)$ for some $n \in \N$. Likewise, we use $\bx = (x_1,\ldots,x_n)$ to denote a generic element of $\mathcal{X}^n$. We assume that there exists an unknown changepoint $\xi \in [n-1]$ such that  
\[
(X_1,\ldots,X_{\xi}) \sim \Pcal_{0,\xi}, 
\qquad 
(X_{\xi+1},\ldots,X_n)\sim \Pcal_{1,\xi},
\]  
where $\Pcal_{0,\xi} \in \mathcal{M}(\mathcal{X}^\xi)$ and $\Pcal_{1,\xi} \in \mathcal{M}(\mathcal{X}^{n-\xi})$ denote the pre-change and post-change distributions, respectively. Throughout, we assume that $\xi$ is a genuine distributional change, in the sense that the one-dimensional marginals induced by $\Pcal_{0,\xi}$ and $\Pcal_{1,\xi}$ are distinct.
We write the corresponding joint distribution as $\Pcal = \Pcal_{0,\xi} \times \Pcal_{1,\xi}$. In line with the distribution-free perspective, we impose no structural assumptions on $\Pcal_{0,\xi}$ or $\Pcal_{1,\xi}$ beyond the following.  

\begin{assumption}\label{assn:exchangeability}
    $\Pcal_{0,\xi}$ and $\Pcal_{1,\xi}$ are exchangeable. Specifically, for any permutations $\pi_L:[\xi]\to[\xi]$ and $\pi_R:[n]\setminus[\xi]\to[n]\setminus[\xi]$, it holds that the pre-change and post-change segments are independent, i.e., $\Pcal_{0,\xi}\independent \Pcal_{1,\xi}$ and that
   \[
   (X_1,\dots,X_\xi) \overset{d}{=} (X_{\pi_L(1)},\dots,X_{\pi_L(\xi)}), 
   \quad 
   (X_{\xi+1},\dots,X_n) \overset{d}{=} (X_{\pi_R(\xi+1)},\dots,X_{\pi_R(n)}).
   \]
\end{assumption}
In words, \Cref{assn:exchangeability} requires that the distribution of $\bX$ is invariant under arbitrary permutations of the entries to the left of $\xi$ and, independently, under permutations of those to its right. A canonical example, mentioned in the introduction, is the i.i.d. changepoint model: the \emph{pre-change} observations $(X_1,\dots,X_\xi)$ are i.i.d.\ from some $P_0$, and independently, the \emph{post-change} observations $(X_{\xi+1},\dots,X_n)$ are i.i.d.\ from some $P_1$.  

For any $t \in [n-1]$, let $\Hcal_{0,t}$ denote the hypothesis that $t$ is the true changepoint and that the distributions $\Pcal_{0,t}$ and $\Pcal_{1,t}$ satisfy \Cref{assn:exchangeability}. 
We write $\P_t$ and $\E_t$ to denote probability and expectation, respectively, under any distribution that satisfies the null hypothesis $\Hcal_{0,t}$. 
We can now formally define what it means to construct a distribution-free confidence set for the changepoint.

\begin{definition}\label{eq:DF-conf_defn}
Fix $\alpha\in (0,1)$. A mapping $\mathcal{C}_{1-\alpha}:\mathcal{X}^n \to 2^{[n-1]}$ is called a \emph{distribution-free confidence set for changepoint} at level $1-\alpha$ if $\P_{\xi}\!\left(\,\xi \in \mathcal{C}_{1-\alpha}(\bX)\,\right) \;\geq\; 1-\alpha$ whenever \Cref{assn:exchangeability} holds.
\end{definition}

\Cref{assn:exchangeability} is considerably weaker than the working assumptions underlying most existing changepoint localization methods reviewed in Section~\ref{sec:review}. Prior approaches typically rely on strong parametric models or asymptotic approximations, in contrast to the minimal nature of our assumption. While some recent methods \citep{dandapanthula2025conformal} offer distribution-free guarantees under similarly mild conditions, they generally yield more diffuse confidence sets. Our approach instead enables sharper localization while retaining finite-sample validity, making it a significant contribution in this direction. The next section formally introduces our method.

\section{Methodology}
\label{sec:conch}

\subsection{Conformal changepoint localization} \label{sec:conch_description}
This section develops a conformal framework for localizing the true changepoint.  We adapt conformal $p$-values \citep{vovk1999machine,vovk2005algorithmic} to the changepoint localization problem in an efficient manner, yielding a general framework to construct confidence sets that satisfy \eqref{eq:DF-conf_defn}. 

\IncMargin{1.2em} 
\begin{algorithm}[t]
    \caption{\conch{}: conformal changepoint localization algorithm}
    \label{alg:conch}
    \KwIn{$(X_t)_{t=1}^n$ (data), $1-\alpha$ (target coverage), $S:\mathcal{X}^n\to \R^{\,n-1}$ (CPP score)}
    \KwOut{$\mathcal{C}^{\,\mathrm{CONCH}}_{1-\alpha}$ (\conch{} confidence set at level $1-\alpha$)}

    \For{$t \in [n-1]$}{
        $\Pi_t \gets \{\pi \in \mathcal{S}_n : \textnormal{for all } i\le t,\ \pi(i)\le t\ \textnormal{ and for all } i>t,\ \pi(i)>t\}$\;
        \ForEach{$\pi \in \Pi_t$}{Evaluate $S_t(\pi(\bX))$\;}
        $p_t \gets \frac{1}{|\Pi_t|}\sum_{\pi\in \Pi_t} \One{S_t(\pi(\bX))\leq S_t(\bX)}$\;
    }
    $\mathcal{C}^{\,\mathrm{CONCH}}_{1-\alpha} \gets \{\, t \in [n-1] : p_t > \alpha \,\}$\;
    \Return{$\mathcal{C}^{\,\mathrm{CONCH}}_{1-\alpha}$}
\end{algorithm}
\DecMargin{1.2em} 
We call our proposed algorithm the CONformal CHangepoint localization (CONCH) algorithm, formally given in \Cref{alg:conch}. Our framework relies on two key components:
\begin{itemize}
    \item \textbf{ChangePoint Plausibility (CPP) score:} We call any mapping $S:\mathcal{X}^n \to \R^{n-1}$ a changepoint plausibility score. Intuitively, for each candidate index $t\in [n-1]$, $S_t=(S(\cdots))_{t}$ assigns a score to quantify the chance that $t$ is indeed a changepoint; a larger $S_t$ indicates a stronger plausibility of $t$ being a changepoint.
    \item \textbf{Split-permutation group:} For any $t \in [n-1]$, we define the reduced set of permutations
\begin{equation}\label{eqn:permutation_set}
         \Pi_t := \Bigl\{ \pi \in \mathcal{S}_n : \pi(i) \leq t \;\; \text{for all } i \leq t, \;\; \pi(i) > t \;\; \text{for all } i > t \Bigr\}.
    \end{equation}
    Any $\pi\in \Pi_t$ permutes indices to the left and right of $t$, without mixing indices across $t$. 
\end{itemize}
Note that, if $t$ is indeed the true changepoint, elements of $\Pi_t$ preserve the pre- and post-change exchangeability. The validity of our framework crucially relies on this observation. More precisely, starting from any user-specified CPP score $S$, we define a conformal $p$-value $p_t$ for each index $t \in [n-1]$ by looking at the normalized rank of the true score $S_t(\bX)$ within the set of all permuted scores, $\{S_t(\pi(\bX)):\,\pi \in \Pi_t\}$, i.e.,
\begin{equation}\label{eq:pvalue_conch}
     p_t:=\frac{1}{|\Pi_t|}\sum_{\pi\in \Pi_t} \One{S_t(\pi(\bX))\leq S_t(\bX)}.
\end{equation} 
Intuitively, under $\Hcal_{0,t}$, every permutation $\pi \in \Pi_t$ is equally likely and therefore, $p_t$ is super-uniform under the null $\Hcal_{0,t}$, a result we formally establish in Theorem~\ref{thm:coverage-conch}. For brevity, the proof is deferred to Appendix~\ref{app:proof_of_coverage}.
Finally, the changepoint confidence set is then constructed by thresholding these $p$-values at a pre-specified level $\alpha$:
    \[
    \mathcal{C}^{\,\mathrm{CONCH}}_{1-\alpha}:=\{t\in [n-1]: p_{t}>\alpha\}.
    \]

\begin{theorem}\label{thm:coverage-conch}
    For each $t \in [n-1]$, $p_t$ in \eqref{eq:pvalue_conch} is a valid $p$-value under $\Hcal_{0,t}$. In particular, for any $\alpha \in (0,1)$,
    \(\P_{\xi}\left(p_{\xi}\leq \alpha\right)\leq \alpha\).
    Consequently, $\mathcal{C}^{\mathrm{CONCH}}_{1-\alpha}$ is a distribution-free confidence set for changepoint.
\end{theorem}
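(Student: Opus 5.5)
The plan is the standard group-invariance argument for permutation $p$-values. Fix $t=\xi$ and work under $\Hcal_{0,\xi}$. The first step is to check that $\Pi_\xi$ is a finite group under composition: it is isomorphic to $\mathcal{S}_\xi\times\mathcal{S}_{n-\xi}$. The second step is to show that $\pi(\bX)\overset{d}{=}\bX$ for every $\pi\in\Pi_\xi$. This follows directly from \Cref{assn:exchangeability}. Any $\pi\in\Pi_\xi$ acts as a permutation $\pi_L$ on $[\xi]$ and a permutation $\pi_R$ on $[n]\setminus[\xi]$. The two segments are independent and each is exchangeable, so the joint law of the permuted vector equals the product of the unchanged marginal laws, which is the law of $\bX$.

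Next I would introduce an auxiliary uniformly random $\sigma\in\Pi_\xi$, drawn independently of $\bX$, and set $\bX'=\sigma(\bX)$. By the second step, $\bX'\overset{d}{=}\bX$. Because $p_\xi$ is a fixed measurable function of the data, $p_\xi(\bX)\overset{d}{=}p_\xi(\bX')$. The key structural fact is that the orbit multiset is invariant under the group: $\{\pi(\sigma(\bX)):\pi\in\Pi_\xi\}=\{\pi(\bX):\pi\in\Pi_\xi\}$, since right multiplication by $\sigma$ is a bijection of $\Pi_\xi$. I will be careful with the composition convention for $\pi(\bX)$, but either convention gives a bijection. Hence, conditional on $\bX$, the score $S_\xi(\bX')$ is a uniform draw from the multiset $\{S_\xi(\pi(\bX)):\pi\in\Pi_\xi\}$, and $p_\xi(\bX')$ is exactly the normalized rank of that draw within this multiset.

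The final step is the elementary rank lemma. Let $a_1,\dots,a_N$ be reals, possibly with ties, and let $J$ be uniform on $[N]$. Then $\P\bigl(\tfrac1N\sum_k \One{a_k\le a_J}\le\alpha\bigr)\le\alpha$. To see this, let $F(a)=\tfrac1N\#\{k:a_k\le a\}$ be the empirical distribution function. Then $F(a_J)\le\alpha$ holds exactly when $a_J$ is at most the largest value $a^*$ with $F(a^*)\le\alpha$. That event has probability $F(a^*)\le\alpha$, or probability $0$ if no such value exists. Applying the lemma conditionally on $\bX$ and then taking expectations gives $\P_\xi(p_\xi\le\alpha)\le\alpha$. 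Coverage then follows immediately, because $\xi\notin\mathcal{C}^{\mathrm{CONCH}}_{1-\alpha}$ exactly when $p_\xi\le\alpha$.

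The only delicate point is the orbit-invariance step together with the handling of ties. Ties are absorbed by the "$\le$" in the definition, which makes $p_\xi$ conservative. Independence of the two segments is what allows separate exchangeability on each side to give joint invariance under $\Pi_\xi$.
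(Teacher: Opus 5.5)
Your proposal is correct and takes essentially the same route as the paper. The paper also uses $\pi(\bX)\overset{d}{=}\bX$ for $\pi\in\Pi_t$ to write $\P_t(p_t(\bX)\le\alpha)$ as the average over $\pi\in\Pi_t$ of $\P_t(p_t(\pi(\bX))\le\alpha)$, re-indexes using the group property $\pi\circ\Pi_t=\Pi_t$, and bounds the resulting deterministic rank average by $\alpha$; the only differences are cosmetic, namely that you phrase the averaging through an auxiliary uniform $\sigma$ and prove the tie-robust rank lemma yourself, whereas the paper cites \citet[Lemma~3]{harrison2012conservative} for that last step.
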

We highlight that the \conch{} algorithm and the above coverage guarantee do not impose any restriction on the choice of CPP score, thereby providing significant flexibility for users to design their own plausibility measure. In particular, the score function may depend non-trivially on the multiset $\{X_1,\ldots,X_n\}$. 
Before proceeding further, we make a few remarks on the \conch{} framework.

\begin{remark}[Exact validity]
    Theorem~\ref{thm:coverage-conch} ensures that the \conch{} $p$-value in~\eqref{eq:pvalue_conch} is super-uniform, which implies that the resulting confidence set $\mathcal{C}_{1-\alpha}^{\conch}$ may be conservative in finite samples. Nonetheless, in our experiments (Section~\ref{sec:experiments}), we find that these $p$-values still lead to sharp localization across all empirical settings considered. Moreover, employing a randomized construction of $p_t$, as in~\eqref{eq:pvalue_conch_randomized}, yields an exactly uniform $p$-value, thereby providing exact finite-sample validity (Theorem~\ref{thm:coverage_exactly_1-alpha}) for the resulting confidence set.
\end{remark}

\begin{remark}[Time-reversal symmetry]\label{remark:symmetry}
The \conch{} confidence set is invariant under the reversal of the timeline. Let $\mathbf{Y}=(Y_1,\ldots,Y_n)$ be the reversal of $\bX=(X_1,\ldots,X_n)$, i.e., $Y_i:=X_{n-i+1}$. Localizing $\xi\in[n-1]$ given $\bX$ is then equivalent to localizing $n-\xi$ given $\mathbf{Y}$. Indeed, the permutation group $\Pi_t$ acting on $\bX$ corresponds to $\Pi_{n-t}$ acting on $\mathbf{Y}$, and with the score functions defined accordingly, the \conch{} confidence set computed from $\mathbf{Y}$ is exactly the image of the \conch{} confidence set from $\bX$ under the map $t\mapsto n-t$.
\end{remark}

\subsection{\conch-MC: randomized variant for scalability}\label{sec:conch_mc}

To compute the \conch{} $p$-value $p_t$ in \eqref{eq:pvalue_conch}, one must enumerate all permutations in $\Pi_t$ and compute the corresponding score $S_t(\pi(\bX))$ for each $\pi$. For large $n$, this may be computationally expensive. To reduce computational burden, we proceed as follows: independent of the observed data $\bX$, sample $\pi^{(1)},\ldots,\pi^{(M)} \stackrel{\text{i.i.d.}}{\sim} \text{Unif}(\Pi_t)$,
and then calculate:
\begin{equation}\label{eq:pvalue_conch_MC}
    \hat{p}_t := \frac{1 + \sum_{k=1}^M \One{ S_t(\pi^{(k)}(\bX)) \leq S_t(\bX)}}{1+M}.
\end{equation}  
This yields a \emph{randomized} confidence set $\{\, t \in [n-1] : \hat{p}_t > \alpha \,\}$.
We refer to this procedure as \conch-MC, presented formally in \Cref{alg:conch_MC} in Appendix~\ref{app:conch_variant}. Similar to \conch{}, any randomly sampled $\pi\in \Pi_t$ preserves pre-change and post-change exchangeability under $\Hcal_{0,t}$, thereby providing us with a valid $p$-value $\hat{p}_t$, as we state in Theorem~\ref{thm:coverage-conch_MC} and prove in Appendix~\ref{app:proof_of_coverage}. 

\begin{theorem}\label{thm:coverage-conch_MC}
    For any $t \in [n-1]$, $\hat{p}_t$ defined in \eqref{eq:pvalue_conch_MC} is a valid $p$-value under $\Hcal_{0,t}$. In particular, for any $\alpha \in (0,1)$,
    \(\P_{\xi}(\hat{p}_{\xi}\leq \alpha)\leq \alpha\).
    Consequently, $\mathcal{C}^{\mathrm{CONCH\text{-}MC}}_{1-\alpha}$ is a distribution-free confidence set for changepoint, i.e.,
    \[
    1-\alpha\le \P(\xi\in \mathcal{C}^{\mathrm{CONCH\text{-}MC}}_{1-\alpha}).
    \]
\end{theorem}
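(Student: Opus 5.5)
The plan is the standard Monte Carlo conformal exchangeability argument, applied to the group $\Pi_t$ with $t=\xi$. Two facts about $\Pi_t$ drive everything. First, $\Pi_t$ is a group under composition: it is isomorphic to $\mathcal{S}_t\times\mathcal{S}_{n-t}$. Second, under $\Hcal_{0,\xi}$ with $t=\xi$, \Cref{assn:exchangeability} gives $\sigma(\bX)\overset{d}{=}\bX$ for every fixed $\sigma\in\Pi_\xi$. This uses the independence of the two segments together with exchangeability within each segment.

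Set $\pi^{(0)}:=\mathrm{id}$ and let $\pi^{(1)},\dots,\pi^{(M)}$ be the i.i.d.\ uniform draws on $\Pi_\xi$, independent of $\bX$.

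\begin{enumerate}
\item Introduce an auxiliary $\sigma\sim\mathrm{Unif}(\Pi_\xi)$, independent of everything, and define $\mathbf{Z}:=\sigma^{-1}(\bX)$. The first claim is that the tuple
\[
\bigl(\pi^{(0)}(\bX),\pi^{(1)}(\bX),\dots,\pi^{(M)}(\bX)\bigr)
\]
has the same law as $\bigl(\tau_0(\mathbf{Z}),\dots,\tau_M(\mathbf{Z})\bigr)$ with $\tau_k:=\pi^{(k)}\sigma$.
\begin{itemize}
\item Because $\bX\overset{d}{=}\sigma^{-1}(\bX)$ conditionally on $\sigma$, the vector $\mathbf{Z}$ has the law of $\bX$ and is independent of $\sigma$ and of the draws.
\item Because $\Pi_\xi$ is a group, $\tau_0,\dots,\tau_M$ are i.i.d.\ uniform on $\Pi_\xi$: right multiplication by the uniform $\sigma$ makes $\tau_0=\sigma$ uniform, and conditionally on $\sigma$ the remaining $\tau_k$ are i.i.d.\ uniform.
\item Hence $\tau_0,\dots,\tau_M$ are i.i.d.\ and independent of $\mathbf{Z}$, so the whole collection is exchangeable in the index $k\in\{0,\dots,M\}$.
\end{itemize}

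\item Write $S^{(k)}:=S_\xi(\tau_k(\mathbf{Z}))$. Then $(S^{(0)},\dots,S^{(M)})$ is an exchangeable vector, and by step 1 $\hat p_\xi$ has the same law as
\[
\frac{1+\sum_{k=1}^M\One{S^{(k)}\le S^{(0)}}}{M+1}=\frac{1}{M+1}\sum_{k=0}^M\One{S^{(k)}\le S^{(0)}}.
\]
This is the normalized rank of $S^{(0)}$ among $M+1$ exchangeable values.

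\item Apply the standard lemma that the normalized rank of one entry among exchangeable values, with ties counted in its favour, is super-uniform. By exchangeability, $\P(\hat p_\xi\le\alpha)$ equals the average over $j$ of $\P\bigl(\tfrac{1}{M+1}\sum_k\One{S^{(k)}\le S^{(j)}}\le\alpha\bigr)$. Deterministically, at most $\lfloor\alpha(M+1)\rfloor$ indices $j$ can satisfy $\#\{k:S^{(k)}\le S^{(j)}\}\le\alpha(M+1)$: take the smallest values; any index with rank count $\le m$ lies among the $m$ smallest, counting ties generously. Therefore $\P(\hat p_\xi\le\alpha)\le\alpha$.
\end{enumerate}

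The coverage statement then follows at once, since $\xi\notin\mathcal{C}^{\mathrm{CONCH\text{-}MC}}_{1-\alpha}$ exactly when $\hat p_\xi\le\alpha$.

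The main obstacle is step 1. The identity $\pi^{(0)}$ is not itself random, so exchangeability of the $M+1$ scores is not immediate; the random-relabeling trick with $\sigma$ is needed, and it relies on both the group structure of $\Pi_\xi$ and the distributional invariance supplied by \Cref{assn:exchangeability}. Steps 2 and 3 are routine once step 1 is in place.
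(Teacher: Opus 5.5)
Your proof is correct and follows essentially the same route as the paper. Both introduce an extra uniform draw from the group $\Pi_\xi$, then use the group structure together with the null invariance $\sigma(\bX)\overset{d}{=}\bX$ to show that the observed score and the $M$ Monte Carlo scores form an exchangeable collection, and conclude by super-uniformity of the rank. Your explicit relabeling $\mathbf{Z}=\sigma^{-1}(\bX)$, $\tau_k=\pi^{(k)}\sigma$ and your self-contained proof of the rank lemma are, if anything, more careful than the paper's write-up: there, the relabeling by $\pi_{0,t}\circ\pi_{k,t}$ applied to $\pi_{0,t}(\bX)$ does not literally give $\pi_{k,t}(\bX)$, and needs adjusting (e.g.\ to $\pi_{k,t}\circ\pi_{0,t}^{-1}$ under a left action).
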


\subsection{\conch{}-SPLIT: sample-split variant}
\label{sec:conch_split}

Recall that, in general, the CPP score in Algorithm~\ref{alg:conch} may depend non-trivially on the full dataset. For readers familiar with conformal predictive inference, this corresponds to an adaptation of the full-conformal approach to our problem, which often can be computationally costly. We now introduce a split-conformal analogue, called \conch{}-SPLIT, in which one subset of the observations is used to learn the CPP score and the remaining observations are used for computing \conch{} $p$-values.

Formally, let $\mathcal I_1$ and $\mathcal I_2$ be a fixed, data-independent partition of $[n]$ and assume that they are non-trivial, i.e. $|\mathcal{I}_1|,|\mathcal{I}_2|\ge 2$. We allow the two subsets to have unequal sizes, but typically use an interlaced split (such as an odd and even subsequences), so that both subsequences span the full timeline. Write $\mathcal D_1=(X_i)_{i\in\mathcal I_1}$ and $\mathcal D_2=(X_i)_{i\in\mathcal I_2}$. Enumerating $\mathcal I_2=\{i_1<\cdots<i_m\}$, we further write $\boldsymbol Y=(X_{i_1},\ldots,X_{i_m})$.

Using only $\mathcal D_1$, we construct a CPP score $\widehat S=\mathcal A(\mathcal D_1):\mathcal X^m\to\mathbb R^{m-1}$. The learned score $\widehat S$ is then held fixed while running \conch{} on $\boldsymbol Y$. Specifically, for each $t\in[m-1]$, define the split-permutation group $\Pi_t^{(2)}
:=\left\{\pi\in\mathcal S_m:\pi(r)\le t\ \text{for }r\le t,\quad \pi(r)>t\ \text{for }r>t\right\}$,
and compute the \conch{} $p$-value
\begin{equation}\label{eqn:conch_split_pvalue}
    p_t^{\dagger}:=\frac{1}{|\Pi_t^{(2)}|}\sum_{\pi\in\Pi_t^{(2)}}\One{\widehat S_t(\pi(\boldsymbol Y))\le \widehat S_t(\boldsymbol Y)}.
\end{equation}
Naturally these $p$-values define a confidence set for the changepoint in $\mathcal{D}_{n,2}$.
To map the resulting confidence set back to the original timeline, write $i_0=1$ and $i_{m+1}=n$ and define $\mathcal B_t:=\{i_t,\ldots,i_{t+1}-1\}$ for $t\in[m]\cup \{0\}$. The \conch{}-SPLIT confidence set is then given by
\begin{equation}\label{eqn:conch_split_CI}
    \mathcal C^{\conch\text{-SPLIT}}_{1-\alpha}:=\mathcal B_0\cup\mathcal B_m\cup\bigcup_{\substack{t\in[m-1]\\p_t^{\dagger}>\alpha}}\mathcal B_t.
\end{equation}
The algorithm is formally given in \Cref{alg:conch-split} in Appendix~\ref{app:conch_variant}.

\begin{corollary}[Validity of \conch{}-SPLIT]
\label{thm:coverage-conch-split}
Consider the setting of Theorem~\ref{thm:coverage-conch}, and fix $\alpha\in(0,1)$. Suppose that the partition $(\mathcal I_1,\mathcal I_2)$ is chosen independently of the data $\bX$. Then, we have that
$\mathbb P_\xi\bigl(\xi\in\mathcal C^{\conch\text{-SPLIT}}_{1-\alpha}\bigr)\ge 1-\alpha$.
\end{corollary}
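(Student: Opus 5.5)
Fix the true changepoint $\xi$ and the data-independent partition $(\mathcal I_1,\mathcal I_2)$ with $\mathcal I_2=\{i_1<\cdots<i_m\}$. The plan is to reduce the statement to Theorem~\ref{thm:coverage-conch}, applied conditionally on $\mathcal D_1$ to the subsequence $\boldsymbol Y$ with the fixed score $\widehat S$. First, identify which block contains $\xi$. Since $i_0=1$, $i_{m+1}=n$, the blocks $\mathcal B_0,\dots,\mathcal B_m$ cover $[n-1]$, so there is a unique $\tau\in\{0,\dots,m\}$ with $\xi\in\mathcal B_\tau$. Concretely, $\tau=|\{r: i_r\le \xi\}|$ is the number of subsampled indices at or before $\xi$. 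If $\tau=0$ or $\tau=m$, then $\xi\in\mathcal B_0\cup\mathcal B_m\subseteq\mathcal C^{\conch\text{-SPLIT}}_{1-\alpha}$ deterministically, so coverage holds trivially. (Some care is needed with the boundary conventions $i_0=1$ and $i_{m+1}=n$ when $1\in\mathcal I_2$ or $n\in\mathcal I_2$, but the blocks still cover $[n-1]$.) It therefore suffices to treat $\tau\in[m-1]$, where $\xi\in\mathcal C^{\conch\text{-SPLIT}}_{1-\alpha}$ whenever $p^\dagger_\tau>\alpha$. So I need $\P_\xi(p^\dagger_\tau\le\alpha)\le\alpha$.

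Next, verify that under $\Hcal_{0,\xi}$ the subsequence $\boldsymbol Y$ conditionally on $\mathcal D_1$ satisfies $\Hcal_{0,\tau}$ for the length-$m$ problem. By construction, $Y_1,\dots,Y_\tau$ are indices $\le\xi$ (pre-change) and $Y_{\tau+1},\dots,Y_m$ are post-change. The key claim is the conditional invariance
\[
\pi(\boldsymbol Y)\mid\mathcal D_1 \overset{d}{=} \boldsymbol Y\mid\mathcal D_1 \quad\text{for every }\pi\in\Pi^{(2)}_\tau.
\]
To prove it, lift $\pi$ to a permutation $\sigma\in\Pi_\xi$ of $[n]$ that permutes $\{i_1,\dots,i_\tau\}$ and $\{i_{\tau+1},\dots,i_m\}$ according to $\pi$ and fixes every index in $\mathcal I_1$. \Cref{assn:exchangeability} (independence of the two segments together with exchangeability within each) gives $\sigma(\bX)\overset{d}{=}\bX$. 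Since $\sigma$ leaves $\mathcal D_1$ unchanged and acts on $\boldsymbol Y$ as $\pi$, the joint law of $(\mathcal D_1,\pi(\boldsymbol Y))$ equals that of $(\mathcal D_1,\boldsymbol Y)$, which yields the conditional statement. The independence of the partition from $\bX$ is what allows conditioning on it without affecting this argument.

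Finally, conditional on $\mathcal D_1$, the score $\widehat S=\mathcal A(\mathcal D_1)$ is a fixed CPP score on $\mathcal X^m$. The invariance above is exactly the property used in the proof of Theorem~\ref{thm:coverage-conch}: the group-invariance rank argument shows the normalized rank over $\Pi_\tau^{(2)}$ is super-uniform. This gives $\P_\xi(p^\dagger_\tau\le\alpha\mid\mathcal D_1)\le\alpha$, and taking expectations over $\mathcal D_1$ completes the proof.

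The main obstacle is this last step. Theorem~\ref{thm:coverage-conch} is stated under $\Hcal_{0,t}$, i.e.\ exchangeability of the two segments, whereas conditionally on $\mathcal D_1$ the segments of $\boldsymbol Y$ need only be invariant under $\Pi^{(2)}_\tau$ and need not be independent. I would therefore rely on the fact that its proof uses only the distributional invariance $\pi(\boldsymbol Y)\overset{d}{=}\boldsymbol Y$ for $\pi$ in the group. The argument: average the indicator over a uniformly random $\pi$ from the group, and use that $\pi\circ\pi'$ is again uniform. If that reliance is not acceptable, I would re-derive this short rank argument directly.
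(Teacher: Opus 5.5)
Your proposal is correct and follows the paper's own route: condition on $\mathcal D_1$, then apply Theorem~\ref{thm:coverage-conch} to $\boldsymbol Y$ at the induced changepoint. You also spell out the block bookkeeping and the permutation-lifting argument, which the paper's one-line proof leaves implicit. The obstacle you flag does not actually arise: the pre- and post-change segments of $\bX$ are independent, so the two segments of $\boldsymbol Y$ remain conditionally independent and conditionally exchangeable given $\mathcal D_1$. Hence $\Hcal_{0,\tau}$ holds conditionally, and Theorem~\ref{thm:coverage-conch} applies verbatim.
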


The result follows immediately by first conditioning on $\mathcal{D}_{n,1}$, and applying Theorem~\ref{thm:coverage-conch}. Compared with Algorithm~\ref{alg:conch}, \conch{}-SPLIT avoids repeatedly fitting the score across permutations. This can substantially reduce the computational cost for complex learning algorithms $\mathcal A$, at the expense of smaller sample sizes for fitting and inference. Note that for practical consideration, the \conch{} $p$-values in~\eqref{eqn:conch_split_pvalue} can be replaced by \conch-MC $p$-values.

\section{Choosing the CPP score}\label{sec:score}

\subsection{General properties of CPP score}
The \conch{} confidence sets introduced earlier remain valid for any choice of CPP score, offering substantial flexibility. However, a well-chosen score yields a narrower and more informative set. In this section, we first establish general properties of CPP scores and then derive an optimal score. Although this optimal score requires oracle knowledge, we propose practical alternatives that closely approximate it. Proofs of the results in this section are deferred to Appendix~\ref{app:proof_of_optimality}.

\begin{proposition}\label{prop:score-properties}
Fix $n\in\mathbb{N}$ and $\alpha\in(0,1)$.
\begin{enumerate}[label=(\roman*), ref=(\roman*)]
    \item \textnormal{\textbf{(Symmetry yields trivial $p$-values).}}
    Fix $t\in[n-1]$. If $S$ is $t$-symmetric, i.e., satisfies $S_t(\cdot)=S_t(\pi(\cdot))$ for all $\pi\in \Pi_t$,
    then $p$-value $p_t$ in \eqref{eq:pvalue_conch} equals $1$, and $\mathbb{P}(t\in\Ccal^{\mathrm{CONCH}}_{1-\alpha})=1$.
    \item \textbf{(Conformal data-processing inequality).}
 Let $C_1$ be the \conch{} set based on score $S$, with $p$-values ${p_{t,1}}$. For any non-decreasing $f:\mathbb{R}\to\mathbb{R}$, let $C_2$ be the corresponding set based on $f(S)$, with $p$-values ${p_{t,2}}$. Then $p_{t,1}\le p_{t,2}$ for all $t\in [n-1]$, and therefore $C_1\subseteq C_2$.

\end{enumerate}
\end{proposition}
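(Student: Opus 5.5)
Both parts are direct consequences of the definition of the \conch{} $p$-value in \eqref{eq:pvalue_conch}. The plan is to compare the indicator terms permutation by permutation and then pass to the thresholded sets.

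For part (i), fix $t$ and suppose $S_t(\pi(\bx))=S_t(\bx)$ for every $\pi\in\Pi_t$ and every $\bx\in\mathcal X^n$. Then each indicator $\One{S_t(\pi(\bX))\le S_t(\bX)}$ in \eqref{eq:pvalue_conch} equals $1$ deterministically, so $p_t=|\Pi_t|/|\Pi_t|=1$ for every realization of $\bX$. Since $\alpha<1$, we have $p_t>\alpha$ surely. Hence $t\in\Ccal^{\mathrm{CONCH}}_{1-\alpha}$ almost surely under any data distribution, not only under $\Hcal_{0,t}$. The only point to note is that $\Pi_t$ is nonempty, as it contains the identity, so the average is well defined.

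For part (ii), fix $t$ and a realization $\bx$, and let $f$ be non-decreasing. The key pointwise implication is
\[
S_t(\pi(\bx))\le S_t(\bx)\;\Longrightarrow\; f(S_t(\pi(\bx)))\le f(S_t(\bx)),
\]
which holds for every $\pi\in\Pi_t$. It gives $\One{S_t(\pi(\bx))\le S_t(\bx)}\le \One{f(S_t(\pi(\bx)))\le f(S_t(\bx))}$ termwise. Averaging over $\Pi_t$ then yields $p_{t,1}\le p_{t,2}$. Here the score $f(S)$ is understood as applying $f$ coordinatewise, so that its $t$-th coordinate is $f\circ S_t$, and the same group $\Pi_t$ is used for both scores.

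For the set inclusion, if $t\in C_1$ then $p_{t,2}\ge p_{t,1}>\alpha$, so $t\in C_2$. The inequality is not strict in general because ties can be created by $f$. For example, a constant $f$ makes $f(S)$ $t$-symmetric and recovers part (i) as a special case. There is no real obstacle here. The only care needed is to state the inclusion pointwise in $\bX$ (and in the same way for the randomized variants, with shared randomness), and to check that the direction of the monotonicity is correct: non-decreasing $f$ can only merge values into ties and never reverses an ordering, so the rank of the observed score can only increase.
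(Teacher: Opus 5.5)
Your proof is correct and is essentially the paper's argument: in (i) every indicator in \eqref{eq:pvalue_conch} equals one, and in (ii) you average the termwise implication $S_t(\pi(\bX))\le S_t(\bX)\Rightarrow f(S_t(\pi(\bX)))\le f(S_t(\bX))$ over $\Pi_t$. One caution about your parenthetical aside, which goes beyond the statement: the inclusion does carry over to \conch-MC with shared permutation draws, but it fails for the tie-randomized $\bar p_t$ of \eqref{eq:pvalue_conch_randomized}, because, e.g., a constant $f$ gives $\bar p_{t,2}=U$, which excludes $t$ whenever $U\le\alpha$ even when $\bar p_{t,1}>\alpha$.
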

Part~(i) shows that $t$-symmetric CPP scores yield trivial conch{} $p$-values, regardless of whether \(\mathcal H_{0,t}\) holds, and therefore lead to overly conservative confidence sets. Such scores should be avoided in practice. Part~(ii) shows a monotonicity property of \conch{}: applying any non-decreasing transformation to the CPP score can only enlarge the resulting set, and any \emph{strictly} increasing transformation leaves the set unchanged. These properties guide practical choices of CPP scores that yield informative confidence sets.

\subsection{Optimal CPP score function}\label{sec:optimal_score}

Now, we focus on the canonical setting, namely the i.i.d. changepoint model. Specifically, let $\Pcal_{\mathrm{IID}}$ denote the class of distributions for which there exists $\xi\in [n-1]$ such that \[\Pcal_{0,\xi}=\otimes_{t=1}^{\xi} P_0,\quad \Pcal_{1,\xi}=\otimes_{t=\xi+1}^{n} P_1,\]
where $P_0$ and $P_1$ admit densities $f_0$ and $f_1$ with respect to a common dominating measure $\nu$ on $\mathcal{X}$.

Below we establish an expression for an optimal CPP score function, assuming the knowledge of both densities $f_0$, $f_1$, and the true changepoint $\xi$. By framing the task of identifying an optimal score as a testing problem with a point null and a point alternative, we can directly apply the classical Neyman–Pearson (NP) lemma. This yields a similar optimality result tailored to the setting of distribution-free changepoint localization, which we call the \emph{second}\footnote{The first instance of such a Conformal NP Lemma appears in \citet{dandapanthula2025conformal}, which establishes an analogous NP optimality result for a conformal $p$-value-based changepoint test.} \emph{Conformal NP Lemma}. Before stating the lemma formally, we first set up some notation.

Fix any $t\in[n-1]$, and let $\mathcal{X}_{L,t}:=\{X_1,\ldots,X_t\}$ and $\mathcal{X}_{R,t}:=\{X_{t+1},\ldots,X_n\}$ denote the (unordered) left and right multisets. Now for any $r\in [n-1]$, let $\mathcal{P}_{\bX}^{[r]}=\mathcal{P}_{0,r}\times \mathcal{P}_{1,r}$
be the law of $\bX=(X_1,\ldots,X_n)$ corresponding to a changepoint at $r$ under the i.i.d.\ model class $\Pcal_{\mathrm{IID}}$. We also define $\Pcal^{[r]}_{\bX\mid \mathcal{X}_{L,t},\mathcal{X}_{R,t}}$ for the associated conditional distribution of $\bX$ given $(\mathcal{X}_{L,t},\mathcal{X}_{R,t})$, 
and write $\Hcal'_r$ to hypothesize that $\bX\mid(\mathcal{X}_{L,t},\mathcal{X}_{R,t})\sim\Pcal^{[r]}_{\bX\mid \mathcal{X}_{L,t},\mathcal{X}_{R,t}}$
for any $r\in [n-1]$. Note that $\Pcal^{[t]}_{\bX\mid \mathcal{X}_{L,t},\mathcal{X}_{R,t}}$ is a discrete uniform distribution. Now, suppose we want to test $\Hcal'_t$ using conformal $p$-values. Given a score $s:\mathcal{X}^n\to\R$ and the permutation set $\Pi_t$, we define the randomized conformal $p$-value
\begin{equation}\label{eq:conformal_pvalue}
p_t(s) =\frac{1}{|\Pi_t|}\sum_{\pi\in\Pi_t}\One{s(\pi(\bX))<s(\bX)}\ +\ U\cdot \frac{1}{|\Pi_t|}\sum_{\pi\in\Pi_t}\One{s(\pi(\bX))=s(\bX)},
\end{equation}
where $U\sim\mathrm{Unif}(0,1)$ is independent of $\bX$ and the permutations.
By Theorem~\ref{thm:coverage_exactly_1-alpha}, $p_t(s)$ is ${\rm Unif}[0,1]$ under $\Hcal_t^\prime$. Consequently, the test
$\phi_t(\bX;s)=\One{p_t(s)\leq \alpha)}$
controls type~I error at level $\alpha$ for the null $\Hcal^\prime_{t}$ with any score function $s$. We note that we use the randomized $p$-value here to utilize the entire type~I budget and maximize power with an appropriate score function.

We seek an optimal score $s^\star$ such that the corresponding test $\phi_t(\bX;s^\star)$ achieves maximum power against an alternative $\Hcal^\prime_{r}$ (with $r\neq t$). 
The second Conformal Neyman–Pearson lemma, stated below, formally establishes that the likelihood ratio $s^\star(\cdot)=\Pcal_X^{[t]}(\cdot)/\Pcal_X^{[r]}(\cdot)$ defines the optimal test.

\begin{lemma}[Second Conformal NP lemma]\label{lem:conch_NP_lemma}
    Fix $t,r\in [n-1]$ with $t\neq r$. The power, $\E_{\Hcal^\prime_{r}}[\phi_t(\bX;s)]$, is maximized by the score function
    \[
    s^\star(x_1,\ldots,x_n) :=\frac{\prod_{i\leq t}f_0(x_i)\prod_{i>t} f_1(x_i)}{\prod_{i\leq r}f_0(x_i)\prod_{i>r} f_1(x_i)}.
    \]
\end{lemma}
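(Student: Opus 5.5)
The plan is to condition on the pair of unordered multisets $(\mathcal{X}_{L,t},\mathcal{X}_{R,t})$ and reduce the problem to a classical Neyman--Pearson problem on a finite sample space. Given $(\mathcal{X}_{L,t},\mathcal{X}_{R,t})$, the vector $\bX$ takes values in the finite orbit $\{\pi(\bx):\pi\in\Pi_t\}$ of the realized data $\bx$. (If there are ties, we work with the orbit counted with multiplicity; equivalently, we condition on the unordered pair and treat the permutation as the random object.) Under $\Hcal'_t$ the conditional law is uniform on this orbit. Under $\Hcal'_r$, the conditional probability of each point $\pi(\bx)$ is proportional to the joint density $\prod_{i\le r}f_0(\pi(\bx)_i)\prod_{i>r}f_1(\pi(\bx)_i)$. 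Here we note that the $\Hcal'_t$ density $\prod_{i\le t}f_0\prod_{i>t}f_1$ is constant on the orbit, because permutations in $\Pi_t$ preserve the left and right multisets. Consequently the conditional likelihood ratio of $\Hcal'_r$ to $\Hcal'_t$ at a point $\bx'$ of the orbit is, up to a constant depending only on the orbit, $1/s^\star(\bx')$. So $s^\star$ is a strictly decreasing function of the conditional likelihood ratio of alternative to null.

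The conditional testing problem is therefore point null (uniform on the orbit) versus point alternative, and the NP lemma says the most powerful level-$\alpha$ test rejects for large alternative/null ratio, that is, for small $s^\star$. This is exactly the first step of the argument: show that $\phi_t(\bX;s^\star)$ coincides with this NP test. By \eqref{eq:conformal_pvalue}, $p_t(s^\star)$ is the randomized normalized rank of $s^\star(\bX)$ within the orbit. Hence $\{p_t(s^\star)\le\alpha\}$ rejects the points with the smallest $s^\star$ values until the conditional null mass reaches $\alpha$, and it randomizes on the boundary tie class through $U$ so that the conditional size is exactly $\alpha$. The second step is to check that this is precisely the randomized NP test with exact conditional size $\alpha$; it has the threshold form
\[
\phi=1 \text{ if } s^\star<k,\qquad \phi=\gamma \text{ if } s^\star=k,\qquad \phi=0 \text{ if } s^\star>k.
\]
The main detail is verifying that the $U$-randomization produces the right $\gamma$. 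Since $U$ is uniform and shared, rejection on the tie class $\{s^\star=k\}$ occurs with conditional probability $\gamma=(\alpha-\text{mass below})/\text{mass at }k$. This follows from a direct computation of the randomized rank, so the conditional size is exactly $\alpha$, consistent with Theorem~\ref{thm:coverage_exactly_1-alpha}.

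Third, we pass from conditional to unconditional power. Any competing score $s$ gives a test $\phi_t(\cdot;s)$ whose conditional size under the uniform law on each orbit is at most (indeed exactly) $\alpha$, again by Theorem~\ref{thm:coverage_exactly_1-alpha} applied conditionally. The NP lemma then gives conditional power under $\Hcal'_r$ of $\phi_t(\cdot;s)$ at most that of $\phi_t(\cdot;s^\star)$ on every orbit. Taking expectation over the law of $(\mathcal{X}_{L,t},\mathcal{X}_{R,t})$ under $\Hcal'_r$ yields $\E_{\Hcal'_r}[\phi_t(\bX;s)]\le\E_{\Hcal'_r}[\phi_t(\bX;s^\star)]$.

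I expect the main obstacle to be technical rather than conceptual. One issue is handling orbits with repeated values, or with densities vanishing so that $s^\star$ is $0/0$ or infinite. For this I would adopt the convention $s^\star=+\infty$ where the denominator vanishes, since these points have zero alternative mass and should never be rejected. A second issue is ensuring the randomized-rank test matches the NP form exactly on ties. I would address it by working with the permutation $\pi$ as the conditional random variable, which makes the null law exactly uniform on $\Pi_t$ even when data values coincide.
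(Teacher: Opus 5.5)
Your proposal is correct and is essentially the paper's proof. Conditioning on $(\mathcal{X}_{L,t},\mathcal{X}_{R,t})$ makes the $\Pcal^{[t]}$ density constant on the $\Pi_t$-orbit, so the conditional alternative-to-null likelihood ratio is proportional to $1/s^\star$; the Neyman--Pearson lemma, combined with the exact conditional size $\alpha$ of every $\phi_t(\cdot;s)$ (Theorem~\ref{thm:coverage_exactly_1-alpha}), then reduces the claim to checking that $\phi_t(\cdot;s^\star)$ has the threshold form, and your explicit computation of the boundary randomization $\gamma$ and the final averaging over the multisets spell this out in more detail than the paper does.

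One caveat, which the paper shares, is that your $+\infty$ convention only covers points where the denominator of $s^\star$ vanishes, not orbits on which the numerator vanishes identically. Such orbits have probability zero under $\Pcal^{[t]}$ but can have positive probability under $\Pcal^{[r]}$ when $f_0$ and $f_1$ have different supports, and there $s^\star\in\{0,+\infty\}$ ties together orbit points with different conditional likelihood ratios, so $\phi_t(\cdot;s^\star)$ need not have the Neyman--Pearson form. A common-support assumption is therefore tacitly needed for the lemma as stated.
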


Finally, the Conformal NP lemma can be leveraged within the \conch{} framework to derive the CPP score that would yield the narrowest confidence set. We observe that the conformal $p$-value in \eqref{eq:pvalue_conch} must be valid under $\Hcal_{0,t}$, while being sufficiently small to sharply detect the true changepoint $\xi\neq t$ under $\Hcal_{0,\xi}$. Since only the $t$-th component of CPP score, $S_t$, determines $p_t$, the task of optimizing $S_t$ boils down to finding the optimal test for $\Hcal^\prime_{t}~\text{vs.}~\Hcal^\prime_{\xi}$. 

We make this connection precise in the theorem below. For notational convenience, we let $\bar{C}_{1-\alpha}^{\conch}(S)$ denote the randomized \conch{} set \eqref{eqn:conch_set_randomized} constructed with the CPP score $S$.
\begin{theorem}\label{thm:optimal_score}
    Any strictly increasing transformation of the CPP score \(S^\textnormal{OPT}\) defined by
    \begin{equation}\label{eq:optimal_CPP_score}
       S_t^\textnormal{OPT}(x_1,\ldots,x_n)
       = \frac{\prod_{i\le t} f_0(x_i)\prod_{i> t} f_1(x_i)}{\prod_{i\le \xi} f_0(x_i)\prod_{i> \xi} f_1(x_i)}
    \end{equation}
    achieves the minimum expected length of the \conch{} confidence set. In particular, for any $\xi\in[n-1]$ and score \(S:\mathcal{X}^n\to\R^{n-1}\),
    \( \E_{\Hcal_{0,\xi}\,\cap\,\Pcal_{\textnormal{IID}}}\!\big[\,|\bar{C}_{1-\alpha}^{\conch}(S)|\,\big]
    \;\ge\;
    \E_{\Hcal_{0,\xi}\,\cap\,\Pcal_{\textnormal{IID}}}\!\big[\,|\bar{C}_{1-\alpha}^{\conch}(S^{\textnormal{OPT}})|\,\big].
    \)
\end{theorem}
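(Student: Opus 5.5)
The plan is to decompose the expected length into a sum of per-index coverage probabilities and optimize each summand separately using Lemma~\ref{lem:conch_NP_lemma}. By linearity of expectation,
\[
\E_{\Hcal_{0,\xi}\cap\Pcal_{\mathrm{IID}}}\big[|\bar C^{\conch}_{1-\alpha}(S)|\big]=\sum_{t\in[n-1]}\P_{\Hcal_{0,\xi}\cap\Pcal_{\mathrm{IID}}}\big(\bar p_t(S_t)>\alpha\big),
\]
where $\bar p_t(S_t)$ is the randomized \conch{} $p$-value built from the $t$-th coordinate of $S$. Since $\bar p_t$ depends on $S$ only through $S_t$, the coordinates can be chosen separately. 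Minimizing the sum is therefore the same as maximizing, for each $t$, the rejection probability $\P(\bar p_t(S_t)\le\alpha)$ under the true law with changepoint $\xi$.

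\textbf{The term $t=\xi$.} Here $S^{\mathrm{OPT}}_\xi\equiv1$ is constant, so the randomized $p$-value equals $U$. Hence $\P(\bar p_\xi\le\alpha)=\alpha$. For any other score, Theorem~\ref{thm:coverage_exactly_1-alpha} gives exact uniformity of $\bar p_\xi$ under $\Hcal_{0,\xi}$, so this term equals $1-\alpha$ for every $S$ and does not affect the comparison.

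\textbf{The terms $t\neq\xi$.} The main step is to pass from the unconditional law to the conditional testing problem of Lemma~\ref{lem:conch_NP_lemma}. I would condition on the multisets $(\mathcal X_{L,t},\mathcal X_{R,t})$. Given these, the randomized $p$-value is a function of the orbit position of $\bX$ within $\{\pi(\bX):\pi\in\Pi_t\}$ and of $U$. The conditional law of $\bX$ under the truth is exactly $\Pcal^{[\xi]}_{\bX\mid\mathcal X_{L,t},\mathcal X_{R,t}}$, i.e.\ the alternative $\Hcal'_\xi$. The reference distribution is the uniform law $\Hcal'_t$, and $\bar p_t$ is exactly the conformal $p$-value in \eqref{eq:conformal_pvalue}. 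Lemma~\ref{lem:conch_NP_lemma}, with $r=\xi$, says that among all scores $s$ the conditional power $\E_{\Hcal'_\xi}[\phi_t(\bX;s)\mid\mathcal X_{L,t},\mathcal X_{R,t}]$ is maximized by $s^\star=S^{\mathrm{OPT}}_t$. The lemma applies for each fixed value of the conditioning multisets, because the score is a fixed function and is not allowed to depend on them in any way that breaks the argument. Integrating over the multisets shows that $S^{\mathrm{OPT}}_t$ maximizes the unconditional rejection probability.

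The main obstacle is a subtlety in this integration step. Lemma~\ref{lem:conch_NP_lemma} optimizes power conditionally, so the NP-optimal ordering within each orbit has to be implementable by a single global score. This holds because $S^{\mathrm{OPT}}_t$ is a fixed function of $\bx$ whose ranking within every orbit is the likelihood-ratio ranking of $\Pcal^{[t]}/\Pcal^{[\xi]}$. On each orbit $\Pcal^{[t]}$ is constant, so this ranking coincides with the conditional likelihood ratio $\Pcal^{[t]}_{\cdot\mid\cdot}/\Pcal^{[\xi]}_{\cdot\mid\cdot}$. Ties are handled by the randomization $U$, which makes the test the exact randomized NP test with size $\alpha$.

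Summing over $t$ then gives the inequality. Finally, Proposition~\ref{prop:score-properties}(ii), applied in both directions (to $f$ and to $f^{-1}$), shows that any strictly increasing transformation leaves every $p$-value unchanged. The same holds for the randomized $p$-values, since strict monotonicity preserves both the strict-inequality and the tie indicators. Hence such transformations attain the same minimal expected length.
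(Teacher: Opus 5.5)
Your proposal is correct and follows the paper's proof. Both write the expected length as a sum of per-index terms that depend only on $S_t$, condition on the left/right multisets via the tower law, and invoke Lemma~\ref{lem:conch_NP_lemma} with $r=\xi$ together with the exact conditional uniformity from Theorem~\ref{thm:coverage_exactly_1-alpha}. Your separate treatment of the $t=\xi$ term and your check that strictly increasing transformations preserve both the strict-inequality and tie indicators in the randomized $p$-value are details the paper leaves implicit; at $t=\xi$ the lemma, which requires $t\neq r$, does not apply, but every score gives $1-\alpha$.
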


The optimal CPP score function~\eqref{eq:optimal_CPP_score} depends on the unknown pre-change and post-change densities $f_0$ and $f_1$ as well as the true changepoint $\xi$, and is therefore not directly implementable in practice. In the next subsection, we propose score functions that closely mimic the optimal score, thus providing `near-optimal' performance in practice.

\subsection{Practical choices for CPP score}\label{sec:practical_scores}

\paragraph{Weighted mean difference.} If $P_0$ and $P_1$ differ only by a location shift, we may take
    \begin{equation}\label{score:weighted_mean_difference}
        S_t(x_1,\cdots,x_n) = \left\|\frac{\sum_{i=1}^t w_{t,i} x_i}{\sum_{i=1}^t w_{t,i}}-\frac{\sum_{i>t}^n w_{t,i} x_i}{\sum_{i>t}^n w_{t,i}}\right\|_1.
    \end{equation}
    The weights $\{w_{t,i}\}$ are introduced to break the $t$-wise symmetry property, and therefore to avoid trivial confidence sets. Since, under the i.i.d.\ changepoint model, observations at time points near a candidate index $t \neq \xi$ are drawn from the same distribution on both sides of $t$, assigning larger weights to observations closer to $t$ tends to produce similar weighted averages on the two sides. Consequently, such a localized weighting scheme yields a smaller CPP score $S_t$, which in turn leads to smaller $p_t$ values for $t\neq \xi$ and hence more efficient confidence sets.
    
    Reasonable choices for weights include: $w_{t,i} = 1-(|i-t|/n)$ or $w_{t,i} = \exp\bigl(-|i-t|/n\bigr)$.
   If $t\in [n-1]$ is believed to be a changepoint, the weighted means on the left and right sides should differ substantially, producing a high CPP score at $t$ as required.
   
    \paragraph{Oracle log likelihood-ratio (LLR).} Suppose $f_0$ and $f_1$ are known. Then, the denominator of the
    optimal CPP score function in~\eqref{eq:optimal_CPP_score} can be approximated by evaluating the complete likelihood at the maximum likelihood estimate (MLE) instead of the true changepoint $\xi$. This yields the CPP score given by 
    \begin{equation}\label{score:oracle_llr}
    S_t(x_1,\cdots,x_n)=\log\left(\frac{\prod_{i\leq t}f_0(x_i)\prod_{i>t} f_1(x_i)}{\prod_{i\leq \hat{\xi}(\bx)}f_0(x_i)\prod_{i>\hat{\xi}(\bx)} f_1(x_i)}\right),    
    \end{equation}
    where  $\hat{\xi}(\bx)\in\argmax_{s\in [n-1]}\log \bigl(\prod_{i\leq s}f_0(x_i)\prod_{i>s} f_1(x_i)\bigr)$
    is the MLE\footnote{Note that the MLE estimator $\hat{\xi}_{\rm OR}(\bx)$ is a function of the observed data $(x_1,\ldots,x_n)$. Thus, for each permutation $\pi$, computing the permuted score $S_t(\pi(\mathbf{X}))$ requires first computing $\hat{\xi}_{\rm OR}(\pi(\mathbf{X}))$, MLE estimate on the permuted data.} of the changepoint.
    If $t\in [n-1]$ is indeed the changepoint, then $\hat{\xi}\approx t$ and $S_t$ will be large, indicating strong plausibility for a change. Since this score closely approximates \eqref{eq:optimal_CPP_score}, it is expected to sharply localize the changepoint, as verified in experiments.

     \paragraph{Learned LLR.}
    When $f_0$ and $f_1$ are unknown, for each $t\in [n-1]$, one can plug in estimates (parametric or non-parametric) $\hat{f}_{t,0}$ and $\hat{f}_{t,1}$, 
    and instead consider the CPP score given by
    \begin{equation}\label{score:learned_llr}
    S_t(x_1,\cdots,x_n)=\log\left(\frac{\prod_{i\leq t}\hat{f}_{t,0}(x_i)\prod_{i>t} \hat{f}_{t,1}(x_i)}{\prod_{i\leq \hat{\xi}(\bx)}\hat{f}_{\hat{\xi}(\bx),0}(x_i)\prod_{i>\hat{\xi}(\bx)} \hat{f}_{\hat{\xi}(\bx),1}(x_i)}\right)
    \end{equation}
    with $\hat{\xi}(\bx)\in\argmax_{s\in [n-1]}\log \bigl(\prod_{i\leq s}\hat{f}_{s,0}(x_i)\prod_{i>s} \hat{f}_{s,1}(x_i)\bigr)$ being the corresponding MLE.

    Here, the density estimates $\hat{f}_{t,0}$ and $\hat{f}_{t,1}$ may be learned separately from the observations to the left and right of $t$. While this provides the most general formulation, more computationally efficient alternatives are often available. For instance, when an independent labeled dataset is available, the corresponding densities may be learned once on that dataset and reused across all candidate changepoints. When such labeled data are unavailable, one may instead cluster the observed data, viewed as an unordered multiset, into two groups and estimate the corresponding densities separately. Since the resulting groups are unlabeled, the two density estimates can then be assigned to $\hat{f}_{t,0}$ and $\hat{f}_{t,1}$ so as to best match the observations to the left and right of $t$.
    
    \paragraph{Classifier based LLR.} Instead of estimating the densities $f_0$ and $f_1$ directly, one can train a binary classifier $\hat{g}$ to distinguish post-change from pre-change samples (labeled $Y=1$ and $Y=0$, respectively). By Bayes' rule, we have
    $\log(f_1(x)/f_0(x))=\log \bigl(\P(Y=1|X=x)/\P(Y=0\mid X=x)\bigr)-\log (\pi_1/\pi_0)$,
    where $\pi_1$ and $\pi_0$ are class priors. If $\hat{g}$ is trained on balanced data and we write $\hat{g}(x) \in (0,1)$ to denote the predicted probability of post-change membership, then 
    \[\log \frac{f_1(x)}{f_0(x)}\approx \mathrm{logit}\, \hat{g}(x):=\log \frac{\hat{g}(x)}{1-\hat{g}(x)}.\]
    The log odds components in \eqref{score:oracle_llr} can then be approximated by the classifier logits to define a practically implementable CPP score. While the choice of classifiers does not affect the validity of our method, a well-trained classifier improves power.

    The LLR-based scores above are presented in their most general form, corresponding to the full-conformal implementation of \conch{} in Algorithm~\ref{alg:conch}. Since the estimated changepoint $\hat{\xi}$, and possibly the estimated densities or log-likelihood ratio, depend on the observed data sequence, they must generally be recomputed for each permutation, which can be computationally expensive. Alternatively, one may use \conch{}-SPLIT from Section~\ref{sec:conch_split}: the first split is used to estimate $\hat{\xi}$ and, for the learned LLR score, the underlying densities or log-likelihood ratio, after which these quantities are held fixed while \conch{} is applied to the second split.

\section{Asymptotic sharpness of \conch{}-SPLIT confidence sets}
\label{sec:sharpness_learned_llr}

We now show that, with the LLR-based scores from the previous section, \conch{} enables asymptotically sharp localization. In particular, the \conch{}-SPLIT confidence set has a vanishing relative cardinality. We consider the i.i.d.\ changepoint model: for each $n$, we observe $\mathcal D_n:=(X_{1,n},\ldots,X_{n,n})\in\mathcal X^n$ with a single changepoint at $\xi_n\in[n-1]$, such that $X_{1,n},\ldots,X_{\xi_n,n}\overset{iid}{\sim}P_0$, while $X_{\xi_n+1,n},\ldots,X_{n,n}\overset{iid}{\sim}P_1$. We assume that $P_0$ and $P_1$ admit densities $f_0$ and $f_1$, respectively, with respect to a common dominating measure $\nu$, and write $\ell(x):=\log(f_0(x)/f_1(x))$ for the corresponding log-likelihood ratio. Since we do not have access to $\ell$, we suppose that an estimator $\hat{\ell}_n$ of $\ell$ is obtained from an auxiliary dataset $\mathcal D_n^\prime$ that is independent of $\mathcal D_n$.

For notational simplicity and clarity of exposition, throughout this section we assume that $n$ is even. To run \conch{}-SPLIT, we use the fixed interlaced partition $\mathcal I_1=\{1,3,\ldots,n-1\}$ and $\mathcal I_2=\{2,4,\ldots,n\}$. Thus, $\mathcal D_{n,1}:=(X_{1,n},X_{3,n},\ldots,X_{n-1,n})$ is the training split, on which we learn the MLE, while $\mathcal D_{n,2}:=(X_{2,n},X_{4,n},\ldots,X_{n,n})$ is the calibration split used to compute the conformal $p$-values. Both splits have size $m:=n/2$.

Using the training split $\mathcal D_{n,1}$, we first compute the reference changepoint estimator
\begin{equation}
\label{eq:split_learnt_MLE}
\hat{\xi}_n
\in \argmax_{s\in[m-1]}
\sum_{i=1}^{s}\hat{\ell}_n(X_{2i-1,n}).
\end{equation}
Since the training split is half the size of the original sequence, we expect $2\hat{\xi}_n$ to be a good estimate of the true changepoint location $\xi_n$. As in Section~\ref{sec:conch_split}, enumerate the observations in the calibration split as $Y_{i,n}:=X_{2i,n}$ for $i\in[m]$. Conditional on $\mathcal D_n^\prime$ and $\mathcal D_{n,1}$, both $\hat{\ell}_n$ and $\hat{\xi}_n$ are held fixed while \conch{} is applied to $\mathcal D_{n,2}$. To implement \conch{}, we use the learned LLR CPP score
\begin{equation}
\label{score:CPP_split_learnt}
S_{t,n}(\boldsymbol Y)
=\begin{cases}
-\sum_{i=t+1}^{\hat{\xi}_n}\hat{\ell}_n(Y_{i,n}),
& t\leq \hat{\xi}_n,\\
\sum_{i=\hat{\xi}_n+1}^{t}\hat{\ell}_n(Y_{i,n}),
& t>\hat{\xi}_n.
\end{cases}
\end{equation}
This is the sample-split analogue of the learned LLR score described in~\eqref{score:learned_llr}. This formulation accommodates both estimating $f_0$ and $f_1$ separately and directly learning $\ell$, for example through a probabilistic classifier trained on an independent dataset.

Let $\mathcal C^{\conch\text{-SPLIT}}_{n,1-\alpha}$ denote the confidence set obtained by applying \conch{} to $\mathcal D_{n,2}$ with the score in~\eqref{score:CPP_split_learnt} and mapping the resulting set back to the original timeline as in~\eqref{eqn:conch_split_CI}. Under a mild consistency condition on $\hat{\ell}_n$, this confidence set is asymptotically sharp.

\begin{theorem}[Sharpness with learned LLR score]
\label{thm:conch-consistency}
Suppose that, in the above setting, there exists $\tau\in(0,1)$ such that $\xi_n/n\to\tau$ as $n\to\infty$, and that $\mathrm{Var}_{X\sim P_0}(\ell(X)),\mathrm{Var}_{X\sim P_1}(\ell(X))\in(0,\infty)$. Further, suppose that, for each $P\in\{P_0,P_1\}$,
\begin{equation}
\label{eqn:consistency_of_hat_ell_n}
\E_{\substack{\mathcal D_n^\prime,\,X\sim P,\\ X\independent\mathcal D_n^\prime}}
\left[\bigl|\hat{\ell}_n(X)-\ell(X)\bigr|^2\right]
\longrightarrow 0
\qquad\text{as }n\to\infty.
\end{equation}
Then, $\bigl|\mathcal C^{\conch\text{-SPLIT}}_{n,1-\alpha}\bigr|/(n-1)\xrightarrow{P}0$ as $n\to\infty$.
\end{theorem}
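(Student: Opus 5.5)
The plan is to show that, for every fixed $\varepsilon>0$, with probability tending to one, every calibration index $t\in[m-1]$ with $|t-\xi_n/2|>\varepsilon m$ has $p_t^\dagger\le\alpha$. Given this, only the indices $t$ with $|t-\xi_n/2|\le \varepsilon m$ can enter the union in \eqref{eqn:conch_split_CI}. Each block $\mathcal B_t$ has at most two elements under the interlaced split, and $\mathcal B_0,\mathcal B_m$ have bounded size. Hence $|\mathcal C^{\conch\text{-SPLIT}}_{n,1-\alpha}|\le 2(2\varepsilon m+3)$, and letting $\varepsilon\downarrow 0$ gives the claim. Throughout I condition on $\mathcal D_n'$, so that $\hat\ell_n$ is a fixed function independent of $\mathcal D_n$. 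Write $\mu_0:=\E_{P_0}\ell=\mathrm{KL}(P_0\|P_1)>0$ and $\mu_1:=\E_{P_1}\ell=-\mathrm{KL}(P_1\|P_0)<0$. These are finite by the variance assumption, and strictly signed because positive variance of $\ell$ forces $P_0\neq P_1$.

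\textbf{Step 1 (uniform LLN for partial sums and consistency of $\hat\xi_n$).} I will first show that
\[
\sup_{s\le m}\Bigl|\tfrac1m\textstyle\sum_{i\le s}\hat\ell_n(Z_i)-D(s/m)\Bigr|\xrightarrow{P}0
\]
for both splits $Z_i\in\{X_{2i-1,n}\}$ and $Z_i\in\{Y_{i,n}\}$. Here $D(u)=\mu_0 u$ for $u\le\tau$ and $D(u)=\mu_0\tau+\mu_1(u-\tau)$ for $u>\tau$.

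To prove it, decompose $\hat\ell_n=\ell+(\hat\ell_n-\ell)$. The error term is handled by $\tfrac1m\sum_i|\hat\ell_n-\ell|(Z_i)$, whose expectation is at most $\bigl(\E|\hat\ell_n-\ell|^2\bigr)^{1/2}\to0$ by \eqref{eqn:consistency_of_hat_ell_n}. The centered $\ell$-part on each of the two i.i.d.\ segments is controlled by Kolmogorov's maximal inequality, using finite variances. Since $D$ is strictly increasing on $[0,\tau]$ and strictly decreasing afterwards, with unique maximizer $\tau$, the argmax-continuity argument gives $\hat\xi_n/m\xrightarrow{P}\tau$.

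\textbf{Step 2 (small $p$-values far from $\hat\xi_n$).} Take $t\le \hat\xi_n$ with $t\le \xi_n/2-\varepsilon m$; the case $t>\hat\xi_n$ is symmetric by Remark~\ref{remark:symmetry}. Set $k:=\hat\xi_n-t\gtrsim\varepsilon m$.

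For $\pi$ uniform on $\Pi_t^{(2)}$, the permuted score $S_{t,n}(\pi(\boldsymbol Y))$ is minus a sum of $k$ draws without replacement from the right multiset $R_t=\{\hat\ell_n(Y_i):i>t\}$. Its conditional mean is $-k\bar R_t$, where by Step 1, uniformly in $t$,
\[
\bar R_t\approx\frac{(\tau m-t)\mu_0+(1-\tau)m\,\mu_1}{m-t}.
\]
This is strictly below $\mu_0$ by a margin $\delta(\varepsilon)>0$. The observed score is $-\sum_{i=t+1}^{\hat\xi_n}\hat\ell_n(Y_i)\approx -k\mu_0$, up to an $o_P(m)$ error uniformly, again by Step 1 and $\hat\xi_n/m\to\tau$. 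Thus the observed score lies below the permutation mean by about $k\delta$.

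Chebyshev's inequality for sampling without replacement then gives
\[
p_t^\dagger\le \frac{k\,\widehat{\mathrm{Var}}(R_t)}{(k\delta/2)^2}\lesssim\frac{\widehat{\mathrm{Var}}(R_t)}{\varepsilon m\delta^2}.
\]
The empirical variance $\widehat{\mathrm{Var}}(R_t)\le \tfrac{m}{m-t}\cdot\tfrac1m\sum_i\hat\ell_n(Y_i)^2$ is $O_P(1)$, since it is bounded in $L^1$ by the finite second moments of $\ell$ together with \eqref{eqn:consistency_of_hat_ell_n}. Therefore $\max_t p_t^\dagger\xrightarrow{P}0$ over the far indices, which is eventually below $\alpha$.

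\textbf{Main obstacle.} I expect the hard part to be uniformity in $t$ combined with the fact that $\hat\ell_n$ is only $L^2$-consistent, so that Hoeffding-type tails are unavailable. The plan above avoids needing a union bound over $t$ by reducing everything to the single uniform partial-sum statement of Step 1 and a single global bound on $\tfrac1m\sum_i\hat\ell_n(Y_i)^2$. Together these make the Chebyshev bound uniform over $t$ automatically. Some care is also needed for $t$ with $m-t$ small, where $\tfrac{m}{m-t}$ blows up. These indices can be handled by noting that when $m-t<\varepsilon m$ they are near the boundary, or by shrinking the excluded window. Dependence on $\mathcal D_{n,1}$ through $\hat\xi_n$ is harmless, since $\mathcal D_{n,1}\independent\mathcal D_{n,2}$ and the bounds hold conditionally on $\hat\xi_n$.
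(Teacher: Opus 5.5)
Your proof is correct, but it is organized differently from the paper's.

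\textbf{The paper's route.} The paper first establishes the rate $|\hat\xi_n-\xi_n'|=o_P(\sqrt n)$ (Theorem~\ref{thm:mle_general_consistency}, via the Chernoff-type negative-drift Lemma~\ref{lem:negative_drift_partial_sums} and a Kolmogorov-type control of $\hat\ell_n-\ell$). It then works index by index. For each \emph{fixed} $t$ with $|t-\xi_n'|\ge\kappa\sqrt n$, it bounds $\E[p_{t,n}\One{|\hat\xi_n-\xi_n'|\le\sqrt n}]$ using the same sampling-without-replacement Chebyshev bound you use (Lemma~\ref{lem:deterministic_bound_onn_tildep}). Chebyshev handles the oracle part of $\Delta_{t,\hat\xi_n}$ and Markov handles the learning-error part. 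Averaging over $t$ gives $\E[\frac{1}{m-1}\sum_t p_{t,n}]\to0$. Markov's inequality, in the form $\E|\mathcal C^{(2),\conch}_{m,1-\alpha}|\le\alpha^{-1}\sum_t\E[p_{t,n}]$, then converts this into the cardinality statement. Linearity of expectation is what lets the paper avoid any uniformity in $t$.

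\textbf{Your route.} You obtain uniformity directly from two ingredients: one functional LLN for the partial sums of $\hat\ell_n$ on each split, and the single global bound $\frac1m\sum_i\hat\ell_n^2(Y_i)=O_P(1)$. This needs only ratio consistency $\hat\xi_n/m\to\tau$, obtained by argmax continuity. It uses no rate for $\hat\xi_n$ and hence none of the exponential drift bounds.

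\textbf{What each buys.} Your argument proves more: $\max_{|t-\xi_n'|>\varepsilon m}p_t^\dagger=O_P(1/m)$. So with probability tending to one the whole confidence set lies in an $\varepsilon n$-window around $\xi_n$. The paper's averaged bound alone does not give this; it only shows that $o(m)$ far indices survive in expectation. The paper's route, in turn, operates at the finer $\sqrt n$ scale and reuses the machinery of its oracle $O_P(1)$ analysis.

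\textbf{A small correction.} The blow-up of $m/(m-t)$ that you flag as an obstacle does not occur. For $t\le\hat\xi_n$ the permuted multiset is the right one, of size $m-t\ge m-\hat\xi_n\approx(1-\tau)m$. For $t>\hat\xi_n$ it is the left one, of size $t>\hat\xi_n\approx\tau m$. In both cases the prefactor is bounded by roughly $1/\min\{\tau,1-\tau\}$, so near-boundary indices need no separate treatment.
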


The proof is provided in Appendix~\ref{app:proof_of_consistency}. Note that sample splitting can also be replaced by cross-fitting. That is, one may learn the score on $\mathcal D_{n,1}$ and run \conch{} on $\mathcal D_{n,2}$ at confidence level $1-\alpha/2$, then reverse the roles of the two splits and combine the two resulting confidence sets by taking an intersection. This allows every observation to contribute to both score learning and conformal $p$-value computation.

Here, the condition $\xi_n/n\to\tau\in(0,1)$ ensures that the
changepoint remains in the interior of the sequence, making the
localization problem nontrivial. The finite-variance condition is
imposed primarily for simplicity of exposition, and can be weakened to the
existence of a finite $(1+\delta)$-th moment for some $\delta>0$ with additional modifications to the proof.

We note that the assumption in~\eqref{eqn:consistency_of_hat_ell_n} can be satisfied using an additional sample-splitting scheme. Given $4n$ observations, define four equal splits of the data as $\mathcal D_{(1)}=\{X_{4k-3}:k\in[n]\}$, $\mathcal D_{(2)}=\{X_{4k-2}:k\in[n]\}$, $\mathcal D_{(3)}=\{X_{4k-1}:k\in[n]\}$, and $\mathcal D_{(4)}=\{X_{4k}:k\in[n]\}$. A natural approach is to first estimate the component densities $f_0$ and $f_1$ from $\mathcal D_{(1)}\cup\mathcal D_{(2)}$, yielding estimators $\hat f_{0,n}$ and $\hat f_{1,n}$. We may then define $\hat{\ell}_n=\log(\hat f_{0,n}/\hat f_{1,n})$, estimate $\hat{\xi}_n$ using $\mathcal D_{(3)}$, and finally run \conch{} on $\mathcal D_{(4)}$. The estimator $\hat{\ell}_n$ will satisfy~\eqref{eqn:consistency_of_hat_ell_n} if for instance, the densities $f_0$ and $f_1$ and their estimates are bounded above and bounded away from $0$, and the density estimators achieve $L_1$-consistency, namely,
\begin{equation}
\label{eq:density_l1_consistency}
\|\hat f_{0,n}-f_0\|_1,\ 
\|\hat f_{1,n}-f_1\|_1
=o_P(1).
\end{equation}

To obtain estimators $\hat f_{0,n}$ and $\hat f_{1,n}$, one may first construct a preliminary changepoint estimator $\xi_n^\dagger$ using $\mathcal D_{(1)}$ such that $|\xi_n^\dagger-\xi_n|=o_P(n)$. This is a relatively mild requirement and can be achieved using the kernel changepoint detection method \citep{garreau2018consistent} or MCP localization \citep{bhattacharyya2025theoretical}. Given such a preliminary point estimator, the observations in $\mathcal D_{(2)}$ lying to the left and right of $\xi_n^\dagger$ contain i.i.d.\ samples from $P_0$ and $P_1$, respectively, with at most an $o_P(1)$ fraction of contamination from the opposite distribution. Estimating $f_0$ and $f_1$ from these segments is therefore a problem of robust density estimation under a vanishing contamination model, a setting that is well studied \citep{chen2016general,uppal2020robust} and for which there exist estimators satisfying~\eqref{eq:density_l1_consistency}.

Note that condition~\eqref{eqn:consistency_of_hat_ell_n} only requires consistency of the estimated log-likelihood ratio. Although high dimensionality may slow the convergence of $\hat{\ell}_n$, and consequently the rate at which the confidence set shrinks, asymptotic sharpness continues to hold provided that $\hat{\ell}_n$ consistently estimates $\ell$.

When we have access to the oracle log-likelihood ration $\ell$, the result on asymptotic sharpness can be further strengthened. In fact, in Theorem~\ref{thm:conch_consistency_oracle}, we show that when $\ell$ is known, the corresponding \conch{} confidence set is stochastically bounded. In other words, the \emph{relative} size of the confidence set is $\mathrm{O}_P(1/n)$ since the original formulation of bounded width is in relation to seeing $n$ observations.

\section{Universality of the \conch{} algorithm}\label{sec:universality}
In earlier sections, we have established \conch{} as a flexible framework for constructing distribution-free confidence sets for the changepoint. One may naturally ask: is \conch{} one of many such distribution-free changepoint localization approaches that one may come up with?
In this section, we give a conclusive answer to this question, which is that \conch{} truly captures the full class of distribution-free changepoint localization methods.

\begin{theorem}\label{thm:universality_changepoint}
    Fix $\alpha\in (0,1)$. Let $C:\mathcal{X}^n \to 2^{[n-1]}$ be any procedure that maps $\bX \in \mathcal{X}^n$ to a set $C(\bX) \subseteq [n-1]$ such that
    \(\P_{\xi}\bigl(\xi\in C(\bX)\bigr)\geq 1-\alpha.
    \)
    Then, there exists a CPP score function $S:\mathcal{X}^n\to\R^{n-1}$ such that $C$ coincides exactly with the set $\mathcal{C}^{\conch}_{1-\alpha}$ constructed with the score $S$.
\end{theorem}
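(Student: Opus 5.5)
The plan is to build the score directly from $C$ as an indicator. For each $t\in[n-1]$ and $\bx\in\mathcal{X}^n$ set
\[
S_t(\bx):=\One{t\in C(\bx)}.
\]
I will then show that the \conch{} $p$-value \eqref{eq:pvalue_conch} built from this $S$ satisfies $p_t(\bx)>\alpha$ if and only if $t\in C(\bx)$, for every $\bx$ and every $t$. That gives $\mathcal{C}^{\conch}_{1-\alpha}(\bx)=C(\bx)$ identically.

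The first direction is immediate. If $t\in C(\bx)$, then $S_t(\bx)=1$ is the largest value $S_t$ can take, so every indicator in \eqref{eq:pvalue_conch} equals one. Hence $p_t=1>\alpha$.

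The second direction carries the content. Suppose $t\notin C(\bx)$, so $S_t(\bx)=0$ and
\[
p_t(\bx)=\frac{1}{|\Pi_t|}\,\bigl|\{\pi\in\Pi_t:\ t\notin C(\pi(\bx))\}\bigr|.
\]
To bound this by $\alpha$, I will use the coverage guarantee of $C$ on a specific law. Fix $\bx$ and let $\bX=\pi(\bx)$ with $\pi\sim\mathrm{Unif}(\Pi_t)$. This law satisfies \Cref{assn:exchangeability} with changepoint $t$. Composing a uniform element of $\Pi_t$ with any fixed element of $\Pi_t$ leaves it uniform, so the left block and the right block are each exchangeable. The two blocks are independent because $\Pi_t$ is the product $\mathcal{S}_t\times\mathcal{S}_{n-t}$ acting on the two blocks separately. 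Applying the coverage of $C$ under this law gives
\[
\frac{1}{|\Pi_t|}\bigl|\{\pi\in\Pi_t: t\notin C(\pi(\bx))\}\bigr| \;=\; \P\bigl(t\notin C(\bX)\bigr)\;\le\;\alpha ,
\]
so $p_t(\bx)\le\alpha$ and $t\notin\mathcal{C}^{\conch}_{1-\alpha}(\bx)$. Note that $\bx$ itself lies in the orbit, which is harmless: the count is over the whole orbit.

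The main obstacle is checking that this orbit-uniform law is really an admissible law for the coverage guarantee. The law is discrete, which is fine because no density is required. The delicate case is the paper's standing requirement that the one-dimensional marginals before and after $t$ be distinct. Under the orbit law, these marginals are the empirical measures of $\{x_1,\dots,x_t\}$ and $\{x_{t+1},\dots,x_n\}$, and they coincide when the two multisets have the same normalized composition. My first move is to read Definition~\ref{eq:DF-conf_defn} literally: it demands coverage whenever \Cref{assn:exchangeability} holds, and $\mathcal{H}_{0,t}$ is defined through that assumption alone, so the degenerate orbits are covered. If one insists on distinct marginals, the equivalence can fail on such orbits, so the theorem should be read under the literal definition. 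I would state this dependence explicitly rather than try to patch it.
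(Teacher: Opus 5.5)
Your proof is correct and is the paper's argument: the same indicator score $S_t(\bx)=\One{t\in C(\bx)}$, $p_t=1$ when $t\in C(\bx)$, and the reverse inclusion from applying the coverage of $C$ to the orbit law of $\pi(\bx)$ with $\pi\sim\mathrm{Unif}(\Pi_t)$. Your closing caveat overstates the risk. When $|\mathcal{X}|\ge 2$, put weight $\epsilon$ on a constant vector $(b,\ldots,b)$ in the right block's law, choosing $b$ so that $\delta_b$ differs from the left empirical measure; this preserves \Cref{assn:exchangeability}, makes the marginals distinct, and gives $\frac{1}{|\Pi_t|}\sum_{\pi\in\Pi_t}\One{t\in C(\pi(\bx))}\ge(1-\alpha-\epsilon)/(1-\epsilon)$ for every $\epsilon\in(0,1)$. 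So the equivalence can fail under the distinct-marginals reading only when $|\mathcal{X}|=1$, where the coverage hypothesis is vacuous.
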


Theorem~\ref{thm:universality_changepoint} may be viewed as a changepoint localization analogue of the universality result for full conformal prediction in predictive inference \citep{vovk2005algorithmic}, but requires some novel arguments.
The proof is provided in Appendix~\ref{app:proof of universality theorem}. In simple terms, this result states that a particular choice of CPP score leads to a specific instance within the universal class of valid procedures for changepoint localization. 

Moreover, this universality naturally yields an algorithm for refining existing confidence sets: \conch{} can wrap around any heuristic or model-based confidence set or existing point estimates of $\xi$, calibrating it to achieve exact distribution-free coverage while preserving the original method’s structural or modeling advantages.

\section{\conch{} as a calibration framework}
\subsection{Calibrating heuristic confidence sets}\label{sec:calibration}
\IncMargin{1.2em}
\begin{algorithm}[t]
    \caption{\conch-CAL: \conch{} calibration algorithm}
    \label{alg:conch-calib}
    \KwIn{$(X_t)_{t=1}^n$ (dataset), $t_0:\mathcal{X}^n\to [n-1]$ (point estimate), and $\mathrm{pval}:\mathcal{X}^n\to[0,1]^{\,n-1}$ ($p$-value function)}
    \KwOut{$\mathcal{C}^{\,\mathrm{CONCH\text{-}CAL}}_{1-\alpha}$ (\conch-CAL confidence set at level $1-\alpha$)}
    Define $\hat{S}:\mathcal{X}^n\to\R^{\,n-1}$ as in \eqref{score:p-value_score} \;
    \For{$t \in [n-1]$}{ 
        Compute \conch{} $p$-value $p_t$ as in \eqref{eq:pvalue_conch} with $S_t$ replaced by $\hat{S}_t$\;
    }
    $\mathcal{C}^{\,\mathrm{CONCH\text{-}CAL}}_{1-\alpha} \gets \{\, t \in [n-1] : p_t > \alpha \,\}$\;
    \Return{$\mathcal{C}^{\,\mathrm{CONCH\text{-}CAL}}_{1-\alpha}$}
\end{algorithm}
\DecMargin{1.2em} 

We now demonstrate how any confidence set for the true changepoint $\xi$ can be calibrated to attain distribution-free coverage guarantee.
Suppose we are given a non-empty confidence set $C:\mathcal{X}^n \to 2^{[n-1]}$ that may or may not be valid, even asymptotically; for instance, one produced by a Bayesian or Bootstrap method. We can construct a CPP score function based on the given set and thereby obtain a distribution-free, finite-sample valid confidence set by running \conch{}.
Two such natural constructions of CPP score are:
\begin{itemize}
    \item \textbf{Set membership score.} Define $S_t(x_1,\ldots,x_n) = \One{t \in C(x_1,\ldots,x_n)}$, which records only whether $t$ is included in the given confidence set.
    \item \textbf{Set distance score.} Define $S_t(x_1,\ldots,x_n) = (1+\min_{\ell \in C(x_1,\ldots,x_n)} |t-\ell|)^{-1}$, which refines the membership score by measuring the distance of $t$ to the nearest index included in the set.
\end{itemize}

Running the \conch{} algorithm with either score yields a valid confidence set. However, by Proposition~\ref{prop:score-properties}~$(ii)$, the set-distance score always produces a narrower set than the set-membership score. Nonetheless, both scores are fairly coarse and often lead to wide confidence sets. In particular, for indices near $0$ or $n$, these scores can induce $t$-symmetry, producing artificially inflated $p$-values in those regions (Proposition~\ref{prop:score-properties}~$(i)$). Since this behavior is undesirable in practice, we next introduce a more informative CPP score that yields sharper confidence sets.

Most existing model-based or resampling-based approaches also produce a point estimate $t_0:\mathcal{X}^n\to [n-1]$. Moreover, in many cases, they first construct a $p$-value function $\mathrm{pval}:\mathcal{X}^n\to[0,1]^{n-1}$, which is then appropriately thresholded to form the confidence set $C$. Both components ($t_0(\bx), \mathrm{pval}$) can be combined to define a more informative CPP score,
\begin{equation}\label{score:p-value_score}
\hat{S}_t(x_1,\ldots,x_n)= \frac{\mathrm{pval}(x_1,\ldots,x_n;t)}{\mathrm{pval}(x_1,\ldots,x_n;t_0(\bx))}.
\end{equation}
Applying \conch{} with this score yields what we refer to as the \conch-CAL algorithm, formally presented in \Cref{alg:conch-calib}.  We note that here the point estimate $t_0$ depends on the ordered sequence $(x_1,\ldots,x_n)$, and thus the denominator $\mathrm{pval}(\cdot,\ldots,\cdot;t_0(\bx))$ is not invariant under permutations.

By construction, this produces a valid distribution-free confidence set while retaining the original method’s assessment of the changepoint. In practice, this allows analysts to exploit the strengths of bootstrap or Bayesian methods, such as their interpretability, while simultaneously ensuring finite-sample coverage. Although one could in principle
use $\mathrm{pval}(x_1,\ldots,x_n;t)$ directly as the CPP score in \conch{}, this approach typically inherits the same conservativeness observed with set-membership and set-distance scores. See Appendix~\ref{sec:calibration_expt} for an empirical evaluation of these methods.

\subsection{Calibrating point estimates into confidence sets}
\label{sec:point_estimate_calibration}
The application of \conch{} as a wrapper is not just limited to existing heuristic confidence sets. More generally, any point estimate of the true changepoint $\xi$ can be incorporated into the \conch{} framework through an appropriate choice of CPP score from Section~\ref{sec:practical_scores}.

To illustrate this idea, we consider modifications of the weighted mean-difference score and the learned LLR score. First, for each $t\in[n-1]$, let
\[
M_{L,t}:=\frac{\sum_{i=1}^t w_{t,i}X_i}{\sum_{i=1}^t w_{t,i}}, \qquad
M_{R,t}:=\frac{\sum_{i=t+1}^n w_{t,i}X_i}{\sum_{i=t+1}^n w_{t,i}},
\]
denote the weighted means of the observations to the left and right of $t$, respectively. Given a point estimate $\hat{\xi}_0:\mathcal{X}^n\to \R^{n-1}$ of the changepoint $\xi$, we define the CPP score
\[
S_t=-\bigl(|M_{L,t}-M_{L,\hat{\xi}_0}|+|M_{R,t}-M_{R,\hat{\xi}_0}|\bigr).
\]
This ensures that candidate changepoints close to the reference estimate $\hat{\xi}_0$ yield larger CPP scores.

Next, recall the learned log-likelihood ratio CPP score from~\eqref{score:learned_llr}. Observe that this score compares the log-likelihood of the data under the hypothesis that the changepoint is at $t$ against the log-likelihood under the hypothesis that the changepoint is at $\hat{\xi}$, the maximum likelihood estimate. A natural way to calibrate an existing point estimate using \conch{} is therefore to replace this reference point $\hat{\xi}(\bX)$ by the given estimate $\hat{\xi}_0(\bX)$.

Specifically, writing \[\widehat{\mathcal L}(\bX,t):=\sum_{i\le t}\log\widehat f_{0,t}(X_i)+\sum_{i>t}\log\widehat f_{1,t}(X_i),\]
we define the CPP score $S_t(\bX)=\widehat{\mathcal L}(\bX,t)
-\widehat{\mathcal L}(\bX,\hat{\xi}_0)$,
where $\widehat f_{0,t}, \widehat f_{1,t}$ denote densities estimated separately from  observations to the left and right of $t$.

Running \conch{} with either of the above CPP scores yields a confidence set that calibrates the uncertainty associated with the given point estimate. More broadly, this provides a simple mechanism for wrapping an arbitrary changepoint estimator within the \conch{} framework and converting it into a finite-sample valid confidence set. In Appendix~\ref{app:conch_around_changepoint}, we empirically demonstrate this approach using a changepoint estimate from the KCP algorithm  \citep{arlot2019kernel}.
\section{Experiments}\label{sec:experiments}

\subsection{Detecting Gaussian mean-shift}\label{sec:gaussian_mean_shift}

We begin our empirical evaluation of \conch{} with the classical Gaussian mean-shift model. In all experiments\footnote{All code required to reproduce our experiments is available at \url{https://github.com/rohanhore/CONCH}.} reported in this paper, confidence sets are computed using the Monte--Carlo variant \conch-MC (\Cref{alg:conch_MC}) with $M=300$, which retains the same finite-sample validity guarantee as the full-permutation procedure while remaining computationally feasible in practice.

Specifically, we generate a sequence of $n=1000$ i.i.d.\ observations with a changepoint at $\xi=400$: the pre-change distribution is $\Pcal_{0,\xi} = \bigotimes_{t=1}^{\xi}\mathcal{N}(-1,1)$, while the post-change distribution is $\Pcal_{1,\xi} = \bigotimes_{t=\xi+1}^{n}\mathcal{N}(1,1)$.  
We evaluate \conch{} using four choices of CPP scores, introduced earlier in Section~\ref{sec:practical_scores}:
(a) weighted mean difference, with a specified weight function, (b) oracle log-likelihood ratio (LLR), (c) parametrically learned LLR, assuming knowledge of the Gaussian family and (d) nonparametrically learned LLR, via kernel density estimates. Here and throughout this paper, while applying \conch, we recompute the MLE for all the LLR scores for each permutation.
\begin{figure}[!h] 
\centering 
\includegraphics[width=1\linewidth]{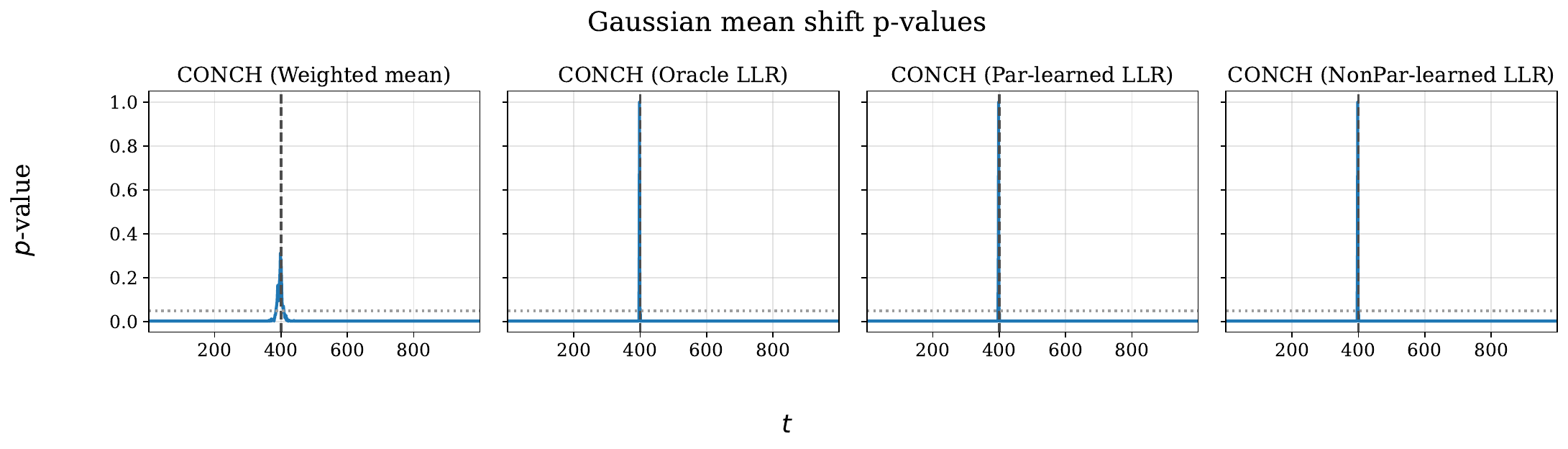} 
\caption{\conch{} $p$-value profiles for the Gaussian mean-shift experiment using four CPP scores. All scores yield valid confidence sets, with the LLR-based scores providing substantially sharper localization.}
\label{fig:gaussian_experiment} 
\end{figure}

Figure~\ref{fig:gaussian_experiment} displays the distribution of the resulting $p$-values produced by each method. \conch{} produces sharply localized confidence sets across all score choices. The weighted-mean score results in the widest interval, $[385,408]$, whereas all three LLR-based scores (oracle, parametrically learned and non-parametrically learned) yield a much narrower set $\{397,398,400\}$. 

Overall, these results highlight two key features: (i) the validity of \conch{} is preserved regardless of the choice of score, and (ii) more informative scores lead to sharper localization. 
Appendix~\ref{app:gaussian_all_pvalue_comaprison} presents an additional comparison with \citet{dandapanthula2025conformal}.

\subsection{Real data experiments}

\subsubsection{DomainNet: detecting domain shift}\label{sec:domainnet_expt}

In this experiment, we tackle the problem of detecting a domain shift using the DomainNet dataset \citep{peng2019moment}, consisting of six diverse domains (e.g., real, sketch, painting). Among these, we use the \emph{real} and \emph{sketch} domains to construct a changepoint localization setting. Further, we convert all images to grayscale to remove color cues and increase the similarity between domains. Specifically, before the changepoint ($\xi=350$), we observe samples from the real domain, and after $\xi$, we observe samples from the sketch domain, totaling $n=800$ observations. Example images are shown in the left panel of Figure~\ref{fig:domain_shift_combined}. Since DomainNet was collected via online search, class labels may not perfectly align with visual semantics, making the domain-shift localization problem more challenging.

\begin{figure}[!t]
    \centering
    \includegraphics[width=\linewidth]{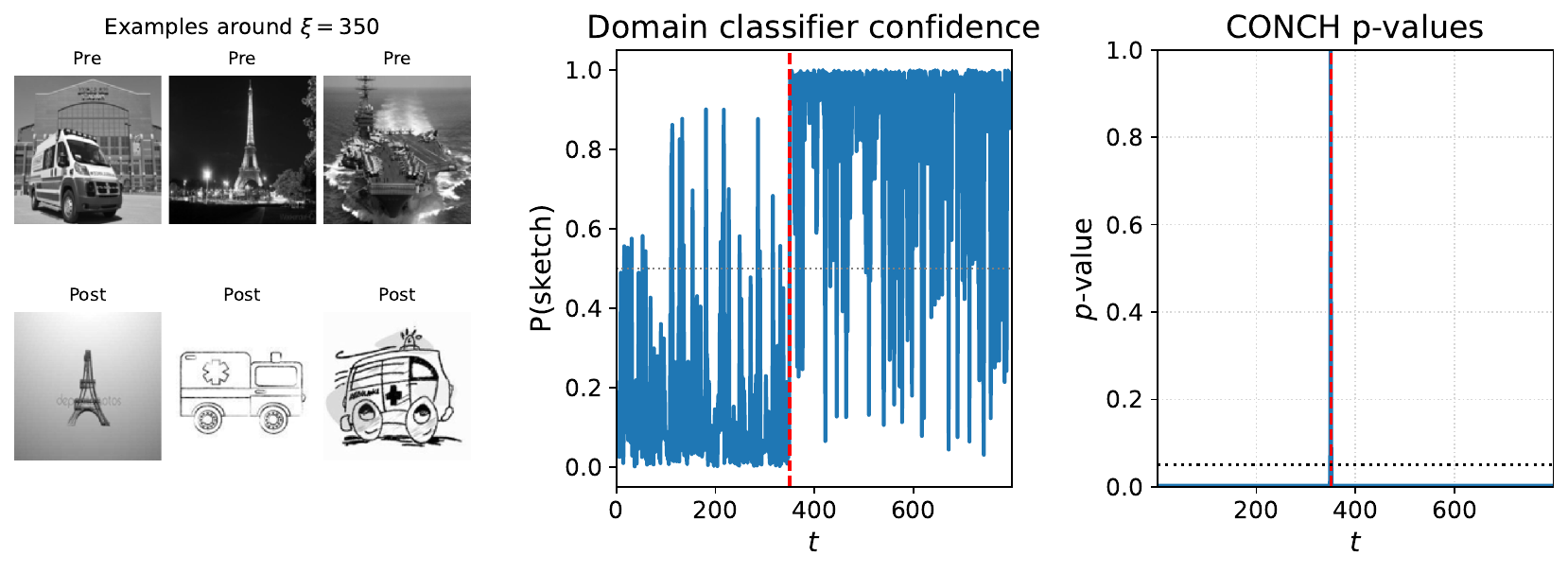}
    \caption{DomainNet domain-shift experiment. Left: example images around the changepoint $\xi=350$, with real images before $\xi$ and sketches after $\xi$. Middle: predictions from a CNN trained on an independent labeled dataset to distinguish the two domains. Right: \conch{} $p$-values obtained from the classifier logits, yielding the $95\%$ confidence set $[350,351]$.}

    \label{fig:domain_shift_combined}
\end{figure}

We first train a CNN classifier on an independent labeled dataset of $2000$ samples to distinguish real images from hand-drawn sketches. Although the classifier provides substantial discriminative information, it does not directly translate into distribution-free guarantees for changepoint localization. The \conch{} framework bridges this gap by converting classifier outputs into a principled distribution-free procedure via the use of classifier based LLR score introduced in Section~\ref{sec:practical_scores}. As shown in Figure~\ref{fig:domain_shift_combined}, the resulting \conch{} $p$-values yield a narrow $95\%$ confidence set $[350,351]$, accurately localizing the true changepoint.

\subsubsection{SST-2: detecting sentiment change using language models}\label{sec:sentiment_expt}
We next demonstrate our method on text data, showing that it can localize changepoints in language settings. Using the Stanford Sentiment Treebank (SST-2) dataset of movie reviews with binary sentiment labels \citep{socher2013recursive}, we simulate a shift from predominantly positive to predominantly negative sentiment, mirroring real-world tasks such as detecting changes in customer feedback or public opinion. We observe $n=1000$ reviews with a changepoint at $\xi=400$: before $\xi$, reviews are i.i.d. positive ($P_0$); after $\xi$, reviews are i.i.d. negative ($P_1$). For example see Figure~\ref{fig:sst2_conch_pvalues}.

First, we consider the pre-trained DistilBERT model fine-tuned for sentiment classification \citep{sanh2019distilbert}, and then the corresponding model logits are used to build a CPP score for our \conch{} method, which yields a $95\%$ confidence set $[400,401]$, effectively pinpointing the changepoint. Even under a subtler scenario, where sentiment shifts only from $60\%$ positive to $40\%$ positive, we obtain a nontrivial $95\%$ confidence set $[326,463]$, demonstrating sharp localization of the changepoint in complex settings. See Figure~\ref{fig:sst2_conch_pvalues} for a visualization of the \conch{} $p$-values in both settings.

\begin{figure}[!h]
\centering

\begin{minipage}[t]{0.33\textwidth}
\vspace{0pt}
\small

\fbox{%
\begin{minipage}{0.9\linewidth}
\textbf{Example reviews}

\vspace{0.5em}

\begin{tabular}{@{}l p{0.68\linewidth}@{}}
$t=399:$ & ``juicy writer''\\
$t=400:$ & ``intricately structured and well-realized drama''\\
$t=401:$ & ``painfully''\\
$t=402:$ & ``than most of jaglom's self-conscious and gratingly irritating films''
\end{tabular}
\end{minipage}
}
\end{minipage}
\hfill
\begin{minipage}[t]{0.65\textwidth}
\vspace{0pt}
\centering
\includegraphics[width=\linewidth]{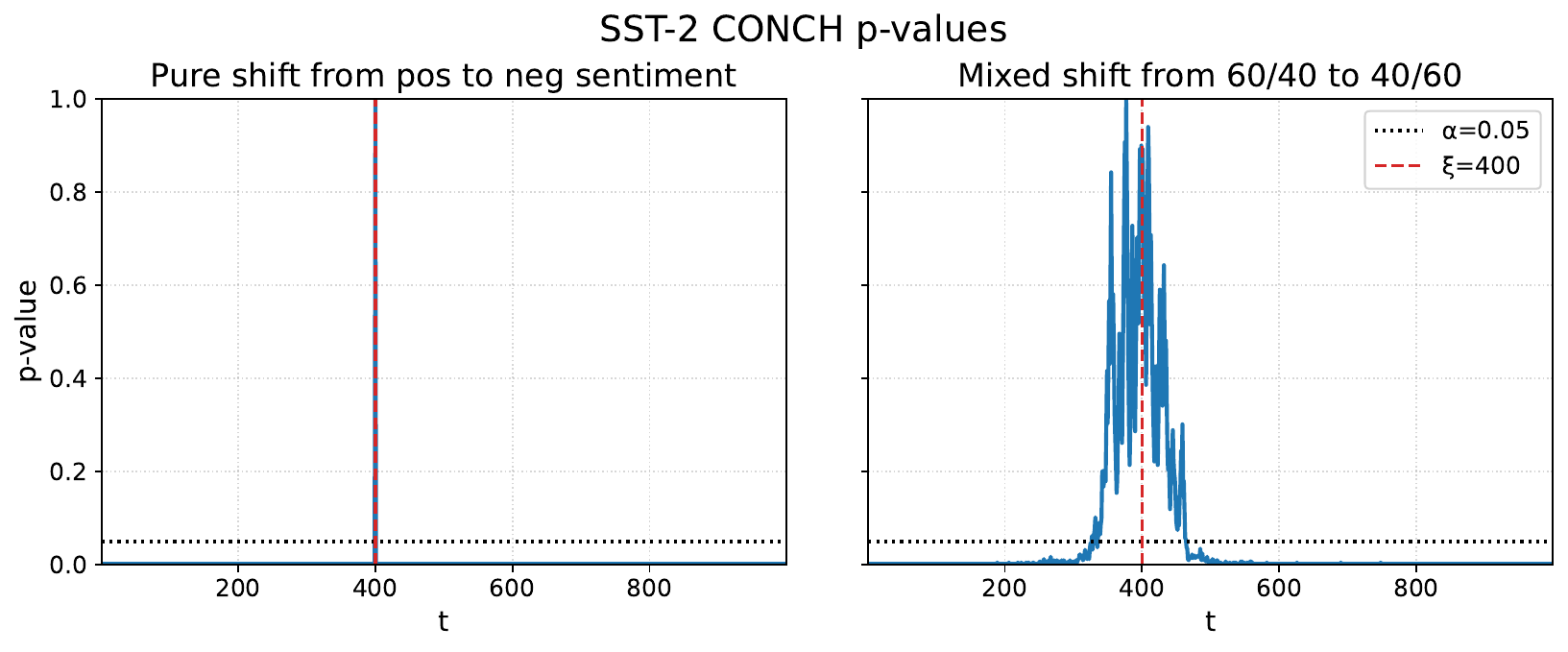}
\end{minipage}

\caption{The leftmost panel shows example reviews around the changepoint $\xi=400$: reviews before $\xi$ are positive, while reviews after $\xi$ are negative. The two panels on the right display \conch{} $p$-values for a shift from positive to negative reviews (left), yielding a $95\%$ confidence set $[400,401]$, and for a shift from $60\%$ positive to $40\%$ positive reviews (right), yielding a $95\%$ confidence set $[326,463]$.}
\label{fig:sst2_conch_pvalues}
\end{figure}

textbf{Additional experiments:}
Appendix~\ref{app:additional_expt} presents a collection of supplementary experiments. We first compare \conch{} with several well-known changepoint detection methods in terms of localization accuracy, and compare localization performance of \conch{} with \cite{dandapanthula2025conformal}. We then empirically verify the finite-sample coverage guarantees of the proposed framework and demonstrate how \conch{} can be used to calibrate existing changepoint confidence sets and point estimators. Next, we investigate several extensions of \conch{}, including multiple changepoint localization and localization under temporal dependence. Finally, we consider a two-urn example (urn-shift detection), a digit-class shift on MNIST, and a class-shift experiment on CIFAR-100, further illustrating the robustness and flexibility of the proposed framework.

\section{Conclusion}\label{sec:conclusion}
In this work, we introduced \textsc{CONCH}, a framework for distribution-free offline changepoint localization. Our approach leverages conformal $p$-values to construct confidence sets with finite-sample, distribution-free guarantees. We provided design guidelines, including principled choices of score functions and a Monte Carlo approximation to the full-permutation $p$-value, that enhance both the power and practicality of the framework. With an appropriate score function, the \conch{} confidence sets shrink with sample size, as one would expect from any useful localization procedure.

We established a universality result positioning \conch{} as a canonical method for distribution-free offline changepoint localization. This, in turn, paves the way for simple calibration procedures that can wrap around any localization algorithm to yield valid confidence sets and accurately quantify the uncertainty associated with a point estimate of the true changepoint.

In many practical applications, data often exhibit temporal dependence and may contain multiple changepoints. A promising direction for future work is therefore to develop computationally efficient extensions of \conch{} for multiple-changepoint localization, possibly by integrating ideas from wild binary segmentation \citep{fryzlewicz2014wild}. Another important direction is to extend the framework beyond independent observations and establish validity guarantees under temporal dependence, thereby substantially broadening the scope and applicability of conformal changepoint localization.

While a complete treatment of these challenges is beyond the scope of the present work, the appendices provide several initial steps in these directions. In particular, Appendix~\ref{app:extensions} studies extensions of \conch{} to multiple changepoint localization and temporally dependent data, alongside a variant that attains exact finite-sample coverage guarantees. Appendix~\ref{app:proofs} contains complete proofs of all theoretical results, together with several auxiliary results, and Appendix~\ref{app:additional_expt} provides all the supplementary experiments.
\section*{Acknowledgments}
The authors acknowledge the funding from the Sloan Fellowship.

\bibliographystyle{plainnat}
\bibliography{bibliography}

\newpage
\appendix

\tableofcontents

\section{Extensions of \conch{} algorithm}\label{app:extensions}
\IncMargin{1.2em} 
\begin{algorithm}[t]
    \caption{\conch-MC: \conch{} with random permutations}
    \label{alg:conch_MC}
    \KwIn{$(X_t)_{t=1}^n$ (dataset), $1-\alpha$ (target coverage), $M$ (number of permutations) and $ S:\mathcal{X}^n\to \R^{n-1} $ (CPP score function)}
    \KwOut{$\mathcal{C}^{\,\mathrm{CONCH\text{-}MC}}_{1-\alpha}$ (\conch-MC confidence set at level $1-\alpha$)}
    \For{$t \in [n-1]$}{ 
        $\Pi_t \gets \{\pi \in \mathcal S_n : \textnormal{for all~}i\leq t, \ \pi(i)\le t\,\, \textnormal{and for all}\, i>t,\ \pi(i)>t\}$\;
        \For{$k \in [M]$}{
            Sample $\pi^{(k)}\sim\Pi_t$\; 
            Evaluate $S_t(\pi^{(k)}(\bX))$\;}
        $\hat{p}_t \gets \frac{1}{M+1}\bigl(1+\sum_{k=1}^M \One{S_t(\pi^{(k)}(\bX))\leq S_t(\bX)}\bigr)$\;\label{line:pvalue_conch_MC}
    }
    $\mathcal{C}^{\mathrm{CONCH\text{-}MC}}_{1-\alpha} \gets \{ t \in [n-1] : \tilde{p}_t > \alpha \}$ 
    
    \Return{$\mathcal{C}^{\mathrm{CONCH\text{-}MC}}_{1-\alpha}$}
\end{algorithm}
\DecMargin{1.2em} 

\IncMargin{1.2em} 
\begin{algorithm}[t]
\caption{\conch{}-SPLIT: sample-split conformal changepoint localization}
\label{alg:conch-split}
\KwIn{$(X_i)_{i=1}^n$ (data), $1-\alpha$ (target coverage),
$\mathcal I_1,\mathcal I_2$ (data-independent partition),
and $\mathcal A$ (CPP-score learning rule)}
\KwOut{$\mathcal C^{\conch\text{-SPLIT}}_{1-\alpha}$}

Construct $\mathcal D_1$ and
$\boldsymbol Y=(X_{i_1},\ldots,X_{i_m})$ from the two splits\;
Learn $\widehat S\leftarrow\mathcal A(\mathcal D_1)$\;

\For{$t\in[m-1]$}{
    Construct $\Pi_t^{(2)}$\;
    Evaluate $\widehat S_t(\pi(\boldsymbol Y))$ for each
    $\pi\in\Pi_t^{(2)}$, keeping $\widehat S$ fixed\;
    Compute $p_t^{\dagger}$ as in \eqref{eqn:conch_split_pvalue}\;
}
Construct the sets $\mathcal B_0,\ldots,\mathcal B_m$\;
$\mathcal C^{\conch\text{-SPLIT}}_{1-\alpha}
\leftarrow
\mathcal B_0\cup\mathcal B_m
\cup\displaystyle\bigcup_{t:\,p_t^{\dagger}>\alpha}\mathcal B_t$\;

\Return{$\mathcal C^{\conch\text{-SPLIT}}_{1-\alpha}$}\;
\end{algorithm}
\DecMargin{1.2em} 

\subsection{Achieving exact validity for \conch{} confidence sets}\label{app:conch_variant}
While both $p$-values $p_t$ and $\tilde{p}_t$ in \eqref{eq:pvalue_conch} and \eqref{eq:pvalue_conch_MC} control the Type~I error under $\Hcal_{0,t}$ at level~$\alpha$, it is sometimes desirable to attain \emph{exact} level-$\alpha$ validity. Achieving exact validity can yield more powerful or sharper procedures. To this end, we introduce a randomized refinement of the $p$-values that guarantees exact validity.
Specifically, define
\begin{equation}\label{eq:pvalue_conch_randomized}
    \bar{p}_t := \frac{1}{|\Pi_t|}\sum_{\pi\in \Pi_t} \One{S_t(\pi(\bX)) < S_t(\bX)}
    + U \cdot \frac{1}{|\Pi_t|}\sum_{\pi\in \Pi_t} \One{S_t(\pi(\bX)) = S_t(\bX)},
\end{equation}
where $U \sim \textnormal{Unif}[0,1]$. Plugging these randomized $p$-values into the general \conch{} framework yields the confidence set
\begin{equation}\label{eqn:conch_set_randomized}
   \bar{\Ccal}^{\mathrm{CONCH}}_{1-\alpha} = \{\, t \in [n-1] : \bar{p}_t > \alpha \,\}, 
\end{equation}
which attains confidence exactly $1-\alpha$, as formalized below.
\begin{theorem}\label{thm:coverage_exactly_1-alpha}
    For each $t \in [n-1]$, $\bar{p}_t$ defined in \eqref{eq:pvalue_conch_randomized} is a valid $p$-value under $\Hcal_{0,t}$ conditional on the multisets $\{X_1,\ldots,X_\xi\}$ and $\{X_{\xi+1},\ldots, X_n\}$. Therefore, for any $\alpha \in (0,1)$,
    \[
        \P_{\xi}\!\left(\bar{p}_{\xi} \le \alpha\right) = \alpha,
    \]
     and consequently, $\P_{\xi}\!\left(\xi \in \bar{\Ccal}^{\mathrm{CONCH}}_{1-\alpha}\right) = 1 - \alpha$ where $\bar{\Ccal}^{\mathrm{CONCH}}_{1-\alpha}$ is as defined in \eqref{eqn:conch_set_randomized}.
\end{theorem}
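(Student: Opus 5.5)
The plan is to condition on the unordered left and right multisets and reduce the claim to a statement about a uniformly random element of the finite group $\Pi_\xi$. Fix $t=\xi$ and write $\mathcal{X}_{L}=\{X_1,\dots,X_\xi\}$ and $\mathcal{X}_R=\{X_{\xi+1},\dots,X_n\}$. Under \Cref{assn:exchangeability}, the two segments are independent and each is exchangeable. Hence, for every $\sigma\in\Pi_\xi$, we have $\sigma(\bX)\overset{d}{=}\bX$ jointly with the pair of multisets, because $\sigma$ leaves $(\mathcal{X}_L,\mathcal{X}_R)$ unchanged. It follows that if $\sigma\sim\mathrm{Unif}(\Pi_\xi)$ is drawn independently of $\bX$, then $(\bX,\sigma(\bX))\overset{d}{=}(\sigma(\bX),\bX)$; more usefully, $\bX\overset{d}{=}\sigma(\bX)$ conditional on the multisets. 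I would then use the group structure: $\Pi_\xi\cong\mathcal{S}_\xi\times\mathcal{S}_{n-\xi}$ is a group, so $\{\pi\circ\sigma:\pi\in\Pi_\xi\}=\Pi_\xi$. Consequently the multiset of scores $\{S_\xi(\pi(\bX)):\pi\in\Pi_\xi\}$ is invariant when $\bX$ is replaced by $\sigma(\bX)$. Call this multiset $\mathcal{M}$.

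The key step is to show that, conditional on $(\mathcal{X}_L,\mathcal{X}_R)$, the score $S_\xi(\bX)$ is distributed as a uniform draw from $\mathcal{M}$, counted with multiplicity over $\Pi_\xi$. To see this, note that $S_\xi(\bX)\overset{d}{=}S_\xi(\sigma(\bX))$. Conditionally on $\bX$, the quantity $S_\xi(\sigma(\bX))$ is uniform over $\{S_\xi(\pi(\bX)):\pi\in\Pi_\xi\}$ with multiplicity, and this multiset is $\mathcal{M}$ by the group invariance. A technical point is that $\bX$ is not literally determined by the multisets, because ties and the labeling of the ordering matter. I would handle this by conditioning on the orbit $\{\pi(\bX):\pi\in\Pi_\xi\}$ rather than on the multisets. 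The exchangeability assumption implies that, given the orbit, $\bX$ is uniform on the orbit in the sense of uniform $\pi$ applied to a representative. The orbit determines $\mathcal{M}$. This orbit-conditioning is the step I expect to require the most care.

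Once $S_\xi(\bX)$ is a uniform draw $V$ from the multiset $\mathcal{M}=\{v_1,\dots,v_N\}$ with $N=|\Pi_\xi|$, and since $\mathcal{M}$ is fixed given the orbit, we have
\[
\bar p_\xi=\frac{1}{N}\#\{j:v_j<V\}+\frac{U}{N}\#\{j:v_j=V\}.
\]
This is the standard randomized rank: sort the distinct values of $\mathcal{M}$, and on the event that $V$ equals a value with multiplicity $k$ and $a$ strictly smaller entries (which has probability $k/N$), $\bar p_\xi$ is uniform on $[a/N,(a+k)/N]$. These intervals tile $[0,1]$ with lengths matching their probabilities, so $\bar p_\xi\sim\mathrm{Unif}[0,1]$ conditionally. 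Therefore $\P_\xi(\bar p_\xi\le\alpha)=\alpha$ conditionally, and hence unconditionally. The coverage statement follows immediately, since $\xi\in\bar{\Ccal}^{\mathrm{CONCH}}_{1-\alpha}$ if and only if $\bar p_\xi>\alpha$, which has probability exactly $1-\alpha$.
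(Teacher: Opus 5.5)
Your proposal is correct and follows essentially the same route as the paper: condition on the two multisets, note that $S_\xi(\bX)$ has the same conditional law as $S_\xi(\pi(\bX))$ for $\pi\sim\mathrm{Unif}(\Pi_\xi)$, whose conditional distribution is the empirical distribution of your multiset $\mathcal{M}$, and conclude with the randomized probability integral transform. The differences are cosmetic: the paper cites the randomized-transform fact as a lemma from prior work, whereas you prove it directly by tiling $[0,1]$; and your orbit conditioning adds no real subtlety, since the $\Pi_\xi$-orbit of $\bX$ is exactly the set of arrangements of the two multisets and so carries the same information.
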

The proof of the result is deferred to Appendix~\ref{app:proof_of_coverage}.

\subsection{Extension: localization of multiple changepoints}
\label{sec:mult_chpt_extension}
In this section, we extend \conch{} to the setting of multiple changepoint localization. To that end, note that the permutation framework underlying \conch{} can, in principle, be extended directly to the multiple changepoint setting while retaining finite-sample validity, as in Theorems~\ref{thm:coverage-conch} and~\ref{thm:coverage-conch_MC}.

Suppose, for simplicity, that the number of changepoints $K>1$ is known. One may treat the entire tuple of changepoint locations, $\boldsymbol{\xi}:=(\xi_1,\ldots,\xi_K)$, as the parameter of interest. Let
\[
\mathcal{R}_K=\Bigl\{(t_1,\ldots,t_K): 1\le t_1<t_2<\ldots<t_K\le n-1~\text{for all}~j\in [K]\Bigr\}
\]
denote the collection of all admissible changepoint configurations. 

We extend the \conch{} framework by first generalizing the CPP score function to be a map $S:\mathcal{X}^n\to\mathbb{R}^{\mathcal{R}_K}$, where $S_{\boldsymbol{t}}$ quantifies the plausibility of $\boldsymbol{t}$ being the true changepoint configuration. Next, we generalize the permutation framework by permuting observations independently within each of the segments induced by $\boldsymbol{t}$. Note that any candidate configuration $\boldsymbol{t}=(t_1,\ldots,t_K)\in\mathcal{R}_K$ induces the segmentation $[t_0+1,t_1],[t_1+1,t_2],\ldots, [t_K+1,t_{K+1}]$ with letting $t_0=0$ and $t_{K+1}=n$.
Specifically, for each $\boldsymbol{t}\in\mathcal{R}_K$, define the split-permutation group
\[
\Pi_{\boldsymbol{t}}:=\Bigl\{\pi\in\mathcal{S}_n:\pi([t_j+1,t_{j+1}])=[t_j+1,t_{j+1}]\text{ for all } j=0,\ldots,K\Bigr\}.
\]
We may then define the conformal $p$-value for the candidate configuration $\boldsymbol{t}$ as
\[
p_{\boldsymbol{t}}:=\frac{1}{|\Pi_{\boldsymbol{t}}|}
\sum_{\pi\in\Pi_{\boldsymbol{t}}}
\One{S_{\boldsymbol{t}}(\pi(\bX))\leq S_{\boldsymbol{t}}(\bX)}.
\]
Finally, the changepoint confidence set is constructed by thresholding these $p$-values at level $\alpha$:
\[
\mathcal{C}_{1-\alpha}:=
\{\boldsymbol{t}\in\mathcal{R}_K: p_{\boldsymbol{t}}>\alpha\}.
\]
By the same exchangeability argument underlying Theorem~\ref{thm:coverage-conch}, this extension satisfies the finite-sample validity guarantee
\[
\P\bigl(\boldsymbol{\xi}\in\mathcal{C}_{1-\alpha}
\bigr)\geq 1-\alpha.
\]

The main challenge with this approach is computational. The cardinality of $\mathcal{R}_K$ grows rapidly with $n$, and each candidate tuple induces a different permutation structure. Consequently, a direct implementation quickly becomes infeasible even for moderate sample sizes.

Motivated by this observation, we instead propose a practical segmentation-based extension of \conch{}. The key idea is that, given a sufficiently consistent segmentation algorithm, the sequence can be partitioned into disjoint segments such that each segment contains, with high probability, at most one changepoint. We may then run \conch{} independently within each segment and aggregate the resulting confidence sets to obtain an overall confidence set for the multiple changepoint problem.

Formally, suppose there exist $K \in [n]$ changepoints
$0 = \xi_0 < \xi_1 < \cdots < \xi_K < n = \xi_{K+1}$,
such that for each $\ell \in \{0,1,\ldots,K\}$,
\begin{equation}\label{model:multiple_changepoint}
    (X_{\xi_{\ell}+1}, \ldots, X_{\xi_{\ell+1}}) \overset{iid}{\sim} \mathcal{P}^{(\ell)},
\end{equation}
To be consistent with \Cref{assn:exchangeability}, we assume each $\mathcal{P}^{(\ell)}$ is different and that the $(K+1)$ segments are mutually independent. Further, suppose a segmentation algorithm returns  
(a)~an estimate $\hat{K}$ of the number of changepoints, and  
(b)~an ordered sequence of estimated changepoints
$0 = \hat{\xi}_0 < \hat{\xi}_1 < \cdots < \hat{\xi}_{\hat{K}} < n = \hat{\xi}_{\hat{K}+1}$,
such that $\hat{K} \approx K$ and $\hat{\xi}_\ell \approx \xi_\ell$ for all $\ell \in [K]$.  

\IncMargin{1.2em}
\begin{algorithm}[t]
    \caption{\conch-SEG: Segmentwise \conch{} for Multiple Changepoints}
    \label{alg:conch-seg}
    \KwIn{$(X_t)_{t=1}^n$ (data); $\hat{K}$ and $0=\hat{\xi}_0<\hat{\xi}_1<\cdots<\hat{\xi}_{\hat{K}}<n=\hat{\xi}_{\hat{K}+1}$ (estimated changepoints); $S:\mathcal{X}^n\to\R^{n-1}$ (CPP score)}
    \KwOut{$\mathcal{C}^{\conch\text{-SEG}}_{1-\alpha}$ (overall confidence set at level $1-\alpha$)}
    Compute $(\tilde{X}_0,\ldots,\tilde{X}_{\hat{K}})$ as in~\eqref{eq:segmentation}\;
    Initialize $\mathcal{C} \gets \varnothing$\;
    \For{$\ell \in [\hat{K}]$}{
        $(L_\ell, R_\ell)\gets (\tilde{X}_{\ell-1},\,\tilde{X}_\ell)$\;
        Let $X^{(\ell)} \gets (X_{L_\ell}, \ldots, X_{R_\ell})$\;
        Define score $S^{(\ell)}:\mathcal{X}^{R_\ell - L_\ell + 1} \to \R^{R_\ell - L_\ell}$\;
        Compute \conch{} $p$-values $\{p_t : t \in [L_\ell,\,R_\ell{-}1]\}$ as in~\eqref{eq:pvalue_conch}, using $S^{(\ell)}$ on $X^{(\ell)}$\;
        Set $\mathcal{C}_\ell \gets \{\, t \in [L_\ell,\,R_\ell{-}1] : p_t > \alpha \,\}$\;
        Update $\mathcal{C} \gets \mathcal{C} \cup \mathcal{C}_\ell$\;
    }
    \Return{$\mathcal{C}^{\conch\text{-SEG}}_{1-\alpha} \gets \mathcal{C}$}
\end{algorithm}
\DecMargin{1.2em}

Based on these estimates, we discretize the timeline $\{1,\ldots,n\}$ into $\hat{K}$ data-dependent segments centered at the $\hat{\xi}_\ell$’s. Specifically, for $\ell\in\{0,\ldots,\hat{K}\}$, define
\begin{equation}\label{eq:segmentation}
    \tilde{X}_\ell :=
    \begin{cases}
        0, & \text{if } \ell = 0,\\[4pt]
        \big\lfloor\tfrac{1}{2}(\hat{\xi}_\ell + \hat{\xi}_{\ell+1})\big\rfloor, & \text{if } \ell \in [\hat{K}-1],\\[4pt]
        n, & \text{if } \ell = \hat{K}.
    \end{cases}
\end{equation}
We then take the $\ell$-th segment to be $[\tilde{X}_{\ell-1}+1,\,\tilde{X}_\ell]$ for $\ell\in[\hat{K}]$. Running \conch{} independently on each segment and aggregating the segmentwise sets yields our overall confidence set. The resulting procedure, denoted \conch-SEG, is summarized in \Cref{alg:conch-seg}.

Kernel-based changepoint detection (KCP) methods \citep{harchaoui2007retrospective,arlot2019kernel} provide consistent estimators of changepoints under mild conditions \citep[e.g.,][]{garreau2018consistent}. Consequently, the \conch{} framework can be seamlessly wrapped around a KCP routine to construct confidence sets in the multiple-changepoint setting.

In Appendix~\ref{app:multiple_changepoint_expt}, we consider a Gaussian mean-shift model with multiple changepoints and empirically show that \conch-SEG, when wrapped around a KCP algorithm, sharply localizes the changepoints, demonstrating that this extension is both practical and statistically powerful.

On the theory side, a direct analysis of \conch-SEG is challenging because segmentation and \conch{} are applied to the same data, potentially violating \Cref{assn:exchangeability}. That said, Appendix~\ref{app:multiple_changepoint_theory} establishes asymptotic validity of a cross-fitted variant of \conch-SEG. Specifically, cross-fitting uses disjoint folds of the data to estimate changepoints and subsequently run \conch{} within the estimated segments. This restores the required independence and exchangeability structure, yielding an asymptotic coverage guarantee.

\subsection{Changepoint localization beyond independent observations}\label{app:conch-dep}

Throughout this paper, we often restrict to the i.i.d.\ changepoint model, which assumes that the observed data sequence consists of independent observations. In many practical applications, however, such as financial time series or network traffic data, the observed data sequence exhibits temporal dependence.

That said, changepoint localization under dependence presents additional challenges. First, under arbitrary dependence, it is not immediately clear how a changepoint should be formally defined. One must first impose some structural assumptions on the data-generating process and define the notion of change relative to that model. For instance, as a natural generalization of \Cref{assn:exchangeability}, one may assume stationarity on either side of the changepoint together with some form of weak dependence.

Even under such assumptions, distribution-free finite-sample validity is generally difficult to achieve without explicitly accounting for the underlying dependence structure. Nevertheless, in many practical applications the dependence is relatively weak. Therefore, if the permutation principle underlying \conch{} is restricted to a subclass of permutations that largely preserves the local dependence structure, one may regard the resulting permuted samples as approximately exchangeable copies.

To illustrate the idea formally, in the spirit of \Cref{assn:exchangeability}, suppose that the pre-change and post-change sequence is independent, the data sequence is stationary on either side of the changepoint and satisfies an $m$-dependence assumption, that is, observations separated by more than $m$ time points are independent. With the knowledge of $m$, for each candidate changepoint $t\in[n-1]$, define the subsequence
\[
\mathcal D_t^{(m)}
=
\Bigl(
\ldots,
X_{t-2(m+1)},
X_{t-(m+1)},
X_t,
X_{t+1},
X_{t+1+(m+1)},
X_{t+1+2(m+1)},
\ldots
\Bigr),
\]
obtained by retaining only observations whose indices differ by multiples of $m+1$ from either $t$ to the left or from $t+1$ to the right.

If $t=\xi$, then under this setting, the observations in $\mathcal D_\xi^{(m)}$ are mutually independent. Further, by stationarity on either side of the changepoint, observations to the left of $\xi$ in the subsequence are i.i.d.\ from the pre-change distribution, while observations to the right are i.i.d.\ from the post-change distribution. Consequently, the subsequence $\mathcal D_\xi^{(m)}$ satisfies \Cref{assn:exchangeability}.

We therefore define the $p$-value $p_t^{(m)}$ at candidate changepoint $t$ by applying the standard \conch{} procedure to the subsequence $\mathcal D_t^{(m)}$, treating the location corresponding to $t$ in the subsequence as the candidate changepoint. We call the resulting procedure \conch-DEP, and the confidence set is then given by
\[
\mathcal C^{\conch\text{-DEP}}_{1-\alpha}
=
\{t\in[n-1]: p_t^{(m)}>\alpha\}.
\]

\begin{theorem}[Finite-sample validity of \conch-DEP]
\label{thm:conch_dep}
Suppose the observed data sequence is $m$-dependent and stationary on either side of the true changepoint $\xi$, and the pre-change sequence is independent of post-change sequence. Then, for any $\alpha\in(0,1)$,
\[
\P\!\left(\xi\in \mathcal C^{\conch\text{-DEP}}_{1-\alpha}\right)\ge 1-\alpha.
\]
\end{theorem}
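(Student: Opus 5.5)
The plan is to reduce the statement to Theorem~\ref{thm:coverage-conch}, applied to the subsequence $\mathcal D_\xi^{(m)}$ at the true changepoint. Coverage only concerns the event $\{p_\xi^{(m)}>\alpha\}$, since $\xi\in\mathcal C^{\conch\text{-DEP}}_{1-\alpha}$ if and only if $p_\xi^{(m)}>\alpha$. It therefore suffices to show $\P(p_\xi^{(m)}\le\alpha)\le\alpha$. The other candidates $t\neq\xi$ play no role.

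First, I would fix the index set of $\mathcal D_\xi^{(m)}$. On the left it is $I_L=\{\xi-k(m+1):k\ge 0,\ \xi-k(m+1)\ge 1\}$, and on the right it is $I_R=\{\xi+1+k(m+1):k\ge0,\ \xi+1+k(m+1)\le n\}$. These sets are deterministic once $\xi$ and $m$ are given; they do not depend on the data. Let $\xi'=|I_L|\ge1$ be the position of $\xi$ within the subsequence. The right side is nonempty because $\xi\le n-1$.

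Next, I would verify that $\mathcal D_\xi^{(m)}$ satisfies \Cref{assn:exchangeability} with changepoint at $\xi'$. Any two distinct indices in $I_L$ differ by a positive multiple of $m+1$, so by $m$-dependence the variables $(X_i)_{i\in I_L}$ are mutually independent. The one point needing care is that $m$-dependence yields joint (not merely pairwise) independence of such a well-separated family; I would take this as the standard definition (past and future blocks separated by more than $m$ are independent) and iterate it. Stationarity on the pre-change side gives each $X_i$ with $i\in I_L$ the same marginal law, so these variables are i.i.d. and hence exchangeable. The same argument applies to $I_R$. Finally, the pre-change and post-change sequences are independent by assumption, so the left and right blocks are independent. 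Thus the subsequence, viewed as a data vector of length $|I_L|+|I_R|$, lies under the null $\Hcal_{0,\xi'}$.

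Finally, $p_\xi^{(m)}$ is by definition the \conch{} $p$-value at position $\xi'$ computed on this subsequence with the user's score. Theorem~\ref{thm:coverage-conch} then gives $\P(p_\xi^{(m)}\le\alpha)\le\alpha$, which proves the claim. The analogous statements for Monte Carlo or randomized $p$-values follow in the same way from Theorems~\ref{thm:coverage-conch_MC} and~\ref{thm:coverage_exactly_1-alpha}. I expect no real obstacle here. The only delicate point is the bookkeeping above: the subsequence must be chosen independently of the data, and the gaps of size $m+1$ must actually give joint independence.
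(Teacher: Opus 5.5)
Your proposal is correct and follows the paper's route. Like the paper, you check that the data-independent subsequence $\mathcal D_\xi^{(m)}$ has i.i.d.\ (hence exchangeable) blocks on either side of position $|I_L|$, using $m$-dependence, stationarity and the independence of the pre- and post-change segments, and then apply Theorem~\ref{thm:coverage-conch} there. Your explicit remarks that the retained indices are fixed in advance and that $m$-dependence must give joint (block) rather than merely pairwise independence make precise two points the paper's brief argument leaves implicit.
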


The proof is immediate from the above discussion and the finite-sample validity of \conch{} established in Theorem~\ref{thm:coverage-conch}.

While under the $m$-dependent setting finite-sample validity can be attained exactly, beyond $m$-dependence the same is generally no longer attainable in a finite sample manner, and one must instead aim for approximate or asymptotic validity guarantees. While the dependence structure is rarely known in practice, observations that are sufficiently far apart often behave approximately independently under weak dependence. Consequently, one may expect the \conch-DEP procedure, with a suitably large spacing parameter $m$ (chosen based on an independent split of data), to provide a useful approximation to the idealized independent setting.

In Appendix~\ref{app:dependent_expt}, we empirically investigate the performance of \conch-DEP under dependent data and observe behavior consistent with this intuition.

That said, a systematic study of changepoint localization under general temporal dependence, together with corresponding validity guarantees, remains an interesting direction for future research.
\section{Proofs}\label{app:proofs}
In this section, we present proofs of the main results stated in the paper, along with several auxiliary results that support them.

\subsection{Proving coverage guarantees for \conch{}}\label{app:proof_of_coverage}
\subsubsection{Proof of Theorem~\ref{thm:coverage-conch}}
    First, observe that under the null $\Hcal_{0t}$, $\pi(\bX)\overset{d}{=}\bX$ for any $\pi\in \Pi_t$. We define a function $p_t:\mathcal{X}^n\to [0,1]$ by
    \[
    p_t(\bx):=\frac{1}{|\Pi_t|}\sum_{\pi\in \Pi_t} \One{S_t(\pi(\bx))\leq S_t(\bx)}.
    \]
    Further, note that $p_t\equiv p_t(\bX)$.
    Therefore,
    \begin{align*}
        \P_{t}\left(p_t(\bX)\leq \alpha\right)&=\frac{1}{|\Pi_t|}\sum_{\pi\in \Pi_t} \P_{t}\left(p_t(\pi(\bX))\leq \alpha\right)\\
        &=\E_{t}\left[\frac{1}{|\Pi_t|}\sum_{\pi\in \Pi_t}\One{p_t(\pi(\bX))\leq \alpha}\right]\\
        &=\E_{t}\left[\frac{1}{|\Pi_t|}\sum_{\pi\in \Pi_t}\One{\frac{1}{|\Pi_t|}\sum_{\pi^\prime\in \Pi_t} \One{S_t(\pi^\prime(\bX))\leq S_t(\pi(\bX))}\leq \alpha}\right]\leq \alpha,
    \end{align*}
    where the penultimate step follows by noting that $\pi\circ \Pi_t=\Pi_t$, and the last inequality follows by \citet[Lemma~3]{harrison2012conservative}. This completes the proof. $\hfill\mathsf{\square}$

    \subsubsection{Proof of Theorem~\ref{thm:coverage-conch_MC}}
    Given permutations $\pi_{1,t},\ldots,\pi_{M,t}\in \Pi_t$, we define the function
    \[
    \hat{p}_t(\bx;\pi_{1,t},\ldots,\pi_{M,t}):=\frac{1+\sum_{k=1}^M\One{s_t(\pi_{k,t}(\bx))\leq s_t(\bx)}}{1+M},
    \]
    Consider an additional uniform draw $\pi_{0,t}$ from $\Pi_t$. 
    
    Hence, note that with $\pi_{1,t},\ldots,\pi_{M,t}\overset{iid}{\sim}\textnormal{Unif}(\Pi_t)$, we have that
    \[
    (\pi_{1,t},\ldots,\pi_{M,t})\overset{d}{=}(\pi_{0,t}\circ\pi_{1,t},\ldots,\pi_{0,t}\circ\pi_{M,t}).
    \]
    Moreover, conditional on $\pi_{0,t},\pi_{1,t},\ldots,\pi_{M,t}$, $\bX\overset{d}{=}\pi_{0,t}(\bX)$ under the null $\mathcal{H}_{0,t}$. Consequently,
    \begin{multline*}
        \hat{p}_t(\bX;\pi_{1,t},\ldots,\pi_{M,t})\overset{d}{=}\hat{p}_t(\bX;\pi_{0,t}\circ\pi_{1,t},\ldots,\pi_{0,t}\circ\pi_{M,t})\overset{d}{=}\hat{p}_t(\pi_{0,t}(\bX);\pi_{0,t}\circ\pi_{1,t},\ldots,\pi_{0,t}\circ\pi_{M,t}).
    \end{multline*}
    Finally, note that for $\hat{p}_t$, defined in \eqref{eq:pvalue_conch_MC},  $\hat{p}_t\equiv \hat{p}_t(\bX;\pi_{1,t},\ldots,\pi_{M,t})$, and therefore,
    \begin{align*}
            \hat{p}_t(\bX;\pi_{1,t},\ldots,\pi_{M,t})&\overset{d}{=}\hat{p}_t(\pi_{0,t}(\bX);\pi_{0,t}\circ\pi_{1,t},\ldots,\pi_{0,t}\circ\pi_{M,t})\\
            &=\frac{1+\sum_{k=1}^M\One{s_t(\pi_{k,t}(\bX))\leq s_t(\pi_{0,t}(\bX))}}{M+1}\\
            &=\frac{\sum_{k=0}^M\One{s_t(\pi_{k,t}(\bX))\leq s_t(\pi_{0,t}(\bX))}}{M+1},
        \end{align*}
    i.e., the rank of $s_t(\pi_{0,t}(\bX))$ in the exchangeable collection $\{s_t(\pi_{0,t}(\bX)),s_t(\pi_{1,t}(\bX)),\ldots, s_t(\pi_{M,t}(\bX))\}$. Consequently,
    \[
    \P_t\left(\hat{p}_t=\hat{p}_t(\bX;\pi_{1,t},\ldots,\pi_{M,t})\leq \alpha\right)\leq \alpha.
    \]
    This proves the result. 
    $\hfill\mathsf{\square}$
    
\subsubsection{Proof of Theorem~\ref{thm:coverage_exactly_1-alpha}}
 We begin by letting $F$ denote the distribution of $S_t(\pi(\bX))$ conditional on the multisets $M_{\textnormal{left}}:=\{X_1,\ldots,X_t\}$ and $M_{\textnormal{right}}:=\{X_{t+1},\ldots,X_n\}$, where $\pi \sim \textnormal{Unif}(\Pi_t)$. 
Then
\[
\bar{p}_t = \lim_{y \uparrow S_t(\bX)} F(y) + U \bigl(F(S_t(\bX)) - \lim_{y \uparrow S_t(\bX)} F(y)\bigr).
\]
Under $\Hcal_{0,t}$, we have $S_t(\bX) \overset{d}{=} S_t(\pi(\bX))$ conditional on $M_{\textnormal{left}}$ and $M_{\textnormal{right}}$. 
Hence, by \citet[Lemma~E.1]{dandapanthula2025conformal}, the $p$-value $\bar{p}_t$, conditional on $M_{\textnormal{left}}$ and $M_{\textnormal{right}}$, follows $\textnormal{Unif}[0,1]$ (see also \citealp{brockwell2007universal}). 
Therefore,
\[
\P_t(\bar{p}_t \le \alpha) 
= \E_t\!\left[\P_t\!\left(\bar{p}_t \le \alpha \mid M_{\textnormal{left}}, M_{\textnormal{right}}\right)\right]
= \E_t[\alpha] = \alpha.
\]
This completes the proof.

\subsection{Proving properties of the CPP score and its optimal form}\label{app:proof_of_optimality}
\subsubsection{Proof of Proposition~\ref{prop:score-properties}}
The first part of the result follows immediately by noting that when $S_t$ satisfies the $t$-symmetry, then by \eqref{eq:pvalue_conch} $p_t$ is identically equal to $1$, as required.

For the second part, fix $t\in [n-1]$. By definition (see \eqref{eq:pvalue_conch}),
\[
p_{t,1}=\frac{1}{|\Pi_t|}\sum_{\pi\in \Pi_t} \One{S_t(\pi(\bX))\leq S_t(\bX)},\quad 
p_{t,2}=\frac{1}{|\Pi_t|}\sum_{\pi\in \Pi_t} \One{f(S_t(\pi(\bX)))\leq f(S_t(\bX))}.
\]
Since $f$ is non-decreasing,
\[
S_t(\pi(\bX))\leq S_t(\bX)\implies f(S_t(\pi(\bX)))\leq f(S_t(\bX)),
\]
and therefore $p_{t,1}\leq p_{t,2}$. As this holds for all $t\in [n-1]$, it further follows that $C_1\subseteq C_2$.

\subsubsection{Proof of Lemma~\ref{lem:conch_NP_lemma}: second conformal NP lemma}\label{app:proof_of_conformal_NP}
In the setup of Section~\ref{sec:optimal_score}, we consider the following hypothesis testing problem:
\[
\Hcal^\prime_{0}:\bX\mid \mathcal{X}_{L,t}, \mathcal{X}_{R,t}\sim \mathcal{P}^{[t]}_{\bX\mid \mathcal{X}_{L,t}, \mathcal{X}_{R,t}}
\quad \text{v.s.} \quad  
\Hcal^\prime_{1}:\bX\mid \mathcal{X}_{L,t}, \mathcal{X}_{R,t}\sim \mathcal{P}^{[r]}_{\bX\mid \mathcal{X}_{L,t}, \mathcal{X}_{R,t}}.
\]
Given samples $\bX\in \mathcal{X}^n$, observe that
\[
\frac{\mathsf{d}\bigl(\mathcal{P}^{[r]}_{\bX\mid \mathcal{X}_{L,t}, \mathcal{X}_{R,t}}\bigr)}{\mathsf{d}\bigl(\mathcal{P}^{[t]}_{\bX\mid \mathcal{X}_{L,t}, \mathcal{X}_{R,t}}\bigr)}(\bX)\propto \frac{\mathsf{d}(\mathcal{P}^{[r]}_{\bX})}{\mathsf{d}(\mathcal{P}^{[t]}_{\bX})}
= \frac{\prod_{i\leq r}f_0(X_i)\prod_{i>r} f_1(X_i)}
       {\prod_{i\leq t}f_0(X_i)\prod_{i>t} f_1(X_i)}
= {s^\star(\bX)}^{-1}.
\]
By the Neyman–Pearson lemma \citep[Theorem~3.2.1~(ii)]{lehmann2005testing}, any test $\phi(\bX)$ that attains exact validity at level~$\alpha$ under $\Hcal_{0}^\prime$ and satisfies
\begin{equation}\label{eqn:NP_optimal_form}
    \phi(\bX)=
\begin{cases}
    1 & \text{if } {s^\star(\bX)}^{-1} > \tau_{\alpha},\\[4pt]
    0 & \text{if } {s^\star(\bX)}^{-1} < \tau_{\alpha},
\end{cases}
\end{equation}

for an appropriate threshold $\tau_{\alpha}\in \R$, is most powerful for testing $\Hcal_{0}^\prime$ against $\Hcal_{1}^\prime$.

As discussed in Section~\ref{sec:optimal_score}, the test $\phi_t(\cdot;s)=\One{p_t(s)\leq \alpha}$ controls the Type~I error exactly at level~$\alpha$ under $\Hcal_{0}^\prime$ for any score function~$s$. Therefore, to establish the optimality of $s^\star$, it suffices to show that $\phi_t(\cdot;s^\star)$ admits the form given in \eqref{eqn:NP_optimal_form}.

Define $\bX_{\pi}=\pi(\bX)$ for $\pi\sim \text{Unif}(\Pi_t)$, and let $F_{s^\star(\bX_\pi)}$ denote the conditional cumulative distribution function of $(s^\star(\bX_{\pi}))^{-1}$ given $\bX$. 
Set
\[
\tau_{\alpha}:=\inf\{y\in \R: F_{s^\star(\bX_\pi)}(y)\geq 1-\alpha\}.
\]
By the definition of $p_t$ in \eqref{eq:conformal_pvalue}, we have
\begin{align*}
    {s^\star(\bX)}^{-1} > \tau_{\alpha} &\implies p_t(s^\star)\leq \alpha,\\
    {s^\star(\bX)}^{-1} < \tau_{\alpha} &\implies p_t(s^\star)> \alpha,
\end{align*}
as required.
This completes the proof.
$\hfill\mathsf{\square}$

\subsubsection{Proof of Theorem~\ref{thm:optimal_score}}
We observe that only the $t$-th coordinate of CPP score $S_t$ determines the CONCH $p$-value $\bar{p}_t$ defined in \eqref{eq:pvalue_conch_randomized}. Therefore, with the notation laid out in Section~\ref{sec:optimal_score}, we can write
\[
    (n-1)-\E_{\Hcal_{0,\xi}\,\cap\,\Pcal_{\textnormal{IID}}}[\bar{C}_{1-\alpha}^{\conch}(S)]=\sum_{t=1}^{n-1} \E_{\Hcal_{0,\xi}\,\cap\,\Pcal_{\textnormal{IID}}}[\One{p_t(S_t)\leq \alpha}].
\]
Thus, obtaining an expression for the optimal $S_t$ boils down to the analysis of $\E_{\Hcal_{0,\xi}\,\cap\,\Pcal_{\textnormal{IID}}}[\One{p_t(S_t)\leq \alpha}]$.
Finally, note that $\Hcal_{0,\xi}\,\cap\,\Pcal_{\textnormal{IID}}=\mathcal{P}^{[\xi]}$. Hence, by tower law, we have that 
\[
\E_{\mathcal{P}^{[\xi]}}[\One{p_t(S_t)\leq \alpha}]=\E\Bigl[\,\E\bigl[\,\One{p_t(S_t)\leq \alpha}\bigm| \{X_1,\ldots,X_t\},\{X_{t+1},\ldots,X_n\}\,\bigr]\,\Bigr]
\]
Moreover, by Theorem~\ref{thm:coverage_exactly_1-alpha}, the $p$-value $\bar{p}_t=p_t(S_t)$ is valid under $\Hcal^\prime_{t}$. Hence, applying Lemma~\ref{lem:conch_NP_lemma}, the optimal form of $S^\textnormal{OPT}_t$ follows readily.

\subsection{Proving asymptotic sharpness of the 
$\conch$-SPLIT confidence set}\label{app:consistency}
In this section, we give the proof for the asymptotic sharpness of $\conch$-SPLIT confidence sets for oracle LLR score (Theorem~\ref{thm:conch_consistency_oracle}) and for learned LLR score (Theorem~\ref{thm:conch-consistency}). 

\paragraph{Notation.} Recall that $\ell(x)=\log(f_0(x)/f_1(x))$, and hence observe that
$\mathrm{KL}(P_0\|P_1)=\E_{X\sim P_0}[\ell(X)]$ and  $\mathrm{KL}(P_1\|P_0)=\E_{X\sim P_1}[-\ell(X)],$
where $\mathrm{KL}(P\|Q)$ denotes the Kullback-Leibler divergence between distributions $P$ and $Q$. Moreover, we define 
\[
\mathrm{J}(P_0,P_1)=\mathrm{KL}(P_0\|P_1)+\mathrm{KL}(P_1\|P_0),\qquad I(P_0,P_1)= \mathrm{KL}(P_0\|P_1)\wedge \mathrm{KL}(P_1\|P_0).
\]
The corresponding var-entropy measures are given by $\sigma_0^2:=\mathrm{Var}_{X\sim P_0}(\ell(X))$ and $
\sigma_1^2:=\mathrm{Var}_{X\sim P_1}(\ell(X))$.
We let $\sigma_\star$ denote $\max\{\sigma_0,\sigma_1\}$.

Recall that the actual computation of the \conch{} $p$-values are done on $\mathcal{D}_{n,2}:=(Y_1,\ldots,Y_m)$ and $m=n/2$. We write $(p_{1,m},\ldots,p_{m-1,m})$ for the corresponding \conch{} $p$-values. Let us also write $\xi'_n:=\lfloor \xi_n/2\rfloor$ to denote the changepoint of the data sequence $\mathcal{D}_{n,2}$.

Next, given an estimate $\hat{\ell}_n$, for any $k\in [n-1]$, we define the empirical averages
\[
\hat{\mu}_{k,n,L}:=\frac{1}{k}\sum_{i=1}^{k}\hat{\ell}_n(Y_i),\qquad
\hat{\mu}_{k,n,R}:=\frac{1}{m-k}\sum_{i=k+1}^{m}-\hat{\ell}_n(Y_i),\qquad
\hat{v}_{n}:=\frac{1}{m}\sum_{i=1}^{m}{\hat{\ell}_n}^{\,2}(Y_i).
\]
These quantities in turn enable an empirical upper bound on the \conch{} $p$-values. Finally, we define the quantities
\[
\Gamma_0 := \E_{\substack{\mathcal{D}^{\prime}_n, X\sim P_0,\\ X\independent \mathcal{D}_n^{\prime}}}[|\hat{\ell}_n(X)-\ell(X)|^2],\quad \Gamma_1 := \E_{\substack{\mathcal{D}^{\prime}_n X\sim P_1,\\ X\independent \mathcal{D}^{\prime}_n,}}[|\hat{\ell}_n(X)-\ell(X)|^2].
\]

The rest of the section is organized as follows:
\begin{itemize}
    \item Intuitively, the \conch{} confidence score~\eqref{score:CPP_split_learnt} builds a confidence set around the MLE estimate $\hat{\xi}_n$. Therefore, to obtain sharp \conch{} confidence sets, the initial MLE estimates must themselves be weakly consistent. In Appendix~\ref{app:mle_consistency}, we show that indeed $\hat{\xi}_n$ computed on $\mathcal{D}_{n,1}$ is at most $\mathrm{o}_P(\sqrt{n})$ away from the true changepoint $\xi'_n$ on $\mathcal{D}_{n,2}$.
    
    \item In Appendix~\ref{app:oracle_sharpness}, we then prove that the \conch{} confidence set computed with the oracle LLR score has $\mathrm{O}_P(1)$ length. This establishes the optimality and asymptotic sharpness of the oracle LLR score.
    
    \item Next, in Appendix~\ref{app:proof_of_consistency}, we extend the asymptotic sharpness result to the setting where $\ell$ is replaced by an estimate $\hat{\ell}_n$, showing that the normalized length of the corresponding \conch{} confidence set converges to $0$.

    \item Then, in Appendix~\ref{app:local_oracle_sharpness}, we extend the asymptotic sharpness results to the setting where the distributions $P_0$ and $P_1$ vary with the sample size $n$ and progressively become less distinguishable.

    \item The asymptotic sharpness results are initially established for the \conch{}-SPLIT procedure, where the full-permutation \conch{} algorithm is applied to the even subsequence. In Appendix~\ref{app:mc_sharpness_theory}, we show that the same results continue to hold when \conch{}-MC is used instead.
    
    \item Finally, in Appendix~\ref{app:lemma_consistency}, we state and prove all auxiliary lemmas needed for the preceding results.
\end{itemize}

\subsubsection{Consistency of MLE estimates}\label{app:mle_consistency}
\begin{theorem}\label{thm:mle_oracle_consistency}
    Suppose, the oracle log-likelihood ratio $\ell$ is given, and at sample size $n$, we compute the MLE $\hat{\xi}_{n}$ as defined in~\eqref{eq:split_learnt_MLE} with $\hat{\ell}_n\equiv \ell$, and that $0<\sigma_\star<\infty$. Then, it holds that
    \[|\hat{\xi}_{n}-\xi_n^\prime|=\mathrm{O}_P(1).\]
    This further implies that 
    \[
    |2\hat{\xi}_n-\xi_n|=\mathrm{O}_P(1).
    \]
\end{theorem}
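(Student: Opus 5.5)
The plan is to reduce the claim to a statement about the argmax of a two-sided random walk with negative drift away from the true changepoint, and then control that argmax with a dyadic maximal inequality that only uses finite variances.

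\emph{Setup.} Write $Z_i:=X_{2i-1,n}$ for $i\in[m]$. The training split $(Z_1,\ldots,Z_m)$ has its own changepoint $k_n^\ast:=\lceil \xi_n/2\rceil$: $Z_1,\ldots,Z_{k_n^\ast}\overset{iid}{\sim}P_0$ and $Z_{k_n^\ast+1},\ldots,Z_m\overset{iid}{\sim}P_1$. Since $|k_n^\ast-\xi_n^\prime|\le 1$, it suffices to show $|\hat\xi_n-k_n^\ast|=\mathrm{O}_P(1)$. The second claim then follows from
\[
|2\hat\xi_n-\xi_n|\le 2|\hat\xi_n-\xi_n^\prime|+1 .
\]
Set $W_s:=\sum_{i\le s}\ell(Z_i)$, so that $\hat\xi_n\in\argmax_{s\in[m-1]}W_s$. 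Note that $\mu_0:=\mathrm{KL}(P_0\|P_1)=\E_{P_0}\ell$ and $\mu_1:=\mathrm{KL}(P_1\|P_0)=-\E_{P_1}\ell$ are finite, because $\sigma_\star<\infty$ gives finite first moments. They are also strictly positive, since $P_0\neq P_1$.

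\emph{Reduction to two one-sided random walks.} If $|\hat\xi_n-k_n^\ast|>K$, then some $s$ at distance greater than $K$ from $k_n^\ast$ satisfies $W_s\ge W_{k_n^\ast}$. This holds for any tie-breaking rule, and restricting to $[m-1]$ only removes candidates, so boundary effects can only help.
\begin{itemize}
\item For $s=k_n^\ast+j$, we have $W_s-W_{k_n^\ast}=\sum_{i=1}^{j}\ell(Z_{k_n^\ast+i})=:R_j$, a random walk with i.i.d.\ increments of mean $-\mu_1$ and variance $\sigma_1^2$.
\item For $s=k_n^\ast-j$, we have $W_s-W_{k_n^\ast}=-\sum_{i=0}^{j-1}\ell(Z_{k_n^\ast-i})=:L_j$, with increments of mean $-\mu_0$ and variance $\sigma_0^2$.
\end{itemize}
The laws of $R$ and $L$ do not depend on $n$, except through truncation at the ends of the sequence, which only shrinks the events below. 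Hence
\[
\P\bigl(|\hat\xi_n-k_n^\ast|>K\bigr)\le \P\Bigl(\sup_{j>K}R_j\ge 0\Bigr)+\P\Bigl(\sup_{j>K}L_j\ge 0\Bigr),
\]
uniformly in $n$. It remains to show that each term tends to $0$ as $K\to\infty$.

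\emph{Maximal inequality (main step).} This is the step I expect to require care, because only second moments are available. Take $R_j=\tilde R_j-\mu_1 j$ with $\tilde R$ a centered walk. On the dyadic block $j\in[2^k,2^{k+1})$, the event $R_j\ge 0$ forces $\tilde R_j\ge \mu_1 2^k$. Kolmogorov's maximal inequality then gives
\[
\P\Bigl(\max_{2^k\le j<2^{k+1}}\tilde R_j\ge \mu_1 2^k\Bigr)\le \frac{\sigma_1^2\,2^{k+1}}{\mu_1^2\,4^k}=\frac{2\sigma_1^2}{\mu_1^2}\,2^{-k}.
\]
Summing over $k\ge \lfloor\log_2 K\rfloor$ bounds $\P(\sup_{j>K}R_j\ge0)$ by $O(\sigma_1^2/(\mu_1^2K))\to 0$. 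The same argument applies to $L$ with $(\mu_0,\sigma_0)$.

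Combining these bounds yields $\sup_n\P(|\hat\xi_n-k_n^\ast|>K)\to0$ as $K\to\infty$, that is, $|\hat\xi_n-k_n^\ast|=\mathrm{O}_P(1)$. The stated conclusions for $\xi_n^\prime$ and for $2\hat\xi_n-\xi_n$ then follow from the setup step.
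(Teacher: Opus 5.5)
Your proof is correct. The reduction is the same as the paper's: both pass to the changepoint of the odd subsequence (your $k_n^\ast$, the paper's $\xi_n''$), compare $W_{\hat\xi_n}\ge W_{k_n^\ast}$, and reduce to partial sums of i.i.d.\ $\pm\ell(X)$ with negative drift on each side. The two arguments differ in the tail bound.

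\emph{The paper's route.} It bounds each $\P\bigl(\sum_{s=\xi_n''+1}^{t}\ell(X_{2s-1,n})\ge 0\bigr)$ separately via Lemma~\ref{lem:negative_drift_partial_sums}. That lemma is a Chernoff bound built on the likelihood-ratio structure: by H\"older, $\E_{P_1}[e^{\theta\ell(X)}]=\int f_0^{\theta}f_1^{1-\theta}\le 1$ for $\theta\in(0,1)$. So each partial sum is nonnegative only with exponentially small probability, and a plain union bound over $t$ gives a geometric series $C_0C_1^M$. This yields exponentially decaying tails. It also needs only finite, positive KL divergences in both directions, not $\sigma_\star<\infty$. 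Your remark that ``only second moments are available'' therefore undersells what a log-likelihood ratio provides.

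\emph{Your route.} You use only the finite variance. With Chebyshev alone, the union bound would produce the divergent harmonic sum $\sum_{j>K}\sigma_1^2/(\mu_1^2 j)$. You correctly avoid this by applying Kolmogorov's maximal inequality on dyadic blocks, which gives a polynomial tail $O(\sigma_j^2/(\mu_j^2K))$. That is weaker than the paper's bound but enough for $\mathrm{O}_P(1)$. Your argument is also generic: it applies to any random walk with negative drift and finite variance, without needing $\ell$ to be an exact log-likelihood ratio.

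One point applies to both proofs. Each implicitly needs the comparison point $k_n^\ast$ to lie in $[m-1]$. This holds for large $n$ when $\xi_n/n\to\tau\in(0,1)$; otherwise one compares with $W_{m-1}$ instead.
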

\begin{proof}
Let $\xi_n''$ denote the changepoint location on the odd subsequence $(X_{1,n},X_{3,n},\ldots)$, and note that
\[
|\xi_n'-\xi_n''|\leq1.
\]
Therefore, it suffices to prove that
\[|\hat{\xi}_{n}-\xi''_n|=\mathrm{O}_P(1).\]
We start with recalling that
\[
\hat{\xi}_{n}\in \argmax_{s\in [m-1]} L(s),\qquad \text{where}~L(s)=\sum_{i=1}^s \ell(X_{2i-1,n}).
\]
For any $t>\xi''_{n}$, we can write $L(t)=L(\xi''_n)+\sum_{s=\xi''_n+1}^t \ell(X_{2s-1,n})$.
By a union bound, for any $M\in \N$,
\begin{align*}
    \P(\,\hat{\xi}_{n}\ge \xi''_n+M)
    &= \P\bigl(\cup_{t\,\ge\,\xi''_n+M}\{L(t)\ge L(\xi''_n)\}\bigr)\\
    &\le \sum_{t\,\ge\,\xi''_n+M}^n \!\! \P\,(L(t)\ge L(\xi''_n))= \sum_{t\,\ge\,\xi''_n+M}^n \!\! \P\Bigl(\!\sum_{s=\xi''_n+1}^t \ell(X_{2s-1,n})\ge 0\Bigr).
\end{align*}
Observe that $\{X_{2s-1,n}\}_{s=\xi''_n+1}^t$ are i.i.d samples from $P_1$. Therefore, by Lemma~\ref{lem:negative_drift_partial_sums}, there exists $\gamma>0$ such that
\begin{align*}
    \P(\,\hat{\xi}_{n}\ge \xi''_n+M)\le \sum_{t\,\ge\,\xi''_n+M}^n e^{-\gamma\, (t-\xi''_n)}\le C_0\,C_1^M,
\end{align*}
for some $C_0>0$ and $C_1<1$. Similarly, we can show that there exists $C'_0>0$ and $C'_1<1$ such that $\P(\,\hat{\xi}_{n}\le \xi''_n-M)\le C'_0\, {C'_1}^M$. Together, this yields $|\hat{\xi}_{n}-\xi''_n|=\mathrm{O}_P(1)$, and proves the first part. Finally, recalling that $\xi'_n:=\lfloor \xi_n/2\rfloor$, the second claim is immediate.
\end{proof}

\begin{theorem}\label{thm:mle_general_consistency}
Suppose, the estimated log-likelihood ratio $\hat{\ell}_n$ satisfies \eqref{eqn:consistency_of_hat_ell_n}. Then,
 $\hat{\xi}_n$, defined in~\eqref{eq:split_learnt_MLE} satisfies
    \[|\hat{\xi}_{n}-\xi'_n|=\mathrm{o}_P(\sqrt{n}).\]
    Consequently, we have $|2\hat{\xi}_n-\xi_n|=\mathrm{o}_P(\sqrt{n})$.
\end{theorem}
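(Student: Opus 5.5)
The plan is to condition on the auxiliary data $\mathcal D_n'$, so that $\hat\ell_n$ becomes a fixed function independent of $\mathcal D_n$. I then rerun the random-walk argument of Theorem~\ref{thm:mle_oracle_consistency}, with two changes: the drift and variance of the increments are controlled by perturbation from $\ell$, and the exponential tail bound of Lemma~\ref{lem:negative_drift_partial_sums} is replaced by a second-moment maximal inequality. If this works it gives $|\hat\xi_n-\xi_n''|=\mathrm{O}_P(1)$, which is stronger than the claimed $\mathrm{o}_P(\sqrt n)$. As in that proof, $\xi_n''$ is the changepoint of the odd subsequence and $|\xi_n'-\xi_n''|\le 1$, so it suffices to work with $\xi_n''$.

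\textbf{Step 1: conditional error sizes.} Define the conditional errors
\[
\Gamma_j(\mathcal D_n'):=\E_{X\sim P_j}\bigl[|\hat\ell_n(X)-\ell(X)|^2\mid \mathcal D_n'\bigr],\qquad j\in\{0,1\}.
\]
By \eqref{eqn:consistency_of_hat_ell_n}, $\E[\Gamma_j(\mathcal D_n')]=\Gamma_j\to0$, so Markov's inequality gives $\Gamma_j(\mathcal D_n')=\mathrm o_P(1)$. Fix $\varepsilon>0$ and work on the event $E_n:=\{\Gamma_0(\mathcal D_n')\vee\Gamma_1(\mathcal D_n')\le\varepsilon\}$, whose probability tends to $1$.

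\textbf{Step 2: drift and variance of the increments.} On $E_n$, Cauchy--Schwarz gives, for $X\sim P_1$ independent of $\mathcal D_n'$,
\[
\E[\hat\ell_n(X)\mid\mathcal D_n']\le -\mathrm{KL}(P_1\|P_0)+\sqrt{\varepsilon},
\qquad
\mathrm{Var}(\hat\ell_n(X)\mid\mathcal D_n')\le 2\sigma_1^2+2\varepsilon .
\]
Symmetrically, for $X\sim P_0$, $\E[-\hat\ell_n(X)\mid\mathcal D_n']\le-\mathrm{KL}(P_0\|P_1)+\sqrt\varepsilon$, with the analogous variance bound.

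Both KL divergences are finite, because $\ell$ has finite variance and hence finite mean under each $P_j$. They are strictly positive because $P_0\neq P_1$. Choosing $\varepsilon$ with $\sqrt\varepsilon\le I(P_0,P_1)/2$ therefore yields a drift at most $-\mu$ with $\mu:=I(P_0,P_1)/2>0$, and variance at most $v:=2\sigma_\star^2+2\varepsilon<\infty$. These bounds are uniform in $n$.

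\textbf{Step 3: maximal inequality.} For $t>\xi_n''$ write $\hat L(t)-\hat L(\xi_n'')=W_{t-\xi_n''}$, where $W_k$ is a conditional random walk with i.i.d.\ increments $\hat\ell_n(X_{2s-1,n})$, $X_{2s-1,n}\sim P_1$. Then
\[
\P\bigl(\hat\xi_n\ge\xi_n''+M\mid\mathcal D_n'\bigr)\le\P\Bigl(\sup_{k\ge M}W_k\ge0\Bigm|\mathcal D_n'\Bigr).
\]
Split $k\ge M$ into dyadic blocks $[2^jM,2^{j+1}M)$. On block $j$, the event $\{W_k\ge0\}$ forces the centered walk to exceed $\mu 2^jM$. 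Kolmogorov's maximal inequality bounds this by
\[
\frac{v\,2^{j+1}M}{(\mu 2^jM)^2}=\frac{2v}{\mu^2 2^jM}.
\]
Summing over $j\ge0$ gives a bound $4v/(\mu^2M)$. The left tail $\hat\xi_n\le\xi_n''-M$ is handled identically, using the pre-change increments $-\hat\ell_n$ and running backward in time from $\xi_n''$.

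\textbf{Step 4: unconditioning.} Combining the two tails,
\[
\P(|\hat\xi_n-\xi_n''|\ge M)\le \P(E_n^c)+\frac{8v}{\mu^2M}.
\]
Taking $\limsup_n$ and then $M\to\infty$ gives $\mathrm O_P(1)$, and hence $\mathrm o_P(\sqrt n)$. The statement about $2\hat\xi_n-\xi_n$ then follows from $|\xi_n-2\xi_n'|\le1$.

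The main obstacle is Step 2. The error $\hat\ell_n-\ell$ is controlled only in $L^2$ and only on average over $\mathcal D_n'$. Exponential concentration is therefore unavailable, which is why Chebyshev/Kolmogorov is used instead of Lemma~\ref{lem:negative_drift_partial_sums}. It is also why conditioning on a high-probability event of $\mathcal D_n'$ is needed before the drift is guaranteed to be negative.
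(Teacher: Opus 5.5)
Your proof is correct. It takes a genuinely different route from the paper and reaches a stronger conclusion, $|\hat\xi_n-\xi_n''|=\mathrm{O}_P(1)$, the same rate as the oracle result in Theorem~\ref{thm:mle_oracle_consistency}.

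\textbf{The paper's route.} The paper does not condition on $\mathcal D_n'$ and work with the combined increments. It splits each partial sum as $\sum\ell+\sum(\hat\ell_n-\ell)$.
\begin{itemize}
\item The oracle part is handled by the Chernoff-type Lemma~\ref{lem:negative_drift_partial_sums}.
\item The error part is handled by Lemma~\ref{lem:L2_error_ell_n}, one Kolmogorov-plus-Markov bound over the whole range: $|\sum_{s\le k}(\hat\ell_n-\ell)|\le (k+\sqrt m)\,\Gamma_1^{1/2}/\delta$.
\item The additive $\sqrt m$ slack is dominated by the linear drift only once $k\ge\varepsilon\sqrt m$.
\item A union bound over $t$ then gives $n e^{-\gamma\varepsilon\sqrt m}+4\delta$, which is why the paper only obtains $\mathrm{o}_P(\sqrt n)$.
\end{itemize}

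\textbf{Your route.} You treat $\hat\ell_n$ itself as a fixed increment function on a high-probability event of $\mathcal D_n'$. There its drift is at most $-I(P_0,P_1)/2$ and its variance is bounded. The negative drift therefore holds at every scale, and dyadic peeling with Kolmogorov's inequality gives tightness. You need neither an exponential lemma nor a union bound over $t$.

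\textbf{What each route buys.} Your argument uses $\sigma_0^2,\sigma_1^2<\infty$ to bound the increment variance. That hypothesis is available in the setting of Theorem~\ref{thm:conch-consistency}, where this result is applied. The paper's argument for this particular statement needs only $0<\mathrm{KL}<\infty$, because the Chernoff lemma uses $\E_{P_1}[e^{\theta\ell(X)}]\le 1$ for $\theta\in(0,1)$. You could also drop the variance assumption by dominating $\ell$ with $\ell\vee(-K)$ for large $K$. Finally, your stronger conclusion would allow the $\sqrt n$-neighbourhood in the proof of Theorem~\ref{thm:conch_consistency_non_uniform} to be shrunk.
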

\begin{proof}
As in the earlier proof, let $\xi_n''$ denote the changepoint location on the odd subsequence $(X_{1,n},X_{3,n},\ldots)$, and note that
\[
|\xi_n'-\xi_n''|\leq1.
\]
Therefore, it suffices to prove that
\[|\hat{\xi}_{n}-\xi''_n|=\mathrm{o}_P(\sqrt{n}).\]
We start by defining 
\begin{align*}
    S_k^{(1)}:=\sum_{s=\xi''_n+1}^{\xi''_n+k} \hat{\ell}_n(X_{2s-1,n})-\ell(X_{2s-1,n}),\qquad
    S_k^{(0)}:=\sum_{s=1}^{k} \hat{\ell}_n(X_{2s-1,n})-\ell(X_{2s-1,n}).
\end{align*}
Next, fix $\delta>0$ and let $\mathcal{A}$ be the event that
\[
    \max_{k=1,\ldots,m-\xi''_n} |S_k^{(1)}|\le \bigl(k+\sqrt{m}\bigr)\cdot \Gamma_1^{1/2}/\delta,\qquad
    \text{and}\quad \max_{k=1,\ldots,\xi''_n} |S_k^{(0)}|\le \bigl(k+\sqrt{m}\bigr)\cdot \Gamma_0^{1/2}/\delta.
\]
By Lemma~\ref{lem:L2_error_ell_n}, note that $P(\mathcal{A}^c)\le 4\delta$.
Now, recall that
\[
\hat{\xi}_{n}\in \argmax_{s\in [m-1]} \hat{L}(s),\qquad \hat{L}(s):=\sum_{i=1}^s \hat{\ell}_n(X_{2i-1,n}).
\]
For any $t>\xi''_{n}$, write $\hat{L}(t)=\hat{L}(\xi''_n)+\sum_{s=\xi''_n+1}^t \hat{\ell}_n(X_{2s-1,n})$.
Take any $\varepsilon>0$. By a union bound, 
\begin{align*}
    &\P\,(\,\hat{\xi}_{n}\ge \xi''_n+\varepsilon\sqrt{m})
    \le \P\,\bigl(\cup_{t\,\ge\,\xi''_n+\varepsilon\sqrt{m}}\{\hat{L}(t)\ge \hat{L}(\xi''_n)\},\,\mathcal{A}\bigr)+P(\mathcal{A}^c)\\
    &\hspace{3cm}\le \!\!\sum_{t\,\ge\,\xi''_n+\varepsilon\sqrt{m}}^m \P\!(\hat{L}(t)\ge \hat{L}(\xi''_n),\,\mathcal{A})+4\delta= \sum_{t\,\ge\,\xi''_n+\varepsilon\sqrt{n}}^n \!\!\P\,\Bigl(\!\!\sum_{s=\xi''_n+1}^t \hat{\ell}_n(X_{2s-1,n})\ge 0,\,\mathcal{A}\Bigr)+4\delta.
\end{align*}
Recall that $\{X_{2s-1,n}\}_{s=\xi''_n+1}^t$ are i.i.d samples from $P_1$. Therefore, for any $t\ge \xi''_n+\varepsilon\sqrt{m}$, 
\begin{align*}
   \P\Bigl(\,\sum_{s=\xi''_n+1}^t \hat{\ell}_n(X_{2s-1,n})\ge 0,\,\mathcal{A}\Bigr)&\le \P\Bigl(\,\sum_{s=\xi''_n+1}^t \ell(X_{2s-1,n})\ge -|S^{(1)}_{t-\xi''_n}|,\,\mathcal{A}\Bigr)\\
   &\le \P\Bigl(\,\frac{1}{t-\xi''_n}\sum_{s=\xi''_n+1}^t \ell(X_{2s-1,n})\ge -\bigl(\frac{1}{\delta}+\frac{1}{\varepsilon\delta}\bigr)\cdot \Gamma_1^{1/2}\Bigr).
\end{align*}
By~\eqref{eqn:consistency_of_hat_ell_n}, $\Gamma_1\to 0$ as $n\to\infty$. Hence, for sufficiently large $n$, by Lemma~\ref{lem:negative_drift_partial_sums}, there exists $\gamma>0$ such that for any $t\ge \xi''_n+\varepsilon\sqrt{m}$,
\[
\P\Bigl(\,\sum_{s=\xi''_n+1}^t \hat{\ell}_n(X_{2s-1,n})\ge 0,\,\mathcal{A}\Bigr)\le e^{-\gamma\, (t-\xi''_n)}\le e^{-\gamma\, \varepsilon\sqrt{m}}
\]
Hence, it follows that for sufficiently large $n$,
\begin{align*}
    \P(\,\hat{\xi}_{n}\ge \xi''_n+\varepsilon\sqrt{m})\le n\,e^{-\gamma\, \varepsilon\sqrt{m}}+4\delta.
\end{align*}
 Similarly, we can show that 
 \[\P(\,\hat{\xi}_{n}\le \xi''_n-\varepsilon\sqrt{m})\le n\,e^{-\gamma\, \varepsilon\sqrt{m}}+4\delta.\]
Together, since $\varepsilon$ and $\delta$ are arbitrary, this yields $|\hat{\xi}_{n}-\xi''_n|=\mathrm{o}_P(\sqrt{n})$, as required. Lastly, recalling that $\xi'_n:=\lfloor \xi_n/2\rfloor$, the second result follows.
\end{proof}

\subsubsection{Sharpness of \conch-SPLIT with oracle LLR score}\label{app:oracle_sharpness}

We now show that, as the sample size grows, the \conch-SPLIT confidence set, evaluated with the oracle LLR score, concentrates within a small neighborhood of the true changepoint $\xi_n$, enabling sharp localization. We consider the setting in Section~\ref{sec:sharpness_learned_llr}, and assume that we have access to the oracle likelihood ratio $\ell$. We write $(p_{1,n},\ldots,p_{m,n})$ for the \conch{} $p$-values defined on $\mathcal{D}_{n,2}$ based on the oracle LLR score, that is \eqref{score:CPP_split_learnt} with $\hat{\ell}_n\equiv \ell$.

\begin{theorem}[Sharpness with oracle LLR score] \label{thm:conch_consistency_oracle}
Suppose that, in the above setting, there exists $\tau\in(0,1)$ such that $\xi_n/n\,\to\,\tau$ as $n\to\infty$, and $\mathrm{Var}_{X\sim P_0}(\ell(X)),\mathrm{Var}_{X\sim P_1}(\ell(X))\in (0,\infty)$. Then, for \conch{} $p$-values $(p_{1,n},\ldots,p_{m-1,n})$ defined above, we have for every $\delta>0$,
\[
\lim_{\kappa\to\infty}
\limsup_{n\to\infty}
\P\left(
    \max_{|t-\xi'_n|\ge\kappa}
    p_{t,n}>\delta\right)=0.
\]
Consequently, it follows that $|\mathcal{C}^{\conch\text{-SPLIT}}_{n,1-\alpha}|=\mathrm{O}_P(1)$, meaning that it is asymptotically sharp i.e., $\bigl|\,\mathcal{C}^{\conch\text{-SPLIT}}_{n,1-\alpha}\,\bigr|/(n-1)\ \xrightarrow{P}\ 0$ as $n\to\infty$.
\end{theorem}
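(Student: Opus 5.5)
The plan is to work conditionally on the training split $\mathcal D_{n,1}$, so that $\hat{\xi}_n$ is a fixed index, and to use Theorem~\ref{thm:mle_oracle_consistency}, which gives $|\hat{\xi}_n-\xi'_n|=\mathrm{O}_P(1)$. I then bound each $p_{t,n}$ with $|t-\xi'_n|\ge\kappa$ by a permutation Chebyshev inequality. By the time-reversal symmetry of Remark~\ref{remark:symmetry} (reverse the sequence and swap the roles of $P_0,P_1$ and of $\ell,-\ell$), it suffices to treat $t\le \xi'_n-\kappa$. On the event $|\hat{\xi}_n-\xi'_n|\le \kappa/2$ we then have $t<\hat{\xi}_n$, and the score is $S_{t,n}(\boldsymbol Y)=-\sum_{i=t+1}^{\hat{\xi}_n}\ell(Y_i)$.

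Write $k:=\hat{\xi}_n-t\ge\kappa/2$. A uniformly random $\pi\in\Pi_t$ fills the positions $t+1,\ldots,\hat{\xi}_n$ with a simple random sample, without replacement, of size $k$ drawn from the right multiset $\{\ell(Y_{t+1}),\ldots,\ell(Y_m)\}$. Hence
\[
p_{t,n}=\P_\pi\Bigl(\textstyle\sum_{i=t+1}^{\hat{\xi}_n}\ell(Y_{\pi(i)})\ge \sum_{i=t+1}^{\hat{\xi}_n}\ell(Y_i)\Bigr).
\]
The permuted sum has mean $-k\,\hat{\mu}_{t,n,R}$. Its variance is at most $k$ times the empirical variance of the right block, which is bounded by $k\,m\hat v_n/(m-t)\le k\,\hat v_n/(1-\tau+o(1))$. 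Chebyshev then gives
\[
p_{t,n}\le \frac{C\,\hat v_n}{k\,\bigl(\bar\ell_{t}-(-\hat{\mu}_{t,n,R})\bigr)_+^{2}},\qquad \bar\ell_t:=\frac1k\sum_{i=t+1}^{\hat{\xi}_n}\ell(Y_i),
\]
so it remains to show that the gap $\bar\ell_t+\hat{\mu}_{t,n,R}$ is bounded below by a positive constant uniformly over the relevant $t$.

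For the observed average $\bar\ell_t$: all but $|\hat{\xi}_n-\xi'_n|=\mathrm{O}_P(1)$ of the summands are i.i.d.\ draws of $\ell(X)$ with $X\sim P_0$, whose mean is $\mathrm{KL}(P_0\|P_1)>0$ (finite because $\sigma_0<\infty$). Read as partial sums going backward from $\xi'_n$, Kolmogorov's maximal inequality applied on dyadic blocks shows that
\[
\P\Bigl(\inf_{k\ge\kappa/2}\bar\ell_t\le \mathrm{KL}(P_0\|P_1)-\varepsilon\Bigr)\to0\quad\text{as }\kappa\to\infty,
\]
uniformly in $n$; the $\mathrm{O}_P(1)$ boundary terms divided by $k\ge\kappa/2$ are negligible.

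For the right block: it contains $m-\xi'_n\ge(1-\tau)m/2$ points from $P_1$ and $\xi'_n-t$ points from $P_0$. By a uniform law of large numbers for the partial sums, $-\hat{\mu}_{t,n,R}$ is, uniformly in $t$, within $o_P(1)$ of the convex combination
\[
w\,\mathrm{KL}(P_0\|P_1)-(1-w)\,\mathrm{KL}(P_1\|P_0),\qquad w\le \tau/(\tau+(1-\tau))+o(1)<1 .
\]
Therefore $\bar\ell_t+\hat{\mu}_{t,n,R}\ge c>0$ uniformly with probability tending to one. Moreover $\hat v_n\to \tau\E_{P_0}\ell^2+(1-\tau)\E_{P_1}\ell^2<\infty$ in probability. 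Combining these, with probability at least $1-\eta(\kappa)-o(1)$ we get $\max_{|t-\xi'_n|\ge\kappa}p_{t,n}\le C'/\kappa$, where $\eta(\kappa)\to0$ as $\kappa\to\infty$. This gives the displayed double limit; thresholding at $\alpha$ shows that $\mathcal C^{\conch\text{-SPLIT}}_{n,1-\alpha}$ is contained in a union of $\mathrm{O}_P(1)$ blocks $\mathcal B_t$ around $\xi'_n$, together with $\mathcal B_0\cup\mathcal B_m$, and each of these blocks has size at most $2$.

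The main obstacle is uniformity over $t$. Chebyshev gives a $1/k$ bound only after the random gap has been controlled for all $t$ at once. For the observed gap the dyadic maximal-inequality argument handles this; for the right-block mean I would prove the uniform LLN by writing it as a difference of two partial-sum processes, each controlled by Kolmogorov's inequality. A secondary point is the case where $\hat{\xi}_n$ falls on the other side of $t$, which is excluded by working on the event $|\hat{\xi}_n-\xi'_n|\le\kappa/2$.
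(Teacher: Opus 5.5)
Your proposal is correct and follows the paper's proof in structure. Both arguments use an $\mathrm{O}_P(1)$ window for $\hat\xi_n$ from Theorem~\ref{thm:mle_oracle_consistency}, the without-replacement Chebyshev bound of Lemma~\ref{lem:deterministic_bound_onn_tildep} (you use its mirror image for $t<\hat\xi_n$), an $\mathrm{O}_P(1)$ bound on $\hat v_n$, and a lower bound on the observed-versus-permutation gap that is uniform over $t$. One small correction: the mirror-image reduction is legitimate only if you reverse the training and calibration subsequences separately, since reversing all of $\bX$ swaps the parity classes of the interlaced split, so Remark~\ref{remark:symmetry} does not apply verbatim. The only real difference is how you get uniformity over $t$. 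You use Kolmogorov's maximal inequality on dyadic blocks plus a uniform LLN for the complementary-block mean, which is simpler and shows the limiting gap $(1-w)\,\mathrm{J}(P_0,P_1)$ directly. The paper instead splits $\Delta_{t,k_n}$ into $S_1+S_2+S_3$ and combines the Chernoff bound of Lemma~\ref{lem:negative_drift_partial_sums} with a union bound over $t$, which gives explicit exponential tails in $\kappa$.
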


\begin{proof}
We start by observing that, by symmetry, it suffices to show that for some $\kappa>0$, 
\[
\lim_{\kappa\to\infty}
\limsup_{n\to\infty} \P\left(
    \max_{t\ge \xi'_n +\kappa}
    p_{t,n}>\delta\right)= 0.
\]
The proof is now split into three key steps. First, in Step~1, we restrict to a high-probability, non-growing set around the changepoint $\xi'_n$. Next, in Step~2, we use a data-dependent bound on the $p$-values to reduce the problem to deriving the concentration of a few empirical terms. Finally, in Step~3, we show that one of these key empirical terms has a negative drift uniformly over all indices $t\ge\xi'_n+\kappa$ with high probability and complete the proof. 

\paragraph{Step~1:  Restrict to a high-probability set for $\hat{\xi}_n$.}

Fix any $\eta,\delta\in (0,1)$, and note that it is enough to show that there exists a $\kappa>0$ such that
\[
\lim_{n\to\infty}\,\P\left(\max_{\,t\,\ge\,\xi'_n+\kappa}\ p_{t,n}> \delta\right)\leq 5\eta.
\]
For the oracle LLR score~\ref{score:oracle_llr}, we have $\hat{\ell}_n\equiv \ell$ for all $n\in\N$. Consequently, $\Gamma_0=0$, $\Gamma_1=0$ and $\hat{\xi}_{n}$ is as in~\eqref{eq:split_learnt_MLE}. By Theorem~\ref{thm:mle_oracle_consistency}, $\hat{\xi}_{n}$ satisfies $|\hat{\xi}_{n}-\xi'_n|=\mathrm{O}_P(1)$. Since $\xi_n/n\to\tau$ and $\xi'_n=\lfloor \xi_n/2\rfloor$, there exist $C_0>0$ and $N_0\in\N$ such that for all $n>N_0$,
\begin{equation*}\label{eqn:control_on_xi_hat_n}
    \P\left(|\hat{\xi}_n-\xi'_n|>C_0\right)\le \eta,
    \qquad 
    \frac{\xi'_n}{n}\ge \frac{\tau}{4},\qquad \frac{\xi'_n-C_0}{n}\ge \frac{\tau}{8}.
\end{equation*}
Now suppose $n>N_0$ and $\kappa>C_0$. By the law of total probability,
\begin{align}\label{eq:total_prob_bound}
    \P\left(\max_{\,t\,\ge\,\xi'_n+\kappa}\ p_{t,n}> \delta\right)\ &\le \ \P\left(\max_{\,t\,\ge\,\xi'_n+\kappa}\ p_{t,n}> \delta,\ |\hat{\xi}_{n}-\xi'_n|\le C_0\right)+ \eta\notag\\
 &\le \ \sum_{k_n=\xi'_n-C_0}^{\xi'_n+C_0}\P\left(\max_{\,t\,\ge\,\xi'_n+\kappa}\ p_{t,n}> \delta,\ \hat{\xi}_{n}=k_n\right)+ \eta.
\end{align}
Hence, it suffices to upper bound each of the summands in~\eqref{eq:total_prob_bound}. 

\paragraph{Step~2: Data-dependent upper bound on $p$-values, and its implication.}
Fix any $k_n$ such that $\xi'_n-C_0\le k_n\le \xi'_n+C_0$. By Lemma~\ref{lem:deterministic_bound_onn_tildep}, for any $t\in[m-1]$,
\[
p_{t,n}\le \frac{m}{t}\,\frac{\hat{v}_n}{\hat{m}_n\, \Delta^2_{t,\hat{\xi}_n}}\one{\Delta_{t,\hat{\xi}_n}<0} + \one{\Delta_{t,\hat{\xi}_n}\ge 0},
\]
where
\[
\Delta_{t,j}:=\frac{1}{\hat{m}_n}\sum_{i=j+1}^{t}\hat{\ell}_n(Y_i)-\hat{\mu}_{t,n,L},\qquad \hat{m}_n:=|t-\hat{\xi}_n|.
\]
Since $\kappa>C_0$, on the event $\{\hat{\xi}_{n}=k_n\}$ we have
\[
\hat{m}_n=t-k_n\ge(\kappa-C_0),\qquad \frac{m}{t}\le \frac{m}{\xi'_n}\le \frac{2}{\tau}.
\]
Therefore, on $\{\hat{\xi}_{n}=k_n\}$, the empirical bound on $p_{t,n}$ (uniform over $t\ge\xi'_n+\kappa$) becomes
\[
    \max_{\,t\,\ge\,\xi'_n+\kappa} p_{t,n}
    \le \max_{\,t\,\ge\,\xi'_n+\kappa}\left\{
    \frac{2}{\tau}\,\frac{\hat{v}_n}{(\kappa-C_0)\, \Delta^2_{t,k_n}}\one{\Delta_{t,k_n}<0}\;+\;\one{\Delta_{t,k_n}\ge 0}\right\}.
\]
Furthermore, since $\epsilon_{2,n}=0$ for all $n\in\N$, Lemma~\ref{lem:upper_bound_on_var} implies that with probability at least $1-\eta$,
\[
    \hat{v}_n\le \frac{4\bigl(\sigma^2_\star+\mathrm{J}(P_0,P_1)^2\bigr)}{\eta}.
\]
Therefore, using~\eqref{eq:total_prob_bound} and the fact that $\delta\in(0,1)$, a union bound yields
\begin{align}\label{eqn:negative_drift_oracle_req}
    &\P\left(\max_{\,t\,\ge\,\xi'_n+\kappa}\ p_{t,n}>\delta\right)\notag\\ 
    &\hspace{0.15cm}\le \ \sum_{k_n=\xi'_n-C_0}^{\xi'_n+C_0}
 \P\left(\max_{\,t\,\ge\,\xi'_n+\kappa}\left\{\frac{1}{\tau^2}\,
 \frac{4(\sigma^2_\star+\mathrm{J}(P_0,P_1)^2)}{\eta(\kappa-C_0)\,\Delta^2_{t,k_n}}\one{\Delta_{t,k_n}<0}\;+\;\one{\Delta_{t,k_n}\ge 0}\right\}> \delta,\ 
 \hat{\xi}_{n}=k_n\right)+ 2\eta\notag\\
 &\hspace{0.15cm}\le \ \sum_{k_n=\xi'_n-C_0}^{\xi'_n+C_0}
 \P \left(\max_{\,t\,\ge\,\xi'_n+\kappa}
 \Delta_{t,k_n}> -\left(
 \frac{4(\sigma^2_\star+\mathrm{J}(P_0,P_1)^2)}{\tau^2\delta\eta(\kappa-C_0)}\right)^{1/2}\right)+ 2\eta.
\end{align}
Hence, we only need to argue that with high probability, $\Delta_{t,k_n}$ has a negative drift uniformly over $t\ge\xi'_n+\kappa$ for every $\xi'_n-C_0\le k_n\le \xi'_n+C_0$. 

\paragraph{Step~3: Proving uniform negative drift of $\Delta_{t,k_n}$, and completing the proof.}
Fix any $k_n$ such that $\xi'_n-C_0\le k_n\le \xi'_n+C_0$. We can write
\[
\Delta_{t,k_n}=\frac{1}{t-k_n}\sum_{i=k_n+1}^{t}\ell(Y_i)
-\hat{\mu}_{t,n,L}=\sum_{i=1}^{t}a_i\,\ell(Y_i),
\]
where 
\[
a_i:=-\frac{1}{t}\one{i\le k_n}+\left(\frac{1}{t-k_n}-\frac{1}{t}\right)\one{i>k_n}.
\]
For such $k_n$, we may decompose $\Delta_{t,k_n}$ as
\[
\Delta_{t,k_n}=-\frac{1}{t}\sum_{i=1}^{\xi'_n-C_0-1} \ell(Y_i)\;+\;\sum_{i=\xi'_n-C_0}^{\xi'_n+C_0} a_i\ell(Y_i)\;+\;\frac{k_n}{(t-k_n)t}\sum_{i=\xi'_n+C_0+1}^{t} \ell(Y_i).
\]
Now, we define 
\begin{multline*}
    S_1:=-\frac{1}{t}\sum_{i=1}^{\xi'_n-C_0-1} \ell(Y_i),\qquad S_2:=\sum_{i=\xi'_n-C_0}^{\xi'_n+C_0} \frac{1}{\kappa-C_0}|\ell(Y_i)|,\\
    \text{and}\quad S_3:=\max_{\,t\,\ge\,\xi'_n+\kappa}
    \frac{k_n}{(t-k_n)t}\sum_{i=\xi'_n+C_0+1}^{t} \ell(Y_i).\hspace{5em}
\end{multline*}
Noting that $|a_i|\le \frac{1}{t-k_n}\le \frac{1}{\kappa-C_0}$, we obtain
\begin{equation}\label{eq:decomposition_Delta}
    \max_{\,t\,\ge\,\xi'_n+\kappa}\Delta_{t,k_n}
    \le \;S_1\;+\;S_2\;+\;S_3.
\end{equation}
Let $N_1\in N$ such that $N_1>N_0$ and $N_1> 2C_0/\tau^2$.
Since $Y_1,\ldots,Y_{\xi'_n-C_0}\stackrel{iid}{\sim} P_0$ and $Y_{\xi'_n+C_0},\ldots,Y_{t}\stackrel{iid}{\sim} P_1$, Lemma~\ref{lem:negative_drift_partial_sums} yields constants $c,\gamma>0$ such that
\begin{align*}
    &\P\left(\frac{1}{\xi'_n-C_0}\sum_{i=1}^{\xi'_n-C_0-1} \ell(Y_i)\ge c\,I(P_0,P_1) \right)\ge 1-e^{-\gamma(\xi'_n-C_0)},\\[2pt]
    &\P\left(\frac{1}{t-\xi'_n-C_0}\sum_{i=\xi'_n+C_0+1}^{t} \ell(Y_i)\le -c\,I(P_0,P_1)\right)\ge 1-e^{-\gamma(t-\xi'_n-C_0)}.
\end{align*}
These imply high-probability bounds on $S_1$ and $S_3$.  
For $S_1$, since $t\le n$, with probability $1-e^{-\gamma(\xi'_n-C_0)}$, for $n>N_1$,
\[
S_1\le -\frac{\xi'_n-C_0}{n}c\,I(P_0,P_1) \le -\frac{\tau}{8} c\,I(P_0,P_1).
\]
For $S_3$, with probability at least $1-e^{-\gamma(t-\xi'_n-C_0)}$,
\[
\frac{k_n}{(t-k_n)t}\sum_{i=\xi'_n+C_0+1}^{t} \ell(Y_i)\le -\frac{k_n(t-\xi'_n-C_0)}{(t-k_n)t}\, cI(P_0,P_1)\le 0.
\]
Hence,
\[
\P(S_3\le 0)\ge 1-\sum_{t>\xi'_n+\kappa}e^{-\gamma(t-\xi'
_n-C_0)}\ge 1-\frac{e^{-\gamma(\kappa-C_0)}}{1-e^{-\gamma}}.
\]
Finally, by Markov's inequality and Cauchy--Schwarz, with probability at least $1-\eta/C_0$,
\begin{align*}
    S_{2}\le \frac{C_0}{\eta(\kappa-C_0)}\cdot \sum_{i=\xi'_n-C_0}^{\xi'_n+C_0} \E[|\ell(Y_i)|]&\le \frac{C_0}{\eta(\kappa-C_0)}\cdot \sum_{i=\xi'_n-C_0}^{\xi'_n+C_0} (\E[|\ell(Y_i)|^2])^{1/2}\\
    &\le\frac{C_0}{\eta(\kappa-C_0)}\sum_{i=\xi'_n-C_0}^{\xi'_n+C_0} (\sigma_\star^2+\mathrm{J}^2(P_0,P_1))^{1/2}\\
    &\le \frac{3C_0^2(\sigma_\star^2+\mathrm{J}^2(P_0,P_1))^{1/2}}{(\kappa-C_0)\eta},
\end{align*}
Combining the bounds on $S_1$, $S_2$, and $S_3$, if
\begin{align}\label{eqn:condn_1_on_kappa}
    -\frac{\tau}{8}\,cI(P_0,P_1)+
    \frac{3C_0^2(\sigma_\star^2+\mathrm{J}^2(P_0,P_1))^{1/2}}{(\kappa-C_0)\eta}\le -\left(\frac{8(\sigma^2_\star+\mathrm{J}(P_0,P_1)^2)}{\tau\delta\eta(\kappa-C_0)}\right)^{1/2},
\end{align}
then it follows that
\[
\max_{\,t\,\ge\,\xi'_n+\kappa}\Delta_{t,k_n}
\le S_1+S_2+S_3\le -\left(\frac{8(\sigma^2_\star+\mathrm{J}(P_0,P_1)^2)}{\tau\delta\eta(\kappa-C_0)}\right)^{1/2}.
\]
Hence, using~\eqref{eqn:negative_drift_oracle_req} and a union bound,
\begin{align*}
    \P\left(\max_{\,t\,\ge\,\xi'_n+\kappa}\ p_{t,n}> \delta\right)
    &\le \sum_{k_n=\xi'_n-C_0}^{\xi'_n+C_0}
    \P\left(\max_{\,t\,\ge\,\xi'_n+\kappa}\Delta_{t,k_n}
    >-\left(
    \frac{8(\sigma^2_\star+\mathrm{J}(P_0,P_1)^2)}{\tau\delta\eta(\kappa-C_0)}\right)^{1/2}\right)+2\eta\\
    &\le \sum_{k_n=\xi'_n-C_0}^{\xi'_n+C_0}\left(e^{-\gamma(\xi'_n-C_0)}
    + \frac{e^{-\gamma(\kappa-C_0)}}{1-e^{-\gamma}}+\frac{\eta}{C_0}\right)
    +2\eta \\
    &\le (2C_0+1)\, e^{-\gamma(n\tau^2-C_0)}+ 
    \frac{2C_0+1}{1-e^{-\gamma}}\,e^{-\gamma(\kappa-C_0)}+4\eta,
\end{align*}
where the last step uses $\xi'_n\ge \tau n/4$.  
Now choose $\kappa$ large enough (independent on $n$) so that
\[
\eqref{eqn:condn_1_on_kappa}\,\, \text{holds},
\qquad\text{and}\qquad
\frac{(2C_0+1)}{1-e^{-\gamma}}\,e^{-\gamma(\kappa-C_0)}\le \eta.
\]
For such a choice of $\kappa$, 
\[
\lim_{n\to\infty}\P\!\left(\max_{\,t\,\ge\,\xi'_n+\kappa}\ p_{t,n}> \delta\right)\le 5\eta,
\]
as required. Next, write 
\[
\mathcal C^{(2),\conch}_{m,1-\alpha}
:=\{t\in[m-1]:p_{t,n}>\alpha\}
\]
for the \conch{} confidence sets defined on the even subsequence $\mathcal{D}_{n,2}$. By the first part of the result, we obtain 
\[
|\mathcal C^{(2),\conch}_{m,1-\alpha}|=\mathrm{O}_P(1).
\]
Therefore, the second conclusion follows by~\eqref{eqn:conch_split_CI}.
\end{proof}
The theorem establishes that the \conch-SPLIT procedure with the sample-split oracle LLR score is rate-optimal in
cardinality, since a valid confidence set containing the true changepoint cannot have cardinality converging to zero in probability while maintaining nontrivial asymptotic coverage.

\subsubsection{Asymptotic sharpness of \conch{} with learned LLR score}\label{app:proof_of_consistency}
We continue with the notation introduced in Appendix~\ref{app:oracle_sharpness} and now prove asymptotic sharpness for the learned LLR scores.
\begin{theorem}\label{thm:conch_consistency_non_uniform}
In the setting of Theorem~\ref{thm:conch-consistency}, the $p$-values satisfy
\[
\E\left[\frac{1}{m-1}\sum_{i=1}^{m-1} p_{i,n}\right]\longrightarrow 0\qquad \text{as } n\to\infty,
\]
where recall that $m=n/2$.

\end{theorem}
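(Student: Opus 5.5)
Because every $p_{i,n}\in[0,1]$, I will show that for each fixed $\varepsilon\in(0,\tau/8)$,
\[
\limsup_{n\to\infty}\E\Bigl[\tfrac{1}{m-1}\textstyle\sum_{i} p_{i,n}\Bigr]\le C\varepsilon
\]
for a constant $C$ independent of $\varepsilon$; letting $\varepsilon\downarrow 0$ then gives the claim. Split $[m-1]$ into a band $B_\varepsilon:=\{t:|t-\xi'_n|<\varepsilon m\}\cup\{t<\varepsilon m\}\cup\{t>(1-\varepsilon)m\}$ and its complement. The band contributes at most $4\varepsilon$ to the average, since its size is at most $4\varepsilon m$ and $p\le 1$. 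It therefore suffices to show
\[
\P\Bigl(\max_{t\notin B_\varepsilon} p_{t,n}>\varepsilon\Bigr)\to 0 .
\]
By the time-reversal symmetry of Remark~\ref{remark:symmetry}, which swaps the roles of $P_0$ and $P_1$ and maps $\ell$ to $-\ell$, it is enough to treat $t\ge \xi'_n+\varepsilon m$.

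First I will restrict to high-probability events. By Theorem~\ref{thm:mle_general_consistency}, $|\hat\xi_n-\xi'_n|=\mathrm o_P(\sqrt n)$, so with probability tending to one $\hat m_n=t-\hat\xi_n\ge \varepsilon m/2$ for all $t\ge\xi'_n+\varepsilon m$. By Lemma~\ref{lem:upper_bound_on_var}, together with $\Gamma_0,\Gamma_1\to0$, $\hat v_n=\mathrm O_P(1)$. Also $m/t\le 4/\tau$ on this range. Plugging these into Lemma~\ref{lem:deterministic_bound_onn_tildep} gives
\[
p_{t,n}\le \frac{C\,\hat v_n}{\tau\varepsilon m\,\Delta_{t,\hat\xi_n}^2}\quad\text{whenever }\Delta_{t,\hat\xi_n}<0 .
\]
So it only remains to show that, with probability tending to one, $\max_{t\ge\xi'_n+\varepsilon m}\Delta_{t,\hat\xi_n}\le -c$ for some $c=c(\tau,P_0,P_1)>0$. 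On that event every such $p_{t,n}$ is $\mathrm O_P(1/(\varepsilon m))\to0$.

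To obtain the uniform negative drift, I will first handle the oracle part and then the estimation error.
\begin{itemize}
\item \emph{Oracle drift.} Replace $\hat\ell_n$ by $\ell$. Then $\Delta_{t,\hat\xi_n}$ is the average of $\ell$ over $(\hat\xi_n,t]$ minus its average over $[1,t]$. Its population value is
\[
-\mathrm{KL}(P_1\|P_0)-\frac{\xi'_n\,\mathrm{KL}(P_0\|P_1)-(t-\xi'_n)\,\mathrm{KL}(P_1\|P_0)}{t}=-\frac{\xi'_n}{t}\,\mathrm J(P_0,P_1)\le-\frac{\tau}{4}\,\mathrm J(P_0,P_1),
\]
up to an $\mathrm o(1)$ correction from $\hat\xi_n\neq\xi'_n$. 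That correction is $\mathrm o_P(1)$ because it involves $\mathrm o_P(\sqrt n)$ terms divided by at least $\varepsilon m/2$, with finite second moments. Uniform concentration of the empirical averages over all windows of length at least $\varepsilon m/2$ follows from Kolmogorov's maximal inequality applied to the centred partial sums, which have finite variance $\sigma_0^2,\sigma_1^2$. This gives $\max_k|S_k|/(\varepsilon m)=\mathrm O_P(m^{-1/2}/\varepsilon)\to0$.
\item \emph{Estimation error.} Condition on $\mathcal D_n'$, which is independent of $\mathcal D_n$, and use the event $\mathcal A$ of Lemma~\ref{lem:L2_error_ell_n}. On $\mathcal A$, every partial sum of $\hat\ell_n-\ell$ over a window of length $k$ is at most $(k+\sqrt m)\Gamma^{1/2}/\delta$. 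Dividing by window lengths of order $\varepsilon m$ leaves an error of $\mathrm O(\Gamma^{1/2}/(\delta\varepsilon))\to0$, because $\Gamma_0,\Gamma_1\to0$ by \eqref{eqn:consistency_of_hat_ell_n}.
\end{itemize}
Combining the two, with probability at least $1-4\delta-\mathrm o(1)$ we get $\max_t\Delta_{t,\hat\xi_n}\le-\tau\mathrm J/8$, and $\delta$ is arbitrary.

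I expect the main obstacle to be this uniform-in-$t$ control of $\Delta_{t,\hat\xi_n}$. It has to hold simultaneously over all $t$, it depends on the random $\hat\xi_n$, and it must accommodate the error in $\hat\ell_n$. I would handle the dependence on $\hat\xi_n$ as in Step~1 of the proof of Theorem~\ref{thm:conch_consistency_oracle}, by replacing $\hat\xi_n$ with the worst case over the $\mathrm o(\sqrt n)$-window of candidate values $k_n$. Here that window is not of bounded size. The displacement terms are nonetheless controlled in aggregate through the $(k+\sqrt m)$ form of $\mathcal A$ and a maximal inequality, rather than by a union bound over the window.
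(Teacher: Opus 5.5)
Your argument is correct, but it is organized differently from the paper's proof.

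\textbf{The paper's route.} The paper uses a near-band of width $\kappa\sqrt n$ around $\xi_n'$. This band is negligible after dividing by $m-1$, which is exactly why the paper needs the $\mathrm{o}_P(\sqrt n)$ rate of Theorem~\ref{thm:mle_general_consistency}. It never controls the $p$-values uniformly. For each far $t$ separately, it conditions on $\mathcal F_n=\sigma(\mathcal D_{n,1},\mathcal D_n')$, so that the weights $a_i$ in $\Delta_{t,\hat\xi_n}=\sum_i a_i\hat\ell_n(Y_i)$ are fixed, and splits $\Delta_{t,\hat\xi_n}=Z_{t,n}+R_{t,n}$. It bounds the oracle part by Chebyshev (the conditional variance is $\lesssim 1/\hat m_n$) and the estimation error by Markov, using $\sum_i|a_i|\le 2$. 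It then bounds $\E[p_{t,n}\one{\mathcal H_n}]$ pointwise in $t$ and averages. Because the target is an average, this pointwise control suffices and no maximal inequality is needed.

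\textbf{Your route.} You take a linear band of width $\varepsilon m$, with an outer limit $\varepsilon\downarrow 0$. You then prove the stronger statement $\P(\max_{t\notin B_\varepsilon}p_{t,n}>\varepsilon)\to 0$ via Kolmogorov's maximal inequality over all windows. This buys three things:
\begin{itemize}
\item The random reference point $\hat\xi_n$ is handled automatically, since every window is already covered. So the ``worst case over $k_n$'' device in your last paragraph is unnecessary.
\item You only need $|\hat\xi_n-\xi_n'|=\mathrm{o}_P(n)$ rather than $\mathrm{o}_P(\sqrt n)$.
\item You get a genuine localization statement: with probability tending to one, the \conch{} set on $\mathcal D_{n,2}$ lies in an $\varepsilon m$-neighbourhood of $\xi_n'$, up to your edge strips. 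This yields Theorem~\ref{thm:conch-consistency} directly, without the two Markov steps used in the paper.
\end{itemize}
The price is the extra limit in $\varepsilon$ and the uniform bookkeeping. The edge strips $t<\varepsilon m$ and $t>(1-\varepsilon)m$ are harmless but not needed, since $m/t$ (respectively $m/(m-t)$) stays bounded on the relevant side of $\xi_n'$.

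\textbf{One small imprecision.} Lemma~\ref{lem:L2_error_ell_n}, as stated, bounds partial sums anchored at $1$ and at $\xi_n'+1$. It does not bound arbitrary windows of length $k$ by $(k+\sqrt m)\Gamma^{1/2}/\delta$. A general window is a difference of anchored sums, so you only get a bound of order $(m+\sqrt m)\Gamma^{1/2}/\delta$. However, all your windows have length between $\varepsilon m/2$ and $m$, so dividing by the window length still gives $O(\Gamma^{1/2}/(\delta\varepsilon))\to 0$, and your conclusion stands. Alternatively, rerun the lemma's Kolmogorov argument directly over all windows, which does give a bound of the form $(k+2\sqrt m)$.
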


\begin{proof}
We start by noting that it suffices to prove that
\[
\limsup_{n\to\infty}
\E\left[\frac{1}{m-1}\sum_{i=1}^{m-1}p_{i,n}\right]
\le 9\eta
\]
for every $\eta\in(0,1)$. Fix any $\eta\in(0,1)$. The proof is
split into four steps. We first separate the indices that are far
from the changepoint $\xi_n'$ from those lying in a $\sqrt n$
neighborhood. The latter contribute negligibly to the normalized
sum. In Step~2, we obtain a data-dependent upper bound on the
$p$-values corresponding to the far indices. In Step~3, we show
that the key empirical term in this bound has a negative drift.
Finally, in Step~4, we combine these bounds to complete the proof.

\paragraph{Step~1: Split the indices into a $\sqrt n$-neighborhood
of $\xi_n'$ and its complement.}

Let $\mathcal F_n:=\sigma(\mathcal D_{n,1},\mathcal D_n')$ be the sigma-algebra generated by $\mathcal D_{n,1}$ and $\mathcal D_n'$. Conditional on $\mathcal F_n$, both $\hat\ell_n$ and $\hat\xi_n$ are fixed, and the calibration observations $Y_1,\ldots,Y_m$ are independent of $\mathcal F_n$.

Let us define the event
\[
\mathcal G_n:=\bigcap_{j\in\{0,1\}}\left\{
\E_{X\sim P_j}\left[|\hat\ell_n(X)-\ell(X)|^2\,\middle|\,\mathcal D_n'\right]
\le \frac{\Gamma_j}{\eta}\right\},
\]
and note that by Markov's inequality and a union bound, 
\begin{equation}\label{eq:good_learning_event}
\P(\mathcal G_n^c)\le 2\eta.
\end{equation}
Moreover, by Theorem~\ref{thm:mle_general_consistency}, $|\hat\xi_n-\xi_n'|=\mathrm{o}_P(n^{1/2})$.
Further, by~\eqref{eqn:consistency_of_hat_ell_n}, and recalling $\xi_n'/m\to\tau$, there exists $N_0\in\N$ such that for all $n\ge N_0$,
\begin{equation}\label{eq:large_n_conditions_learned}
\frac{\xi_n'}{m}\ge\frac{\tau}{2},
\qquad \P\bigl(|\hat\xi_n-\xi_n'|> n^{1/2}\bigr)\le\eta,
\qquad \frac{\Gamma_0} {\eta\,\E_{X\sim P_0}[\ell^2(X)]},\frac{\Gamma_1}{\eta\,\E_{X\sim P_1}[\ell^2(X)]}\le\frac14,
\end{equation}
and that
\begin{equation}\label{eq:conditional_learning_error_small}
\frac{16}{\tau I(P_0,P_1)}
\left(\frac{ \max_{j\in\{0,1\}} \Gamma_j}{\eta}\right)^{1/2}\le\eta.
\end{equation}
We fix $\kappa$ sufficiently large so that $\frac{1}{\kappa-1}\le\frac{\tau}{4}$.

We partition $[m-1]$ into the far indices
\[
\mathcal I:=\left\{
i\in[m-1]:|i-\xi_n'|\ge\kappa n^{1/2}\right\}
\]
and the near indices $[m-1]\setminus\mathcal I$. Since
$p_{i,n}\le1$, we have
\begin{align}
\E\left[\frac{1}{m-1}\sum_{i=1}^{m-1}p_{i,n}\right]
&\le\E\left[\frac{1}{m-1}\sum_{i\in\mathcal I}p_{i,n}\right]+\frac{2\kappa n^{1/2}+1}{m-1}\notag\\
&\le\E\left[\frac{1}{m-1}\sum_{i\in\mathcal I}
p_{i,n}\One{|\hat\xi_n-\xi_n'|\le n^{1/2}}\right]
+\eta+\frac{2\kappa n^{1/2}+1}{m-1}.
\label{eqn:prob_decomposition}
\end{align}
It therefore remains to control the first term on the
right-hand side.

\paragraph{Step~2: Data-dependent upper bound for the far-index
$p$-values.}

By Lemma~\ref{lem:deterministic_bound_onn_tildep}, we obtain
\begin{equation}\label{eqn:deterministic_upper_bound_learned}
p_{t,n}\le\frac{m}{t}\frac{\hat v_n}{\hat m_n\Delta_{t,\hat\xi_n}^2}\one{\Delta_{t,\hat\xi_n}<0}
+\one{\Delta_{t,\hat\xi_n}\ge0},
\end{equation}
where we define
\[
\Delta_{t,\hat\xi_n}=\frac{1}{\hat m_n}\sum_{i=\hat\xi_n+1}^t\hat\ell_n(Y_i)-\hat\mu_{t,n,L},\qquad\hat m_n=|t-\hat\xi_n|.
\]

Fix any $t\in\mathcal I$. Without loss of generality, suppose
$t>\xi_n'$; the case $t<\xi_n'$ is symmetric. On the event $\mathcal H_n:=\{|\hat\xi_n-\xi_n'|\le n^{1/2}\}$,
we have $\hat m_n=t-\hat\xi_n\ge(\kappa-1)n^{1/2}$.
Moreover, for all sufficiently large $n$,
\[
\frac{m}{t}\le\frac{m}{\xi_n'}\le\frac{2}{\tau}.
\]
Thus, it follows that
\[
\frac{m}{t\hat m_n}
\le
\frac{2}
{\tau(\kappa-1)n^{1/2}}
\le\frac{1}{2n^{1/2}}.
\]
Consequently, on $\mathcal H_n$,
\begin{equation}\label{eq:data_dependent_bound_learned}
p_{t,n}
\le
\frac{1}{2n^{1/2}}
\frac{\hat v_n}{\Delta_{t,\hat\xi_n}^2}
\one{\Delta_{t,\hat\xi_n}<0}+\one{\Delta_{t,\hat\xi_n}\ge0}.
\end{equation}

\paragraph{Step~3: Negative drift of
$\Delta_{t,\hat\xi_n}$.}

Throughout this step, we work conditionally on $\mathcal F_n$
and on the event $\mathcal H_n$. Under this conditioning,
$\hat\ell_n$, $\hat\xi_n$, and the coefficients $(a_i)$, defined below, are fixed.
Write
\[
\Delta_{t,\hat\xi_n}=\sum_{i=1}^t a_i\hat\ell_n(Y_i)=
Z_{t,n}+R_{t,n},
\]
where
\[
Z_{t,n}:=\sum_{i=1}^t a_i\ell(Y_i),\qquad R_{t,n}:=\sum_{i=1}^t a_i\{\hat\ell_n(Y_i)-\ell(Y_i)\},
\]
and 
\[
a_i=-\frac1t\one{i\le\hat\xi_n}+\left(\frac{1}{t-\hat\xi_n}-\frac1t \right) \one{i>\hat\xi_n}.
\]
We first control the oracle component $Z_{t,n}$. Since $\E_{P_0}[\ell(X)]=\mathrm{KL}(P_0\|P_1)$ and $\E_{P_1}[\ell(X)]=-\mathrm{KL}(P_1\|P_0)$, 
a direct calculation gives
\[
\E[Z_{t,n}\mid\mathcal F_n]=\begin{cases}-\dfrac{\xi_n'}{t}\,\mathrm J(P_0,P_1),
&\hat\xi_n>\xi_n',
\\[7pt]-\dfrac{\xi_n'}{t}\,\mathrm J(P_0,P_1)
+\dfrac{(\hat m_n-m_n)_+}{\hat m_n}\,\mathrm J(P_0,P_1),
&\hat\xi_n\le\xi_n',\end{cases}
\]
where $m_n=t-\xi_n'$ and $\hat m_n=t-\hat\xi_n$.
On $\mathcal H_n$, further,
\[
\frac{(\hat m_n-m_n)_+}{\hat m_n}=\frac{(\xi_n'-\hat\xi_n)_+}{t-\hat\xi_n}\le\frac{1}{\kappa-1}.
\]
Also, since $t\le m$, for all sufficiently large $n$,
\[
\frac{\xi_n'}{t}
\ge\frac{\xi_n'}{m}
\ge\frac{\tau}{2}.
\]
Together, therefore, we have
\begin{align}
\E[Z_{t,n}\mid\mathcal F_n]&\le-\left(\frac{\tau}{2}-\frac{1}{\kappa-1}\right)\mathrm J(P_0,P_1)\le
-\frac{\tau}{4}\mathrm J(P_0,P_1)\le-\frac{\tau}{2}I(P_0,P_1).
\label{eq:oracle_mean_drift_learned}
\end{align}
Moreover, note that $\mathrm{Var}(Z_{t,n}\mid\mathcal F_n)\le\sigma_\star^2\sum_{i=1}^t a_i^2$. Since
\[
\sum_{i=1}^t a_i^2=\frac{\hat\xi_n}{t^2}
+\frac{\hat\xi_n^2}{\hat m_n t^2}\le\frac1t+\frac1{\hat m_n}\le\frac{2}{(\kappa-1)n^{1/2}},
\]
we obtain
\[
\mathrm{Var}(Z_{t,n}\mid\mathcal F_n)
\le\frac{2\sigma_\star^2}{(\kappa-1)n^{1/2}}.
\]
Therefore, by Chebyshev's inequality, after suitably increasing $N_0$, on $\mathcal H_n$,
\begin{equation}\label{eq:oracle_deviation_learned}
\P\left(Z_{t,n}>-\frac{\tau}{4}I(P_0,P_1)
\,\middle|\,\mathcal F_n\right)\le\eta.
\end{equation}
We next control the error component $R_{t,n}$. Observe that $\sum_{i=1}^t|a_i|\le 2$.
Hence, on $\mathcal G_n$,
\begin{align*}
\E[|R_{t,n}|\mid\mathcal F_n]&\le 2\max_{j\in\{0,1\}}\E_{X\sim P_j}
\left[|\hat\ell_n(X)-\ell(X)|\,\middle|\,\mathcal D_n'\right]\\
&\le 2\max_{j\in\{0,1\}}\left(\E_{X\sim P_j}
\left[|\hat\ell_n(X)-\ell(X)|^2\,\middle|\,\mathcal D_n'
\right]\right)^{1/2}\le
2\max_{j\in\{0,1\}}\left(\frac{\Gamma_j}{\eta}\right)^{1/2}.
\end{align*}
Thus, by Markov's inequality and
\eqref{eq:conditional_learning_error_small}, on $\mathcal G_n$,
\begin{equation}\label{eq:learned_error_bound_conditional}
\P\left(|R_{t,n}|>\frac{\tau}{8}I(P_0,P_1)\,\middle|\,
\mathcal F_n\right)\le\eta.
\end{equation}
Combining~\eqref{eq:oracle_deviation_learned} and
\eqref{eq:learned_error_bound_conditional}, we obtain
\[
\P\left(\Delta_{t,\hat\xi_n}>-\frac{\tau}{8}I(P_0,P_1),
\,\mathcal H_n\cap\mathcal G_n\right)\le2\eta.
\]
Together with~\eqref{eq:good_learning_event}, this gives
\begin{equation}\label{claim:neg_upper_bound_on_Delta}
\P\left(\Delta_{t,\hat\xi_n}>-\frac{\tau}{8}I(P_0,P_1),\,
\mathcal H_n\right)\le4\eta.
\end{equation}

\paragraph{Step~4: Completing the proof.}

By Lemma~\ref{lem:upper_bound_on_var} and
\eqref{eq:large_n_conditions_learned},
\begin{equation}\label{eq:vhat_bound_learned}
\P\left(
\hat v_n>
\frac{8\{\sigma_\star^2+\mathrm J^2(P_0,P_1)\}}{\eta}
\right)
\le4\eta.
\end{equation}
Define
\[
\mathcal A_{t,n}:=\left\{\hat v_n
\le\frac{8\{\sigma_\star^2+\mathrm J^2(P_0,P_1)\}}{\eta}
\right\}\cap\left\{\Delta_{t,\hat\xi_n}
\le-\frac{\tau}{8}I(P_0,P_1)\right\}.
\]
On $\mathcal H_n\cap\mathcal A_{t,n}$,
\eqref{eq:data_dependent_bound_learned} gives
\begin{align}
p_{t,n}
&\le\frac{1}{2n^{1/2}}\frac{
8\{\sigma_\star^2+\mathrm J^2(P_0,P_1)\}/\eta}{\tau^2I^2(P_0,P_1)/64
}=\frac{256\{\sigma_\star^2+\mathrm J^2(P_0,P_1)\}}{\tau^2\eta I^2(P_0,P_1)n^{1/2}}.
\label{eq:far_pvalue_bound_learned}
\end{align}
Since $p_{t,n}\le1$, it follows from
\eqref{claim:neg_upper_bound_on_Delta} and
\eqref{eq:vhat_bound_learned} that
\begin{align}
\E\left[p_{t,n}\One{\mathcal H_n}\right]
&\le\frac{256\{\sigma_\star^2+\mathrm J^2(P_0,P_1)\}}{\tau^2\eta I^2(P_0,P_1)n^{1/2}}+\P(\mathcal A_{t,n}^c\cap\mathcal H_n)\notag\\
&\le\frac{256\{\sigma_\star^2+\mathrm J^2(P_0,P_1)\}
}{\tau^2\eta I^2(P_0,P_1)n^{1/2}}+8\eta.
\label{eq:expected_far_pvalue_bound}
\end{align}
This bound holds for every $t\in\mathcal I$. Substituting it into
\eqref{eqn:prob_decomposition} yields
\begin{align*}
\E\left[\frac{1}{m-1}\sum_{i=1}^{m-1}p_{i,n}\right]
&\le\frac{256\{\sigma_\star^2+\mathrm J^2(P_0,P_1)\}
}{\tau^2\eta I^2(P_0,P_1)n^{1/2}}+9\eta+\frac{2\kappa n^{1/2}+1}{m-1}.
\end{align*}
Since $m=n/2$, both terms depending on $n$ converge to zero.
Consequently,
\[
\limsup_{n\to\infty}\E\left[\frac{1}{m-1}\sum_{i=1}^{m-1}p_{i,n}\right]\le 9\eta.
\]
Because $\eta>0$ is arbitrary, the result follows.
\end{proof}

\paragraph{Proof of Theorem~\ref{thm:conch-consistency}.}
Let us write 
\[
\mathcal C^{(2),\conch}_{m,1-\alpha}
:=\{t\in[m-1]:p_{t,n}>\alpha\}
\]
for the \conch{} confidence sets defined on the even subsequence $\mathcal{D}_{n,2}$. Recall from Section~\ref{app:oracle_sharpness}, 
\[
|\,\mathcal{C}^{\conch\text{-SPLIT}}_{n,1-\alpha}\,|\le 2|\,\mathcal C^{(2),\conch}_{m,1-\alpha}\,|+2.
\]
Fix $\eta>0$. By Markov's inequality, we have that 
\[
\P\left(\frac{|\,\mathcal C^{(2),\conch}_{m,1-\alpha}\,|}{n-1}\ge \eta\right)\le \frac{\E\left[|\,\mathcal C^{(2),\conch}_{m,1-\alpha}\,|\right]}{(n-1)\eta}.
\]
Next, note that 
\[
\frac{1}{m-1}\,\E\left[|\,\mathcal C^{(2),\conch}_{m,1-\alpha}\,|\right]\ = \ \frac{1}{m-1}\, \sum_{i=1}^{m-1} \P\left(p_{i,n}> \alpha\right).
\]
By applying Markov's inequality again, we have
\begin{align*}
    \frac{1}{m-1}\,\E\left[|\,\mathcal C^{(2),\conch}_{m,1-\alpha}\,|\right]\le \frac{1}{\alpha}\cdot \frac{1}{(m-1)}\sum_{i=1}^{m-1} \E[p_{i,n}].
\end{align*}
Hence, by Theorem~\ref{thm:conch_consistency_non_uniform}, it follows that
\[
\P\left(\frac{|\,\mathcal C^{(2),\conch}_{m,1-\alpha}\,|}{n-1}\ge \eta\right)\longrightarrow 0,\qquad \text{as}~n\to\infty.
\]
Since $\eta$ is arbitrary, the result follows.
\hfill$\square$

\subsubsection{Asymptotic sharpness of \conch{}-SPLIT under local alternatives}
\label{app:local_oracle_sharpness}

We now extend the asymptotic sharpness results to a local-alternative setting in which the pre- and post-change distributions vary with the sample size and become progressively less distinguishable. For each $n$, suppose that
\[
(X_{1,n},\ldots,X_{\xi_n,n})\sim P_{0,n}^{\xi_n},
\qquad
(X_{\xi_n+1,n},\ldots,X_{n,n})\sim P_{1,n}^{(n-\xi_n)}.
\]
Let $f_{0,n}$ and $f_{1,n}$ denote densities with respect to a common dominating measure, and define the oracle log-likelihood ratio
\[
\ell_n(x):=\log\frac{f_{0,n}(x)}{f_{1,n}(x)}.
\]
We further write the KL measures
\[
a_n:=\mathrm{KL}(P_{1,n}\|P_{0,n}),
\qquad
b_n:=\mathrm{KL}(P_{0,n}\|P_{1,n}),
\qquad
I_n:=a_n\wedge b_n.
\]
Thus, $I_n$ quantifies the separation between the pre- and post-change distributions.
As in the preceding sections, we assume for simplicity that $n$ is even and write $m=n/2$. Moreover, we use the same interlaced split
\[
\mathcal D_{n,1}= (X_{1,n},X_{3,n},\ldots,X_{n-1,n}),
\qquad
\mathcal D_{n,2}= (Y_{1,n},\ldots,Y_{m,n}),
\]
where $Y_{i,n}:=X_{2i,n}$. The changepoint on the calibration subsequence $\mathcal{D}_{2,n}$ is given by $\xi_n':=\left\lfloor\frac{\xi_n}{2}\right\rfloor$. Using the training split, as before, we compute the oracle reference MLE
\[
\hat\xi_n
\in
\argmax_{s\in[m-1]}
\sum_{i=1}^{s}\ell_n(X_{2i-1,n}).
\]
We first consider the sharpness result for the oracle LLR score, i.e., we run \conch{} on the calibration split using the score in~\eqref{score:CPP_split_learnt} with $\hat\ell_n\equiv\ell_n$. We write $(p_{1,n},\ldots,p_{m-1,n})$ for the resulting \conch{} $p$-values.

To prove the asymptotic sharpness result, we impose the following assumptions:
\begin{enumerate}
    \item The changepoint remains in the interior of the timeline, i.e., $\frac{\xi_n}{n}\to\tau $ for some $\tau\in(0,1)$.

    \item The signal strength satisfies $I_n\longrightarrow 0$ and $ nI_n\longrightarrow\infty.$

    \item The two KL divergences are of comparable order: there exists a constant $C_I<\infty$ such that, for all sufficiently large $n$, $a_n\vee b_n\leq C_I I_n$.

    \item There exist constants $c,B,\gamma>0$ such that, for every $n$ and every $r\in\mathbb N$,
    \begin{align}\label{eq:local_maximal_drift}
    &\P_{1,n}\left(\sup_{k\geq r}\frac1k\sum_{i=1}^{k}\ell_n(X_i)\geq
    -c a_n\right)\leq Be^{-\gamma ra_n},~~\text{and}
    \\
    &\P_{0,n}\left(\inf_{k\geq r}\frac1k\sum_{i=1}^{k}\ell_n(Y_i)\leq c b_n\right)\leq Be^{-\gamma  rb_n},
    \notag
    \end{align}
    where $X_1,X_2,\ldots\overset{iid}{\sim}P_{1,n}$ and
    $Y_1,Y_2,\ldots\overset{iid}{\sim}P_{0,n}$.

    \item There exists a constant $C_{\mathrm v}<\infty$ such that, for all sufficiently large $n$,
    \begin{equation}\label{eq:local_variance_assumption}
        \mathrm{Var}_{P_{0,n}}(\ell_n(X))\vee \mathrm{Var}_{P_{1,n}}(\ell_n(X))\leq
    C_{\mathrm v}I_n.
    \end{equation}
\end{enumerate}

The maximal inequalities in~\eqref{eq:local_maximal_drift} are local-alternative analogues of Lemma~\ref{lem:negative_drift_partial_sums}, used in the fixed-distribution setting. Rest of the assumptions are either carried forward from the fixed distribution setting, or mild regularity assumption on the underlying distributions.

\begin{theorem}[Oracle sharpness under local alternatives]
\label{thm:local_oracle_sharpness}
Under the preceding assumptions, for every $\alpha,\eta\in(0,1)$, there exists $M<\infty$ such that
\[
\limsup_{n\to\infty}\,\P\biggl(\max_{\substack{t\in[m-1]\\ |t-\xi_n'|\geq M/I_n}}p_{t,n}>\alpha\biggr)\leq 10\eta.
\]
Consequently, 
the \conch{}-SPLIT confidence set satisfies $\left|\mathcal C^{\conch\text{-SPLIT}}_{n,1-\alpha}\right|=O_P(I_n^{-1})$.
In particular, since $nI_n\to\infty$, we further have
\[
\frac{\left|\mathcal C^{\conch\text{-SPLIT}}_{n,1-\alpha}\right|}{n-1}\xrightarrow{P}0.
\]
\end{theorem}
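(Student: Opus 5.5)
I would follow the three-step template of the proof of Theorem~\ref{thm:conch_consistency_oracle}, rescaling every distance by $1/I_n$ in place of $O(1)$. By the time-reversal symmetry of Remark~\ref{remark:symmetry}, it suffices to control $\max_{t\ge \xi_n'+M/I_n} p_{t,n}$.

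\textbf{Step~1: localize $\hat\xi_n$ at scale $1/I_n$.} I would redo the union-bound argument of Theorem~\ref{thm:mle_oracle_consistency}, using the maximal inequalities \eqref{eq:local_maximal_drift} instead of Lemma~\ref{lem:negative_drift_partial_sums}. The event $\{\hat\xi_n\ge \xi_n''+r\}$ forces some partial sum $\sum_{s=\xi_n''+1}^{t}\ell_n(X_{2s-1,n})\ge 0$ with $t-\xi_n''\ge r$. Its probability is therefore at most $Be^{-\gamma r a_n}$, and this needs no summation over $t$ because the bound is already a supremum. Taking $r=C_0/I_n$ and using $a_n\ge I_n$ gives $\P(|\hat\xi_n-\xi_n'|>C_0/I_n+1)\le 2Be^{-\gamma C_0}\le\eta$. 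On this event I would not sum over the values $k_n$ as in \eqref{eq:total_prob_bound}, since there are now $\asymp 1/I_n$ of them. Instead, I would condition on $\mathcal D_{n,1}$ as in the learned-score proof: $\hat\xi_n$ is then fixed and independent of $\mathcal D_{n,2}$.

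\textbf{Step~2: bound the $p$-values.} I would use Lemma~\ref{lem:deterministic_bound_onn_tildep} again. For $t\ge\xi_n'+M/I_n$ we have $\hat m_n\ge (M-C_0)/I_n$ and $m/t\le 2/\tau$, so
\[
p_{t,n}\le \frac{2}{\tau}\frac{\hat v_n I_n}{(M-C_0)\Delta_{t,\hat\xi_n}^2}
\]
whenever $\Delta_{t,\hat\xi_n}<0$. Using \eqref{eq:local_variance_assumption} and the comparability $a_n\vee b_n\le C_I I_n$, the second moment of $\ell_n$ is $O(I_n+I_n^2)=O(I_n)$. Markov's inequality then gives $\hat v_n\le C I_n/\eta$ with probability $1-\eta$. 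It therefore suffices to show $\max_t\Delta_{t,\hat\xi_n}\le -c' I_n$ with high probability, because then
\[
p_{t,n}\lesssim \frac{I_n^2}{\eta\tau(M-C_0)c'^2I_n^2}\le\alpha
\]
for $M$ large.

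\textbf{Step~3: uniform negative drift, the main obstacle.} I would decompose $\Delta_{t,\hat\xi_n}=S_1+S_2+S_3$ as in \eqref{eq:decomposition_Delta}.

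The term $S_1=-t^{-1}\sum_{i<\xi_n'-C_0/I_n}\ell_n(Y_i)$ is an average of order $\xi_n'\asymp n$ terms with mean $b_n\ge I_n$. By the second inequality in \eqref{eq:local_maximal_drift} with $r\asymp n$, and since $nI_n\to\infty$, it is at most $-(\tau/8)c\,b_n$ with probability $1-Be^{-\gamma\tau n b_n/8}\to 1$.

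The term $S_3$ is $\le 0$ uniformly in $t$, again by \eqref{eq:local_maximal_drift}.

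The difficulty is $S_2$, the contribution of the $O(1/I_n)$ indices between $\hat\xi_n$ and $\xi_n'$. Bounding it by absolute values, as in the fixed-distribution proof, is too crude. Instead I would bound it by its mean plus fluctuation, weighted by $|a_i|\le I_n/(M-C_0)$. The mean is at most $(2C_0/I_n)\cdot\frac{I_n}{M-C_0}\cdot C_I I_n=O(I_n/M)$. Its variance is at most $(2C_0/I_n)\cdot\frac{I_n^2}{(M-C_0)^2}\cdot C_{\mathrm v}I_n=O(I_n^2/M^2)$. Chebyshev's inequality then gives $S_2\le (\tau c/16)I_n$ with probability $1-\eta$ once $M$ is large.

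This needs the bound on $S_2$ to hold uniformly over $t$. That follows because $a_i$ restricted to these indices is monotone in $t$, so the supremum is controlled at $t=\xi_n'+M/I_n$. Collecting the error probabilities ($\eta$ each from localization, $\hat v_n$, $S_2$ and $S_3$, and a vanishing term from $S_1$) gives the $10\eta$ bound.

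The cardinality claims then follow exactly as at the end of Theorem~\ref{thm:conch_consistency_oracle}, via \eqref{eqn:conch_split_CI}, together with $nI_n\to\infty$.
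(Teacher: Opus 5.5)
Your proposal is correct and follows the paper's proof essentially step for step. The shared steps are: $O_P(I_n^{-1})$ localization of $\hat\xi_n$ from the maximal inequalities; conditioning on $\mathcal D_{n,1}$ rather than summing over the possible values of $\hat\xi_n$; the Chebyshev bound of Lemma~\ref{lem:deterministic_bound_onn_tildep} with $\hat v_n=O_P(I_n)$; and the three-block decomposition of $\Delta_{t,\hat\xi_n}$, with the boundary block controlled through its mean and variance, as the paper does (you are right that the fixed-distribution absolute-value bound would be too weak here). One minor quibble: Remark~\ref{remark:symmetry} concerns full \conch{}, and reversal swaps the odd and even folds. The case $t<\xi_n'$ is therefore better justified as the mirror-image argument, using the $k<\hat\xi_n$ branch of the lemma together with the other inequality in \eqref{eq:local_maximal_drift}; this is what the paper's ``by symmetry'' means.
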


\begin{proof}
The proof closely follows that of Theorem~\ref{thm:conch_consistency_oracle} for the fixed-distribution setting. We therefore only highlight the main changes required under local alternatives.

Fix $\alpha,\eta\in(0,1)$. By symmetry, it suffices to show that, for a sufficiently large constant $M$,
\[
\limsup_{n\to\infty}\P\left(\max_{\substack{t\in[m-1]\\ t\geq \xi_n'+M/I_n}} p_{t,n}>\alpha \right) \leq 5\eta.
\]

We first establish the consistency of the reference MLE, following the proof of Theorem~\ref{thm:mle_oracle_consistency}. Let $\xi_n''$ denote the changepoint location on the odd subsequence $(X_{1,n},X_{3,n},\ldots)$, and note that
\[
|\xi_n'-\xi_n''|\leq1.
\]
For any $u>0$, by~\eqref{eq:local_maximal_drift},
\begin{align*}
\P\left(\hat\xi_n\geq\xi_n''+\frac{u}{I_n}\right)
&\leq \P_{1,n}\left(\sup_{m-\xi_n''\geq r\geq u/I_n}
\frac1r\sum_{i=1}^{r}\ell_n(X_{2(\xi_n''+i)-1,n})
\geq0\right)\\
&\leq B\exp\left( -\gamma u\frac{a_n}{I_n} \right)
\leq Be^{-\gamma u}.
\end{align*}
An analogous argument controls the left tail. Consequently,
\[
|\hat\xi_n-\xi_n''|=O_P(I_n^{-1}),\qquad\text{and hence}\qquad |\hat\xi_n-\xi_n'|=O_P(I_n^{-1}).
\]
This explains the $I_n^{-1}$ localization rate appearing in the theorem.

Choose $C_0<\infty$ such that
\[
\limsup_{n\to\infty}\P\left(|\hat\xi_n-\xi_n'|>\frac{C_0}{I_n}\right)\leq\eta,
\]
and define
\[
\mathcal E_n:=\left\{|\hat\xi_n-\xi_n'|
\leq\frac{C_0}{I_n}\right\}.
\]
Conditional on the training split $\mathcal D_{n,1}$, the reference estimator $\hat\xi_n$ is fixed and independent of the calibration split $\mathcal D_{n,2}$. Fix $t\geq\xi_n'+M/I_n$, where $M>C_0+2$. On $\mathcal E_n$,
\[
\hat m_{t,n}:=t-\hat\xi_n \geq \frac{M-C_0}{I_n}.
\]
Moreover, since $\xi_n'/m\to\tau$, there exists a constant $C_\tau<\infty$ such that $m/t\leq C_\tau$ for all sufficiently large $n$.
For $t>\hat\xi_n$, define
\[
\Delta_{t,\hat\xi_n} := \frac{1}{t-\hat\xi_n}
\sum_{i=\hat\xi_n+1}^{t}\ell_n(Y_{i,n})- \frac1t\sum_{i=1}^{t}\ell_n(Y_{i,n}).
\]
By Lemma~\ref{lem:deterministic_bound_onn_tildep}, thus we have
\[
p_{t,n} \leq \frac{m}{t}
\frac{\hat v_n}{\hat m_{t,n}\Delta_{t,\hat\xi_n}^{\,2}}
\one{\Delta_{t,\hat\xi_n}<0}+ \one{\Delta_{t,\hat\xi_n}\geq0},
\]
where we define $\hat v_n:=\frac1m\sum_{i=1}^{m}\ell_n^2(Y_{i,n})$.

We next control $\hat v_n$. By~\eqref{eq:local_variance_assumption} and the comparability of $a_n$ and $b_n$ with $I_n$, for all sufficiently large $n$ and $j\in\{0,1\}$,
\[
\E_{P_{j,n}}[\ell_n^2(X)]=\mathrm{Var}_{P_{j,n}}(\ell_n(X))+\{\E_{P_{j,n}}[\ell_n(X)]\}^2 \leq C_2I_n
\]
for some constant $C_2<\infty$. Hence, by Markov's inequality,
\[
\P\left(\hat v_n>\frac{C_2I_n}{\eta}\right)\leq\eta.
\]

It remains to establish a uniform negative-drift bound for $\Delta_{t,\hat\xi_n}$. As in Step~3 of the proof of Theorem~\ref{thm:conch_consistency_oracle}, on $\mathcal E_n$ we decompose
\[
\Delta_{t,\hat\xi_n}=A_{1,t}+A_{2,t}+A_{3,t},
\]
where $A_{1,t}$ contains observations lying sufficiently far to the left of $\xi_n'$, $A_{2,t}$ contains the boundary block between $\xi_n'-C_0/I_n$ and $\xi_n'+C_0/I_n$, and $A_{3,t}$ contains observations sufficiently far to the right of $\xi_n'$.

The argument is identical to the fixed-distribution proof, except that Lemma~\ref{lem:negative_drift_partial_sums} is replaced by~\eqref{eq:local_maximal_drift}. In particular, there exists a constant $c_\tau>0$ such that, 
\[
\P\left(\sup_{t\geq\xi_n'+M/I_n}A_{1,t}
\leq -c_\tau I_n\right)\to 1.
\]
Similarly, the second inequality in~\eqref{eq:local_maximal_drift} yields
\[
\P\left(\sup_{t\geq\xi_n'+M/I_n}A_{3,t}>0\right)
\leq B\exp\{-\gamma(M-C_0)\}.
\]

Indeed, the boundary block $A_{2,t}$ contains $O(I_n^{-1})$ terms, while the corresponding coefficients are uniformly bounded by $C I_n/(M-C_0)$. Moreover, by the comparability and variance
assumptions,
\[
\max_{j\in\{0,1\}}\left\{|\E_{P_{j,n}}[\ell_n(X)]|,
\mathrm{Var}_{P_{j,n}}(\ell_n(X))\right\}=O(I_n).
\]
Therefore, we can prove
\[
\sup_{t\ge \xi_n'+M/I_n}|A_{2,t}|
=
O_P\left(\frac{I_n}{M-C_0}\right).
\]

Therefore, by choosing $M$ sufficiently large, we obtain
\[
\limsup_{n\to\infty}\P\left(\sup_{t\geq\xi_n'+M/I_n}
\Delta_{t,\hat\xi_n}>-\frac{c_\tau}{2}I_n,\ \mathcal E_n
\right)\leq 2\eta+ B\exp\{-\gamma(M-C_0)\}.
\]
Choose $M$ large enough that $B\exp\{-\gamma(M-C_0)\}\leq\eta$ and $\frac{4C_\tau C_2}{\eta c_\tau^2(M-C_0)}\leq\alpha$.
On the event 
\[
\left\{\hat v_n\leq\frac{C_2I_n}{\eta}\right\}\cap
\left\{\Delta_{t,\hat\xi_n}\leq-\frac{c_\tau}{2}I_n\right\}\cap \mathcal E_n
\]
it holds uniformly over $t\geq\xi_n'+M/I_n$ that 
\[
p_{t,n}\leq C_\tau \frac{C_2I_n/\eta} {\{(M-C_0)/I_n\}\{c_\tau^2I_n^2/4\}}=\frac{4C_\tau C_2}{\eta c_\tau^2(M-C_0)}\leq\alpha.
\]
Consequently,
\[
\limsup_{n\to\infty} \P\left( \max_{\substack{t\in[m-1]\\t\geq\xi_n'+M/I_n}} p_{t,n}>\alpha
\right) \leq 5\eta, 
\]
as required.
This completes the first part. The second part follows by an argument similar to that in the proof of fixed-distribution result.
\end{proof}

Similar to the oracle LLR score, one may establish the asymptotic sharpness of \conch{}-SPLIT with a learned LLR score by suitably adapting the proof in Appendix~\ref{app:proof_of_consistency}. In addition to the assumptions stated above, one might need to assume a local-alternative analogue of the learnability condition for $\ell_n$. For instance, in the setting of Section~\ref{sec:sharpness_learned_llr}, where $\hat{\ell}_n$ is learned from an independent dataset, we may assume that, for each $P\in\{P_{0,n},P_{1,n}\}$,
\begin{equation*}
\label{eq:local_llr_learnability}
\frac{\left(\E_{\substack{\mathcal D_n^\prime,\,X\sim P,\\X\independent\mathcal D_n^\prime}}
\left[\bigl|\hat{\ell}_n(X)-\ell_n(X)\bigr|^2
\right]\right)^{1/2}}{I_n}\longrightarrow 0
\qquad\text{as }n\to\infty.
\end{equation*}
Since the proof would be an adaptation of the argument in Appendix~\ref{app:proof_of_consistency}, we omit these formal details.
\subsubsection{Asymptotic sharpness with \conch-MC procedure}\label{app:mc_sharpness_theory}

The asymptotic sharpness results presented so far are stated for the full-permutation \conch{} procedure. In practice, however, one would almost always employ the Monte--Carlo approximation \conch-MC (cf. \Cref{alg:conch_MC}) introduced in Section~\ref{sec:conch_mc}. It is therefore natural to ask whether the sharpness guarantees established in Section~\ref{sec:sharpness_learned_llr} and Appendix~\ref{app:consistency} continue to hold for this variant.

Recall that given any CPP score function $S:\mathcal{X}^n\to \R^{n-1}$, for a candidate changepoint $t$, the \conch-MC $p$-value is given by
\[
\hat p_{t,n}=\frac{1+\sum_{m=1}^{M_n}\One{S_t(\pi^{(m)}(\bX)) \leq S_t(\bX)}}
{M_n+1},
\]
where $\pi^{(1)},\ldots,\pi^{(M_n)}$ are drawn i.i.d.\ from the corresponding conditional permutation distribution. As the sample size $n$ increases, the size of the permutation group also grows. Consequently, the number of Monte--Carlo samples $M_n$ must scale with $n$ to ensure that the Monte--Carlo approximation remains accurate.
\begin{lemma}
For fixed $n\in \N$, it holds that as $M_n\to\infty$,
\[
\sup_{t\in[n-1]}\left|\hat p_{t,n}-p_{t,n}\right|=O_P\!\left(\sqrt{\frac{\log n}{M_n}}\right).
\]
Further, suppose $n\to\infty$ and $M_n\to \infty$ such that $\log{n}/M_n\to 0$ as $n\to\infty$. Then,
\[
\sup_{t\in[n-1]}\left|\hat p_{t,n}-p_{t,n}\right|=o_P(1).
\]
\end{lemma}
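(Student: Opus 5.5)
Conditional on the data $\bX$ (and, for \conch{}-SPLIT, on the training split and $\mathcal D_n'$), the full-permutation $p$-value $p_{t,n}$ is a deterministic number, while $\hat p_{t,n}$ is built from i.i.d.\ Bernoulli indicators. Write
\[
B_{t,k}:=\One{S_t(\pi^{(k)}_t(\bX))\le S_t(\bX)},\qquad k\in[M_n],
\]
where $\pi^{(1)}_t,\ldots,\pi^{(M_n)}_t\overset{iid}{\sim}\mathrm{Unif}(\Pi_t)$ are drawn independently of $\bX$. Conditional on $\bX$, the $B_{t,k}$ are i.i.d.\ $\mathrm{Bernoulli}(p_{t,n})$, because $p_{t,n}$ in \eqref{eq:pvalue_conch} is exactly the uniform-permutation average of the same indicator. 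The plan is therefore a conditional Hoeffding bound for each $t$, a union bound over $t\in[n-1]$, and finally removal of the conditioning.

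\textbf{Step 1 (bias).} Decompose
\[
\hat p_{t,n}-p_{t,n}=\frac{M_n}{M_n+1}\Bigl(\bar B_t-p_{t,n}\Bigr)+\frac{1-p_{t,n}}{M_n+1},\qquad \bar B_t:=\frac1{M_n}\sum_{k=1}^{M_n}B_{t,k}.
\]
The second term is deterministically at most $1/(M_n+1)$. Since $\log n\ge \log 2$, this is $O(\sqrt{\log n/M_n})$, so it can be absorbed into the stated rate.

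\textbf{Step 2 (concentration and union bound).} By Hoeffding's inequality, conditional on $\bX$,
\[
\P\bigl(|\bar B_t-p_{t,n}|>u\,\big|\,\bX\bigr)\le 2e^{-2M_nu^2}.
\]
A union bound over the $n-1$ candidate indices gives
\[
\P\Bigl(\sup_t|\bar B_t-p_{t,n}|>u\,\Big|\,\bX\Bigr)\le 2(n-1)e^{-2M_nu^2}.
\]
The union bound is valid whether the permutation draws are shared across $t$ or drawn separately for each $t$, so no independence across $t$ is needed. Taking $u=\sqrt{(\log n+\log(2/\epsilon))/(2M_n)}$ makes the right-hand side at most $\epsilon$. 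This bound does not depend on $\bX$, so integrating over $\bX$ (tower property) gives the same unconditional bound. Since $\log n+\log(2/\epsilon)\le C_\epsilon\log n$ for $n\ge2$, we get $\sup_t|\hat p_{t,n}-p_{t,n}|\le C'_\epsilon\sqrt{\log n/M_n}$ with probability at least $1-\epsilon$. This is the $O_P$ claim, and the bound is uniform in $n$, so it also holds for fixed $n$ as $M_n\to\infty$.

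\textbf{Step 3 (second claim).} If $\log n/M_n\to0$, the bound from Step 2 is $o(1)$ with probability at least $1-\epsilon$ for every $\epsilon$, which gives the $o_P(1)$ statement.

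\textbf{Main obstacle.} The only delicate point is justifying the conditional i.i.d.\ Bernoulli structure. This requires that the permutations are drawn independently of the data, and, for \conch{}-SPLIT, that the learned score is held fixed by conditioning on $\mathcal F_n$. The bound must also hold uniformly over $t$ with the same $M_n$, which is why the $\log n$ factor appears. Once this is established, the sharpness theorems carry over to \conch{}-MC: on the event in Step 2, $\hat p_{t,n}\le p_{t,n}+o(1)$ uniformly in $t$, so the far-index $p$-value bounds remain below $\alpha$ with high probability.
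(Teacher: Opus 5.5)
Your proposal is correct and follows essentially the same argument as the paper. The paper also removes the $+1$ correction at cost at most $1/(M_n+1)$, applies a conditional Hoeffding bound to the indicators (i.i.d.\ $\mathrm{Bernoulli}(p_{t,n})$ given the data), and takes a union bound over $t\in[n-1]$; your explicit choice of $u$ and the tower-property step that removes the conditioning simply fill in details the paper leaves implicit.
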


\begin{proof}
We start by defining
\[
p^\dagger_{t,n}=\frac{1}{M_n}\sum_{m=1}^{M_n}\One{S_t(\pi^{(m)}(\bX)) \leq S_t(\bX)}.
\]
Note that $\hat{p}$ and $p^\dagger$ only differs in the `+1' correction, required for finite sample validity of $\hat{p}$. Further, one can easily check that
\begin{align}\label{eq:p_dagger_and_hat}
    \left|\hat p_{t,n}-p^\dagger_{t,n}\right|
&=\left|\frac{1+\sum_{m=1}^{M_n}\One{S_t(\pi^{(m)}(\bX)) \leq S_t(\bX)}}{M_n+1}-\frac{1}{M_n}\sum_{m=1}^{M_n}\One{S_t(\pi^{(m)}(\bX)) \leq S_t(\bX)}\right|\notag\\ &\leq \frac{1}{M_n+1}.
\end{align}
Therefore, it suffices to study the concentration of $p^\dagger_{t,n}$ around the exact \conch{} $p$-value $p_{t,n}$.

Now fix a $n$ and $t\in [n-1]$. We note that $p^\dagger_{t,n}$ is an empirical average, that is
\[
p^\dagger_{t,n}=\frac{1}{M_n}\sum_{m=1}^{M_n} Z_{m,t},
\qquad Z_{m,t}:=\One{S_t(\pi^{(m)}(\bX)) \leq S_t(\bX)},
\]
where conditional on the observed data sequence $\mathcal{D}_n$, $Z_{1,t},\ldots,Z_{M_n,t}$ are i.i.d.\ Bernoulli random variables with mean $p_{t,n}$. By Hoeffding's inequality, therefore
\[
\P\left(\left|p^\dagger_{t,n}-p_{t,n}\right|>\varepsilon\;\middle|\;\mathcal D_n\right)
\leq 2\,\exp(-2M_n\varepsilon^2).
\]
Moreover, applying a union bound over $t\in[n-1]$ yields
\[
\P\left(\sup_{t\in[n-1]}\left|p^{\dagger}_{t,n}-p_{t,n}\right|>\varepsilon\;\middle|\;
\mathcal D_n\right)\leq 2\,n\exp(-2M_n\varepsilon^2).
\]
Together with~\eqref{eq:p_dagger_and_hat}, this completes the proof.
\end{proof}

The above lemma shows that the Monte--Carlo $p$-values uniformly approximate the exact \conch{} $p$-values whenever $M_n$ grows faster than $\log n$. It further implies that, as long as $M_n \gg \log n$, the asymptotic sharpness results continue to hold for the \conch-MC procedure.
\subsubsection{Auxiliary lemmas}\label{app:lemma_consistency}

\begin{lemma}\label{lem:negative_drift_partial_sums}
Suppose that $X_1,\ldots,X_k\overset{iid}{\sim}P_1$ and
$Y_1,\ldots,Y_k\overset{iid}{\sim}P_0$, and that $0<\mathrm{KL}(P_0\|P_1),\ \mathrm{KL}(P_1\|P_0)<\infty$.
Then there exist constants $c,\gamma>0$, depending only on $P_0$ and $P_1$, such that, for every $k\in\mathbb N$,
\begin{align*}
    \P\left(\frac{1}{k}\sum_{i=1}^k\ell(X_i)\geq-c\,I (P_0,P_1)\right)\leq e^{-\gamma k\,I(P_0,P_1)},~~\text{and}\\
    \P\left(\frac{1}{k}\sum_{i=1}^k\ell(Y_i)\leq c\,I(P_0,P_1)\right)\leq e^{-\gamma k\,I(P_0,P_1)}.
\end{align*}
\end{lemma}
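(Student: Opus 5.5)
The plan is a Chernoff (exponential Markov) bound, applied separately to each tail. The only structural fact needed is that the moment generating function of the log-likelihood ratio is automatically finite on a one-sided interval, with no extra moment assumption.

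For the first inequality, with $X\sim P_1$, set
\[
\psi_1(\lambda):=\log \E_{P_1}\bigl[e^{\lambda \ell(X)}\bigr]=\log \E_{P_1}\bigl[(f_0/f_1)^{\lambda}(X)\bigr],\qquad \lambda\in[0,1].
\]
By Jensen's inequality (or H\"older), $\E_{P_1}[(f_0/f_1)^\lambda]\le (\E_{P_1}[f_0/f_1])^\lambda\le 1$. Hence $\psi_1$ is finite, convex and nonpositive on $[0,1]$, with $\psi_1(0)=0$.

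For any $\lambda\in(0,1]$, Markov's inequality applied to $e^{\lambda\sum_i \ell(X_i)}$ gives
\[
\P\Bigl(\tfrac1k\textstyle\sum_{i=1}^k \ell(X_i)\ge -cI\Bigr)\le \exp\bigl\{k\bigl(\psi_1(\lambda)+\lambda cI\bigr)\bigr\},
\]
where $I:=I(P_0,P_1)$. I will take $c=1/2$. It then suffices to find a fixed $\lambda_1\in(0,1]$ with $\psi_1(\lambda_1)+\lambda_1 I/2<0$. Setting
\[
\gamma_1:=-\bigl(\psi_1(\lambda_1)+\lambda_1 I/2\bigr)/I>0,
\]
the bound becomes $e^{-\gamma_1 kI}$, and $\gamma_1$ depends only on $P_0,P_1$.

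The key step, and the main technical obstacle, is to show that the right derivative of $\psi_1$ at $0$ equals $\E_{P_1}[\ell(X)]=-\mathrm{KL}(P_1\|P_0)\le -I$. Granting this, $\psi_1(\lambda)=-\lambda\,\mathrm{KL}(P_1\|P_0)+o(\lambda)$ as $\lambda\downarrow 0$. Therefore $\psi_1(\lambda)+\lambda I/2\le -\lambda I/2+o(\lambda)<0$ for all small $\lambda$, since $I>0$.

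To justify the derivative, I will write
\[
\frac{e^{\lambda\ell}-1}{\lambda}.
\]
By convexity of $\lambda\mapsto e^{\lambda\ell}$, this quotient is nondecreasing in $\lambda$. So it decreases to $\ell$ as $\lambda\downarrow0$, and it is bounded above by $e^{\ell}-1=f_0/f_1-1$, which is $P_1$-integrable. On the set $\{f_0=0\}$, where $\ell=-\infty$, the quotient equals $-1/\lambda$, which also decreases to $\ell=-\infty$. That set has $P_1$-mass zero anyway, because $\mathrm{KL}(P_1\|P_0)<\infty$. Monotone convergence, applied to the decreasing family dominated above by $f_0/f_1-1$, then gives
\[
\frac{\E_{P_1}[e^{\lambda\ell}]-1}{\lambda}\downarrow \E_{P_1}[\ell]=-\mathrm{KL}(P_1\|P_0)\in(-\infty,0).
\]
Since $\psi_1(\lambda)=\log(1+\lambda\cdot\text{that quotient})$, the claimed expansion of $\psi_1$ follows.

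The second inequality is symmetric. With $Y\sim P_0$, I will use
\[
\psi_0(\lambda):=\log\E_{P_0}\bigl[e^{-\lambda\ell(Y)}\bigr]=\log\E_{P_0}\bigl[(f_1/f_0)^\lambda\bigr]\le 0,\qquad \lambda\in[0,1],
\]
together with the Chernoff bound for $\P\bigl(-\sum_i\ell(Y_i)\ge -kI/2\bigr)$. The same monotone-convergence argument gives $\psi_0'(0^+)=-\mathrm{KL}(P_0\|P_1)\le -I$, and this yields some $\gamma_0>0$. Finally, I will take $\gamma:=\min(\gamma_0,\gamma_1)$ and $c=1/2$, which gives both bounds simultaneously for every $k\in\N$.
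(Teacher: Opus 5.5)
Your proposal is correct and follows essentially the same route as the paper: a Chernoff bound using that $\E_{P_1}[e^{\theta\ell(X)}]\le 1$ on $[0,1]$ (Hölder/Jensen), a negative right-derivative $-\mathrm{KL}(P_1\|P_0)$ at $\theta=0$ to find a good $\theta$, and then $c=1/2$ and $\gamma=\min\{\gamma_0,\gamma_1\}$. The differences are minor: your monotone-convergence argument rigorously justifies the derivative at $0$, which the paper only asserts, and you work with the threshold $-I/2$ from the start rather than first using $-\mathrm{KL}(P_1\|P_0)/2$ and then passing to $I$.
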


\begin{proof}
To prove the first part, we start with observing that for any $\theta\in (0,1)$, we have that by H\"{o}lder's inequality,
\[
M(\theta):=\E_{X\sim P_1}[e^{\theta\ell(X)}]=\int f_0^\theta(x) f_1^{1-\theta}(x)\;\mathsf{d}x\le  \left(\int f_0(x)\;\mathsf{d}x\right)^{\theta} \left(\int f_1(x)\;\mathsf{d}x\right)^{1-\theta}\le 1.
\]
Moreover, we have $M(0)=M(1)=1$ and
\[
\lim_{\theta\to 0^{+}}M'(\theta)=\E_{X\sim P_1}[\ell(X)]=-\mathrm{KL}(P_1\|P_0)<0.
\]
Now, by Chernoff's bound, we have that for any $\theta \in (0,1)$,
\begin{align*}
    \P\biggl(\,\sum_{i=1}^k \ell(X_i)\ge -(k/2)\mathrm{KL}(P_1\|P_0)\biggr)&\le \exp\left(\theta(k/2)\mathrm{KL}(P_1\|P_0)\right)\,\E_{X_{1},\cdots,X_k\sim P_1}\left[\exp\bigl(\sum_{i=1}^k \theta \ell(X_i)\bigr)\right]\\
    &\le \exp \left(k\log(M(\theta))+\theta(k/2)\mathrm{KL}(P_1\|P_0)\right).
\end{align*}
We call $f(\theta)=\log(M(\theta))+(\theta/2)\,\mathrm{KL}(P_1\|P_0)$ and note that $f(0)=0$, $f(1)=\mathrm{KL}(P_1\|P_0)/2$ and $\lim_{\theta\to 0^{+}}f'(\theta)<0$, 
implying that there exists $\theta_\star\in (0,1)$ such that $f(\theta_\star)= -\gamma_1\mathrm{KL}(P_1\|P_0)$ with $\gamma_1>0$.
Consequently, it follows that 
\[
\P\biggl(\,\frac{1}{k}\sum_{i=1}^k \ell(X_i)\ge -(1/2)\,\mathrm{KL}(P_1\|P_0)\biggr)\le e^{-\gamma_1\,k\, \mathrm{KL}(P_1\|P_0))}
\]

Therefore, the first part follows by recalling $I(P_0,P_1)=\mathrm{KL}(P_0\|P_1)\wedge \mathrm{KL}(P_1\|P_0)$. The second part follows by an analogous argument with a different $\gamma_0>0$. Finally, we choose $c=1/2$ and $\gamma=\min\{\gamma_0,\gamma_1\}$ to conclude the proof.
\end{proof}
\begin{lemma}\label{lem:L2_error_ell_n}
    Fix $\delta\in (0,1)$. In the setting of Appendix~\ref{app:consistency}, suppose $0<\Gamma_0,\Gamma_1<\infty$. Then, we have that
    \begin{multline*}
        \P\,\left(\left|\sum_{s=\xi'_n+1}^{\xi'_n+k} \hat{\ell}_n(Y_s)-\ell(Y_s)\right|\le \bigl(k+\sqrt{m}\bigr)\cdot \Gamma_1^{1/2}/\delta~~\text{for all}~~k\le m-\xi'_n\right)\ge 1-2\delta,~\text{and}\\
        \P\,\left(\left|\sum_{s=1}^{k} \hat{\ell}_n(Y_s)-\ell(Y_s)\right|\le \bigl(k+\sqrt{m}\bigr)\cdot \Gamma_0^{1/2}/\delta~~\text{for all}~~k\le \xi'_n\right)\ge 1-2\delta,
    \end{multline*}
    where $Y_1,\ldots,Y_{\xi'_n}\overset{iid}{\sim} P_0$ and $Y_{\xi'_n+1},\ldots,Y_{n}\overset{iid}{\sim} P_1$. 
\end{lemma}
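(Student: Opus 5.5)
We need to prove Lemma~\ref{lem:L2_error_ell_n}: a maximal inequality for the partial sums of $D_s:=\hat{\ell}_n(Y_s)-\ell(Y_s)$. The natural route is to condition on the auxiliary dataset $\mathcal D_n'$. Given $\mathcal D_n'$, the function $\hat\ell_n$ is fixed, and the $D_s$ for $s>\xi'_n$ are i.i.d., since $\mathcal D_n'$ is independent of $\mathcal D_n$. Write $\mu_1:=\E[D_s\mid\mathcal D_n']$ and $\gamma_1:=\E[D_s^2\mid\mathcal D_n']$, so that $\E[\gamma_1]=\Gamma_1$ and $|\mu_1|\le\gamma_1^{1/2}$ by Jensen. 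Decompose each partial sum as
\[
\sum_{s=\xi'_n+1}^{\xi'_n+k}D_s = k\mu_1 + \sum_{s=\xi'_n+1}^{\xi'_n+k}(D_s-\mu_1),
\]
a drift term plus a conditionally centered random walk. The two terms are handled separately, and each contributes one $\delta$ to the failure probability, giving the total $2\delta$.

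\textbf{Drift term.} By Markov's inequality applied to $\gamma_1^{1/2}$, together with $\E[\gamma_1^{1/2}]\le\Gamma_1^{1/2}$, we get $\P(\gamma_1^{1/2}>\Gamma_1^{1/2}/\delta)\le\delta$. Outside this event, $|k\mu_1|\le k\Gamma_1^{1/2}/\delta$ simultaneously for all $k$.

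\textbf{Centered walk.} The centered partial sums form a martingale in $k$ conditional on $\mathcal D_n'$. Their conditional variance at step $k$ is at most $k\gamma_1\le m\gamma_1$. Kolmogorov's maximal inequality gives
\[
\P\Bigl(\max_{k\le m-\xi'_n}\Bigl|\sum (D_s-\mu_1)\Bigr|>\lambda\,\Big|\,\mathcal D_n'\Bigr)\le \frac{m\gamma_1}{\lambda^2}.
\]
Taking $\lambda=\sqrt m\,\Gamma_1^{1/2}/\delta$ and then expectations over $\mathcal D_n'$ bounds the failure probability by $\delta^2\le\delta$. This uses $\E[\gamma_1]=\Gamma_1$, and the standing assumption $\Gamma_1\in(0,\infty)$ guarantees finiteness. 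On the intersection of the two good events, the partial sums are bounded by $(k+\sqrt m)\Gamma_1^{1/2}/\delta$ for all $k$, with probability at least $1-2\delta$. The pre-change block $s\le\xi'_n$ is identical, with $P_0$ and $\Gamma_0$ in place of $P_1$ and $\Gamma_1$.

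\textbf{Main obstacle.} The one delicate point is that $\Gamma_j$ is an average over $\mathcal D_n'$, not a conditional quantity, so a naive conditional bound would involve the random $\gamma_j$ instead. The proof must pass to unconditional probabilities before substituting $\Gamma_j$. For the martingale step this happens cleanly via the tower property, since the conditional bound is linear in $\gamma_1$. For the drift step it requires the Markov step on $\gamma_1^{1/2}$ described above. I expect nothing beyond Kolmogorov's inequality and Markov's inequality to be needed.
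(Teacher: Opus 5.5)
Your proof is correct and follows essentially the same route as the paper: condition on $\mathcal D_n'$, split each partial sum into a conditionally centered walk (controlled by Kolmogorov's maximal inequality) plus a drift $k\mu_1$ with $|\mu_1|\le\gamma_1^{1/2}$, and pass to unconditional probabilities via Markov's inequality over $\mathcal D_n'$. The only difference is bookkeeping. The paper applies Markov once to $\gamma_1$, obtaining the event $\gamma_1\le\Gamma_1/\delta$, and uses that event for both terms with the random Kolmogorov threshold $(m\gamma_1/\delta)^{1/2}$. You instead use a deterministic threshold and integrate the conditional Kolmogorov bound via the tower property (failure probability $\delta^2$), then apply Markov to $\gamma_1^{1/2}$ for the drift; both routes give the stated $1-2\delta$.
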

\begin{proof}
    We define the partial sum
\[
S^{(1)}_k:=\sum_{s=\xi'_n+1}^{\xi'_n+k} \hat{\ell}_n(Y_s)-\ell(Y_s),
\]
for $k=1,\ldots, m-\xi_n$. Note that conditional on $\mathcal{D}^{\prime}_n$, each of the summands has variance bounded by $\E_{X\sim P_1}[|\hat{\ell}_n(X)-\ell(X)|^2\mid \mathcal{D}^{\prime}_n]$. Fix any $\delta\in (0,1)$, and note that by Kolmogorov's inequality, 
\[
\P\,\left(\max_{k\le m-\xi'_n} |S_k^{(1)}-E[S_k^{(1)}\mid \mathcal{D}^{\prime}_n]|\ge \frac{\left(m\,\E_{X\sim P_1}[|\hat{\ell}_n(X)-\ell(X)|^2\mid \mathcal{D}^{\prime}_n]\right)^{1/2}}{\sqrt{\delta}}\ \middle| \ \mathcal{D}^{\prime}_n\right)\le \delta.
\]
Moreover, by the triangle inequality and the Cauchy-Schwarz inequality.
\begin{align*}
    |S_k^{(1)}|&\le |S_k^{(1)}-E[S_k^{(1)}\mid \mathcal{D}^{\prime}_n]|+k\,|\E_{X\sim P_1}[\hat{\ell}_n(X)-\ell(X) \mid \mathcal{D}^{\prime}_n]|\\
    &\le |S_k^{(1)}-E[S_k^{(1)}\mid \mathcal{D}^{\prime}_n]|+k\,\bigl(\E_{X\sim P_1}[|\hat{\ell}_n(X)-\ell(X)|^2\mid \mathcal{D}^{\prime}_n]\bigr)^{1/2}.
\end{align*}
By Markov's inequality, further with probability at least $1-\delta$,
\[
\E_{X\sim P_1}[|\hat{\ell}_n(X)-\ell(X)|^2\mid \mathcal{D}^{\prime}_n]\le \frac{\E_{X\sim P_1}[|\hat{\ell}_n(X)-\ell(X)|^2]}{\delta}=\frac{\Gamma_1}{\delta}.
\]
Therefore, in aggregate,
\[
\P\left(
\left|S_k^{(1)}\right|
\le\frac{k+\sqrt{m}}{\delta}\,\Gamma_1^{1/2}
\quad\text{for all }k\le m-\xi_n'
\right)
\ge 1-2\delta.
\]
This concludes the proof of the first part. The second part follows likewise.
\end{proof}

\begin{lemma}\label{lem:deterministic_bound_onn_tildep}
In the notation of Appendix~\ref{app:oracle_sharpness},
for $k\in [m-1],k\neq \hat{\xi}_n$, we have deterministically
\begin{equation}\label{eq:deterministic_bound_onn_tildep}
p_{k,n}\le \left(\frac{m}{\max\{k,m-k\}}\cdot\frac{\hat{v}_n}{|k-\hat{\xi}_n|\, \Delta^2_{k,\hat{\xi}_n}}\cdot \one{\Delta_{k,\hat{\xi}_n}<0}\right) + \one{\Delta_{k,\hat{\xi}_n}\ge 0},
\end{equation}
where
\[
\Delta_{k,\hat{\xi}_n}:=\frac{S_{k,n}(\bX)}{|k-\hat{\xi}_n|}
-\hat{\mu}_{k,n,L}\one{k\ge \hat{\xi}_n}
-\hat{\mu}_{k,n,R}\one{k< \hat{\xi}_n}.
\]
\end{lemma}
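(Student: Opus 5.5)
The plan is a one-sided Chebyshev (Cantelli) bound for the permutation distribution of the score, applied block by block. Fix $k\neq\hat\xi_n$ and set $h:=|k-\hat\xi_n|$. Conditionally on the training split and $\hat\ell_n$, the reference point $\hat\xi_n$ is fixed. The score \eqref{score:CPP_split_learnt} at $k$ involves only one side of $k$:
\begin{itemize}
\item if $k>\hat\xi_n$, then $S_{k,n}(\boldsymbol Y)=\sum_{i=\hat\xi_n+1}^{k}\hat\ell_n(Y_i)$ uses only indices in $[k]$;
\item if $k<\hat\xi_n$, then $S_{k,n}(\boldsymbol Y)=-\sum_{i=k+1}^{\hat\xi_n}\hat\ell_n(Y_i)$ uses only indices in $[m]\setminus[k]$.
\end{itemize}
So a uniform $\pi\in\Pi^{(2)}_k$ acts on the score only through a uniform permutation of one block. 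The permuted score is then a sum over a uniformly random size-$h$ subset, drawn without replacement, from the $k$ left values $\{\hat\ell_n(Y_i)\}_{i\le k}$, or from the $m-k$ right values $\{-\hat\ell_n(Y_i)\}_{i>k}$. Its mean is $h\hat\mu_{k,n,L}$ or $h\hat\mu_{k,n,R}$. By the definition of $\Delta_{k,\hat\xi_n}$, the observed score equals this mean plus $h\Delta_{k,\hat\xi_n}$.

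When $\Delta_{k,\hat\xi_n}\ge0$ I use the trivial bound $p_{k,n}\le1$. When $\Delta_{k,\hat\xi_n}<0$, note that $p_{k,n}=\P_\pi\bigl(S_{k,n}(\pi\boldsymbol Y)-\text{mean}\le h\Delta_{k,\hat\xi_n}\bigr)$. Cantelli's inequality then gives $p_{k,n}\le \mathrm{Var}_\pi/(h^2\Delta^2_{k,\hat\xi_n})$.

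The variance of a without-replacement sum of size $h$ from a population of size $N$ (with $N=k$ or $N=m-k$) is $h\frac{N-h}{N-1}s_N^2\le h\,s_N^2$, where $s_N^2$ is the population variance. This variance is at most the block second moment, $\frac1N\sum_{\text{block}}\hat\ell_n^2(Y_i)\le \frac{m}{N}\hat v_n$, because the block sum of squares is at most $m\hat v_n$. Dividing by $h^2\Delta^2_{k,\hat\xi_n}$ yields
\[
p_{k,n}\le \frac{m}{N}\cdot\frac{\hat v_n}{h\,\Delta^2_{k,\hat\xi_n}},
\]
which is the stated form whenever the active block is the larger one, i.e.\ $N=\max\{k,m-k\}$.

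The main obstacle is upgrading the prefactor $m/N$ to $m/\max\{k,m-k\}$ when the active block is the smaller one. Only the active block's values enter $\mathrm{Var}_\pi$, so I will try two routes.
\begin{itemize}
\item First, exploit the complement identity: the subset sum equals the block total minus a complementary subset sum of size $N-h$. Then use the exact factor $h(N-h)/(N(N-1))$ in $\mathrm{Var}_\pi/h^2$, together with $h\le N$, to absorb the $m/N$ excess.
\item Failing that, compare $s_N^2$ with the full-sample second moment via $\hat v_n$ and the bound $m/\max\{k,m-k\}\le 2$.
\end{itemize}
I expect this prefactor step to be the only delicate point. If it cannot be closed in full generality, I would note that the downstream uses (Theorems~\ref{thm:conch_consistency_oracle} and \ref{thm:conch_consistency_non_uniform}) only invoke the bound with the prefactor $m/t$ for $t>\hat\xi_n$, which the argument above does establish.
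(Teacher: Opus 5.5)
Your core argument matches the paper's proof. The score at $k$ reads only one block, so a uniform $\pi\in\Pi_k$ turns it into a without-replacement sum of $h=|k-\hat\xi_n|$ values from the active block of size $N$, where $N=k$ if $k>\hat\xi_n$ and $N=m-k$ if $k<\hat\xi_n$. The finite-population variance is at most $h\cdot\frac{m}{N}\hat v_n$, and a one-sided Chebyshev bound gives $p_{k,n}\le \frac{m}{N}\frac{\hat v_n}{h\,\Delta^2_{k,\hat\xi_n}}$ on $\{\Delta_{k,\hat\xi_n}<0\}$. You use Cantelli and the paper uses Chebyshev, but the resulting bound is the same. This part is correct, and your treatment of $k<\hat\xi_n$ is more explicit than the paper's ``without loss of generality''.

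The step you flag as the main obstacle cannot be closed by either route, because the lemma with $\max\{k,m-k\}$ is false. Take $m=8$, $\hat\xi_n=1$, $k=2$, with $\hat\ell_n(Y_1)=1$, $\hat\ell_n(Y_2)=-1$, and $\hat\ell_n(Y_i)=0$ for $i\ge 3$.
\begin{itemize}
\item The observed score is $S_{2,n}=-1$, and the permuted score is $\pm1$ with probability $1/2$ each, so $p_{2,n}=1/2$.
\item On the other side, $\hat\mu_{2,n,L}=0$, $\Delta_{2,\hat\xi_n}=-1$ and $\hat v_n=1/4$, so the claimed bound equals $\frac{8}{6}\cdot\frac14=\frac13<\frac12$.
\end{itemize}
More generally, fix the active block and let the other block grow with values near zero. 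Then $\hat v_n$ decays like $1/m$, while $p_{k,n}$ stays bounded below by the fraction of permutations reproducing the observed subset. So the finite-population factor $h(N-h)/(N(N-1))$ in your first route cannot absorb the excess. Nor can your second route: the block variance $s_N^2$ really can be of order $(m/N)\hat v_n$, not $O(\hat v_n)$.

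The paper's own proof also stops at the prefactor $m/k$ for $k>\hat\xi_n$ (its last display) and never justifies the $\max$. The correct statement replaces $\max\{k,m-k\}$ by the active block size $N$, or more crudely by $\min\{k,m-k\}$, which your argument proves. Your fallback remark is exactly right: Theorems~\ref{thm:conch_consistency_oracle} and~\ref{thm:conch_consistency_non_uniform} only use the bound in the form $p_{t,n}\le\frac{m}{t}\frac{\hat v_n}{\hat m_n\Delta^2_{t,\hat\xi_n}}$ for $t>\hat\xi_n$. So drop the two upgrade routes, state the corrected lemma, and give the counterexample to show the stated version fails.
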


\begin{proof}
Fix $k\in [m-1]$ such that $k\neq \hat\xi_n$ and set $\hat{m}_n=|k-\hat{\xi}_n|$. Without loss of generality, assume $k>\hat{\xi}_n$. Then
\[
S_{k,n}(\bx)=\sum_{i=\hat{\xi}_n(\bx)+1}^{k}\hat{\ell}_{n}(x_i),
\]
a sum of $\hat{\ell}_n$ over a block of $\hat{m}_n$ observations.
Let $\mathcal{X}_L:=\{X_1,\ldots,X_{k}\}$ be the multiset of the first $k$ observations. The permutation $p$-value $p_{k,n}$ is given by
\[
p_{k,n}:=\P_{\pi\sim \mathrm{Unif}(\Pi_{k})}\!\left(S_{k,n}(\pi(\bX))\le S_{k,n}(\bX)\ \middle|\ \bX\right).
\]
Since $k \ge \hat{\xi}_n$, sampling $\pi$ uniformly from $\Pi_{k}$ is equivalent to sampling without replacement (WOR) from $\mathcal{X}_L$. Thus,
\[
p_{k,n}
=\P\!\left(\frac{1}{\hat{m}_n}\sum_{j=1}^{\hat{m}_n}\hat{\ell}_n(\tilde{X}_j)\le \frac{S_{k,n}(\bX)}{\hat{m}_n}\ \middle|\ \bX\right),
\]
where $\tilde{X}_1,\ldots,\tilde{X}_{\hat{m}_n}$ are drawn WOR from $\mathcal{X}_L$. Hence,
\[
\E\left[\frac{1}{\hat{m}_n}\sum_{j=1}^{\hat{m}_n}\hat{\ell}_n(\tilde{X}_j)\ \middle| \ \bX\right]=\hat{\mu}_{k,n,L},\quad 
\mathrm{Var}\!\left(\frac{1}{\hat{m}_n}\sum_{j=1}^{\hat{m}_n}\hat{\ell}_n(\tilde{X}_j)\ \middle| \ \bX\right)=\frac{v_{k,n}}{\hat{m}_n}\cdot \frac{k-\hat{m}_n}{k-1},
\]
where $v_{k,n}=\frac{1}{k}\sum_{i=1}^{k} {\hat{\ell}_n}^{\,2}(X_i) - (\hat{\mu}_{k,n,L})^2$. Note that
\[
v_{k,n}\le \frac{1}{k}\sum_{i=1}^{k}{\hat{\ell}_n}^{\,2}(X_i)\le \frac{m}{k}\,\hat{v}_n.
\]
Now, we can write $\tilde{p}_{k,n}$ as
\begin{multline*}
    \P\!\left(\frac{1}{\hat{m}_n}\sum_{j=1}^{\hat{m}_n}\hat{\ell}_n(\tilde{X}_j)\le \frac{S^{(n)}_{k}(\bX)}{\hat{m}_n}\ \middle|\ \bX\right)
    = \P\!\left(\frac{1}{\hat{m}_n}\sum_{j=1}^{\hat{m}_n}\hat{\ell}_n(\tilde{X}_j)-\hat{\mu}_{k,n,L}\le \frac{S^{(n)}_{k}(\bX)}{\hat{m}_n}-\hat{\mu}_{k,n,L}\ \middle|\ \bX\right).
\end{multline*}
Observe that $\Delta_{k,\hat{\xi}_n}= \frac{S^{(n)}_{k}(\bX)}{\hat{m}_n}-\hat{\mu}_{k,n,L}$, and if $\Delta_{k,\hat{\xi}_n}<0$, then by Chebyshev's inequality,
\[
p_{k,n}\le \frac{v_{k,n}}{\hat{m}_n\, \Delta^2_{k,\hat{\xi}_n}}\cdot \frac{k-\hat{m}_n}{k-1}\ \le \ \frac{m}{k}\,\frac{\hat{v}_n}{\hat{m}_n\, \Delta^2_{k,\hat{\xi}_n}}.
\]
On the other hand, if $\Delta_{k,\hat{\xi}_n}\ge 0$, then $p_{k,n}\le 1$. This proves the lemma. 
\end{proof}

\begin{lemma}\label{lem:upper_bound_on_var}
    In the setting of Section~\ref{app:consistency}, suppose $0<\sigma_\star<\infty$. Then, for any $\eta\in (0,1)$,
    \[
    \P\left(\hat{v}_n\le \frac{2(\sigma^2_\star+\mathrm{J}^2(P_0,P_1))(1+\varepsilon_{2,n}/\eta+2\sqrt{\varepsilon_{2,n}/\eta}\,)}{\eta}\right)\ge 1-4\eta,
    \]
    where we let
    \[
    \varepsilon_{2,n}=\max\left\{\frac{\Gamma_0}{\E_{X\sim P_0}[\ell^{\,2}(X)]},\, \frac{\Gamma_1}{\E_{X\sim P_1}[\ell^{\,2}(X)]}\right\}.
    \]
\end{lemma}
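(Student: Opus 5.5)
The plan is to bound $\hat v_n=\frac1m\sum_{i=1}^m\hat\ell_n^{\,2}(Y_i)$ by comparing it with its oracle analogue $v_n:=\frac1m\sum_{i=1}^m\ell^2(Y_i)$, and then to control each piece with Markov's inequality. Write $\E_{P_j}[\ell^2]$ for $\E_{X\sim P_j}[\ell^2(X)]$. Since $\E_{P_0}[\ell]=\mathrm{KL}(P_0\|P_1)$ and $\E_{P_1}[\ell]=-\mathrm{KL}(P_1\|P_0)$, both means have absolute value at most $\mathrm J(P_0,P_1)$. Hence, for $j\in\{0,1\}$,
\[
\E_{P_j}[\ell^2]=\sigma_j^2+(\E_{P_j}\ell)^2\le \sigma_\star^2+\mathrm J^2(P_0,P_1).
\]
Because every $Y_i$ is drawn from either $P_0$ or $P_1$, this gives $\E[v_n]\le \sigma_\star^2+\mathrm J^2(P_0,P_1)$. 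By Markov's inequality,
\[
v_n\le \frac{\sigma_\star^2+\mathrm J^2(P_0,P_1)}{\eta}
\]
with probability at least $1-\eta$.

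Next we handle the error term $e_n:=\frac1m\sum_{i=1}^m(\hat\ell_n(Y_i)-\ell(Y_i))^2$. The calibration data $Y_i$ are independent of $\mathcal D_n'$, so $\E[e_n]\le\max(\Gamma_0,\Gamma_1)$. By the definition of $\varepsilon_{2,n}$, we have $\Gamma_j\le\varepsilon_{2,n}\E_{P_j}[\ell^2]\le \varepsilon_{2,n}(\sigma_\star^2+\mathrm J^2)$. Markov's inequality then gives
\[
e_n\le \frac{\varepsilon_{2,n}(\sigma_\star^2+\mathrm J^2)}{\eta}
\]
with probability at least $1-\eta$. This choice of normalization is deliberate: measuring the error relative to $\E_{P_j}[\ell^2]$ is what produces the factor $\varepsilon_{2,n}/\eta$ in the statement.

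To combine the two pieces, expand $\hat\ell_n^2=(\ell+(\hat\ell_n-\ell))^2$ and apply Cauchy--Schwarz to the empirical cross term:
\[
\hat v_n\le v_n+e_n+2\sqrt{v_n e_n}.
\]
On the intersection of the two good events, substituting the bounds gives
\[
\hat v_n\le \frac{\sigma_\star^2+\mathrm J^2}{\eta}\Bigl(1+\frac{\varepsilon_{2,n}}{\eta}\cdot\eta\cdot\frac1\eta+2\sqrt{\varepsilon_{2,n}/\eta}\Bigr).
\]
In fact the $e_n$ bound carries $1/\eta$ while the prefactor already carries $1/\eta$, so one gets exactly the stated form $(1+\varepsilon_{2,n}/\eta+2\sqrt{\varepsilon_{2,n}/\eta})/\eta$, up to the leading constant. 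The factor $2$ in the claimed constant gives slack; it could absorb splitting each Markov step into a conditional-on-$\mathcal D_n'$ part and an unconditional part. That split would use up to four failure events of probability $\eta$ each, which matches the stated $1-4\eta$.

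The main obstacle is bookkeeping rather than substance. The cross term must be handled empirically, via Cauchy--Schwarz on the sums themselves and not on expectations, so that no independence between $\ell$ and $\hat\ell_n-\ell$ is needed. The mixed composition of the $Y_i$ (pre- and post-change) also has to be treated by bounding with the maximum over $j\in\{0,1\}$. Finally, if $\Gamma_j$ is read as an expectation over $\mathcal D_n'$, the conditioning must be checked. Independence of the $Y_i$ from $\mathcal D_n'$ makes the unconditional expectation of $e_n$ equal to a convex combination of $\Gamma_0$ and $\Gamma_1$, so a single Markov step suffices.
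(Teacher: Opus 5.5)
Your argument is correct, but it takes a different route from the paper's.

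\textbf{The paper's route.} The paper bounds $\hat v_n$ by the sum of the pre- and post-change segment averages, dropping the weights $\xi_n'/m$ and $(m-\xi_n')/m$; this is where the leading factor $2$ comes from. It then works conditionally on $\mathcal D_n'$:
\begin{itemize}
\item Cauchy--Schwarz is applied to the cross term of $\E_{X\sim P_j}[\hat\ell_n^{\,2}(X)\mid\mathcal D_n']$.
\item Markov's inequality on the conditional error $\E_{X\sim P_j}[|\hat\ell_n(X)-\ell(X)|^2\mid\mathcal D_n']$ produces the $\varepsilon_{2,n}/\eta$ terms.
\item A conditional Markov step on each segment average produces the outer $1/\eta$.
\end{itemize}
Two segments times two Markov steps give $1-4\eta$.

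\textbf{Your route.} You split $\hat v_n$ empirically into the oracle part $v_n$ and the error part $e_n$. You integrate out $\mathcal D_n'$ using independence of the calibration split, and combine the two pieces by an empirical Cauchy--Schwarz.

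\textbf{Your bookkeeping of $\eta$ is muddled.} With $A:=\sigma_\star^2+\mathrm{J}^2(P_0,P_1)$, your two good events give
\[
\hat v_n\le \frac{A}{\eta}+\frac{\varepsilon_{2,n}A}{\eta}+2\sqrt{\frac{A}{\eta}\cdot\frac{\varepsilon_{2,n}A}{\eta}}=\frac{A}{\eta}\bigl(1+\varepsilon_{2,n}+2\sqrt{\varepsilon_{2,n}}\bigr)
\]
with probability at least $1-2\eta$. This is not ``exactly the stated form'' but is stronger than the lemma on every count:
\begin{itemize}
\item there is no factor $2$;
\item $\varepsilon_{2,n}$ and $\sqrt{\varepsilon_{2,n}}$ appear in place of the larger $\varepsilon_{2,n}/\eta$ and $\sqrt{\varepsilon_{2,n}/\eta}$, since $\eta<1$;
\item the failure probability is $2\eta$ rather than $4\eta$.
\end{itemize}
So the lemma follows at once, and your speculation about spending the slack on extra failure events is unnecessary.

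\textbf{The cross term need not be handled empirically.} Your remark that it must be is not right. The paper applies Cauchy--Schwarz to expectations, and no independence between $\ell$ and $\hat\ell_n-\ell$ is needed there either. In fact, Minkowski's inequality in $L^2$ of the joint law of $(\mathcal D_n',X)$, with $X\sim P_j$ independent of $\mathcal D_n'$, gives
\[
\E[\hat\ell_n^{\,2}(X)]\le\bigl(\{\E_{X\sim P_j}[\ell^2(X)]\}^{1/2}+\Gamma_j^{1/2}\bigr)^2\le A\bigl(1+\sqrt{\varepsilon_{2,n}}\bigr)^2 .
\]
Hence a single Markov step on $\hat v_n$ yields the same bound with probability $1-\eta$.

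\textbf{What each route buys.} The paper's conditional route gives a bound holding on a good $\mathcal D_n'$-event. However, wherever the lemma is invoked in the sharpness proofs, only the unconditional statement is used, so your simpler version suffices.
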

\begin{proof}
    Fix $\eta\in(0,1)$. We can write
    \begin{align*}
        \hat{v}_n&=\frac{\xi'_n}{m}\cdot\frac{1}{\xi'_n}\sum_{i=1}^{\xi'_n} {\hat{\ell}_n}^{\,2}(Y_i)\ +\ \frac{m-\xi'_n}{m}\cdot \frac{1}{m-\xi'_n}\sum_{i=\xi'_n+1}^{m} {\hat{\ell}_n}^{\,2}(Y_i)\\
        &\le \frac{1}{\xi'_n}\sum_{i=1}^{\xi'_n} {\hat{\ell}_n}^{\,2}(Y_i)\ +\ \frac{1}{m-\xi'_n}\sum_{i=\xi'_n+1}^{m} {\hat{\ell}_n}^{\,2}(Y_i).
    \end{align*}
    We first derive an upper bound on the first term.
    Recall that conditional on $\mathcal{D}^{\prime}_n$, $\hat{\ell}_n^{\,2}(Y_1),\ldots,\hat{\ell}_n^{\,2}(Y_{\xi'_n})$ are i.i.d.\ with mean $\E_{X\sim P_0}[\hat{\ell}_n^{\,2}(X)\mid \mathcal{D}^{\prime}_n]$.
    Moreover, by the Cauchy-Schwarz inequality, 
    \begin{align*}
        \E_{X\sim P_0}[\hat{\ell}_n^{\,2}(X)\mid \mathcal{D}^{\prime}_n]&\le\E_{X\sim P_0}\bigl[(\ell(X)+|\hat{\ell}_n(X)-\ell(X)|)^2\mid \mathcal{D}^{\prime}_n\bigr]\\
        &\le  \E_{X\sim P_0}[\ell^{\,2}(X)]+\E_{X\sim P_0}[|\hat{\ell}_n(X)-\ell(X)|^2\mid \mathcal{D}^{\prime}_n] \\&\hspace{5em}+2\E_{X\sim P_0}[|\ell(X)|\,|\hat{\ell}_n(X)-\ell(X)|\mid \mathcal{D}^{\prime}_n]\\
        &\le  \E_{X\sim P_0}[\ell^{\,2}(X)]+\E_{X\sim P_0}[|\hat{\ell}_n(X)-\ell(X)|^2\mid \mathcal{D}^{\prime}_n] \\&\hspace{5em}+2(\E_{X\sim P_0}[|\ell(X)|^2])^{1/2}\,(\E_{X\sim P_0}[|\hat{\ell}_n(X)-\ell(X)|^2\mid \mathcal{D}^{\prime}_n])^{1/2}.
    \end{align*}
    By Markov's inequality, with probability at least $1-\eta$,
    \[
    \E_{X\sim P_0}[\hat{\ell}_n^{\,2}(X)\mid \mathcal{D}^{\prime}_n]\le \E_{X\sim P_0}[\ell^{\,2}(X)](1+\varepsilon_{2,n}/\eta+2\sqrt{\varepsilon_{2,n}/\eta}\,).
    \]
    Further, we observe that
    \[
    \E_{X\sim P_0}[\ell^{\,2}(X)]=\mathrm{Var}_{X\sim P_0}(\ell(X))+(\E_{X\sim P_0}[\ell(X)])^2=\sigma^2_0+\mathrm{KL}^2(P_0\|P_1).
    \]
    By Markov's inequality, therefore
    \[
    \P\!\left(\frac{1}{\xi'_n}\sum_{i=1}^{\xi'_n} \hat\ell_n^2(Y_i)\ge \frac{\bigl(\sigma^2_0+\mathrm{KL}^2(P_0\|P_1)\bigr)(1+\varepsilon_{2,n}/\eta+2\sqrt{\varepsilon_{2,n}/\eta}\,)}{\eta} \right)\le \eta.
    \]
    Similarly, we can show that
    \[
    \P\!\left(\frac{1}{m-\xi'_n}\sum_{i=\xi'_n+1}^m \hat\ell_n^2(Y_i)\ge \frac{\bigl(\sigma^2_1+\mathrm{KL}^2(P_1\|P_0)\bigr)(1+\varepsilon_{2,n}/\eta+2\sqrt{\varepsilon_{2,n}/\eta}\,)}{\eta} \right)\le \eta.
    \]
    Since $\sigma_0,\,\sigma_1\le \sigma_\star$ and $\mathrm{KL}(P_0\|P_1),\, \mathrm{KL}(P_1\|P_0)\le \mathrm{J}(P_0,P_1)$, a union bound therefore gives us
    \[
    \P\left(\hat{v}_n\le \frac{2\bigl(\sigma^2_\star+\mathrm{J}^2(P_0,P_1)\bigr) (1+\varepsilon_{2,n}/\eta+2\sqrt{\varepsilon_{2,n}/\eta}\,)}{\eta}\right)\ge 1-4\eta,
    \]
    as required.
\end{proof}

\subsection{Proof of Theorem~\ref{thm:universality_changepoint}: Universality Theorem}\label{app:proof of universality theorem}
The proof of this universality theorem is inspired by the classical universality result for full-conformal procedures in the predictive inference framework (see \citealp[Chapter~2.4]{vovk2005algorithmic}; \citealp[Theorem~9.6]{angelopoulos2024theoretical}). 

Fix $n\in \N$. First, given the confidence set $C$, we consider the CPP score
\[
S_t(\bx) = \One{t \in C(\bx)} \in \{0,1\},
\]
for any $t \in [n-1]$. 
We will show that the \conch{} confidence set constructed from this score, denoted $\Ccal^{\conch}_{1-\alpha}$, coincides exactly with the given confidence set~$C$.

We start with showing that $\mathcal{C}^{\conch}_{1-\alpha}(\bX) \supseteq C(\bX)$; that is, if $t \in C(\bX)$, then it holds that $p_t > \alpha$, where $p_t$ is as defined in \eqref{eq:pvalue_conch}.
This is immediate by observing that if $t \in C(\bX)$, then $S_t(\bX) = 1$, and consequently,
\[
p_t = \frac{1}{|\Pi_t|}\sum_{\pi \in \Pi_t} \One{S_t(\pi(\bX)) \leq S_t(\bX)}
     = \frac{1}{|\Pi_t|}\sum_{\pi \in \Pi_t} \One{S_t(\pi(\bX)) \leq 1} = 1.
\]

Next, we show that $\mathcal{C}^{\conch}_{1-\alpha}(\bX) \subseteq C(\bX)$, i.e., if $t \notin C(\bX)$, then $p_t \leq \alpha$.
To that end, we first claim that for any $t \in [n-1]$ and any vector $\bx \in \mathcal{X}^n$,
\begin{equation}\label{eq:claim}
    \frac{1}{|\Pi_t|}\sum_{\pi \in \Pi_t} \One{t \in C(\pi(\bx))} \geq 1 - \alpha.
\end{equation}

We now prove this claim. Fix $t \in [n-1]$, and sample $\pi$ uniformly from the set of permutations $\Pi_t$. Define $\tilde{\bX} := (\tilde{X}_1, \ldots, \tilde{X}_n) := \pi(\bx)$. Conditional on the multisets $\{x_1, \ldots, x_t\}$ and $\{x_{t+1}, \ldots, x_n\}$, we have
\[
\tilde{X}_1, \ldots, \tilde{X}_t \text{ are exchangeable, and } \tilde{X}_{t+1}, \ldots, \tilde{X}_n \text{ are exchangeable.}
\]
Moreover, conditional on the multisets, $(\tilde{X}_1, \ldots, \tilde{X}_t)$ and $(\tilde{X}_{t+1}, \ldots, \tilde{X}_n)$ are independent, implying that the sampling process of $\tilde{\bX}$ satisfies \Cref{assn:exchangeability}. Consequently,
\[
\P_{\pi \sim \text{Unif}(\Pi_t)}\big(t \in C(\tilde{\bX}) \,\big|\, \{x_1, \ldots, x_t\}, \{x_{t+1}, \ldots, x_n\}\big) \geq 1 - \alpha,
\]
or equivalently, \eqref{eq:claim} holds.

Returning to the main proof, observe that if $t \notin C(\bX)$, then $S_t(\bX) = 0$. Consequently, 
\begin{align*}
p_t &= \frac{1}{|\Pi_t|}\sum_{\pi \in \Pi_t} \One{S_t(\pi(\bX)) \leq S_t(\bX)} \\
    &= \frac{1}{|\Pi_t|}\sum_{\pi \in \Pi_t} \One{S_t(\pi(\bX)) \leq 0}
     = \frac{1}{|\Pi_t|}\sum_{\pi \in \Pi_t} \One{t \notin C(\pi(\bX))} \leq \alpha,
\end{align*}
where the last step follows from \eqref{eq:claim}. This completes the proof. \hfill$\square$

\subsection{Proving asymptotic validity of \conch-SEG}
\label{app:multiple_changepoint_theory}

In this section, we establish the asymptotic validity of the extension of \conch{} to multiple changepoint localization. A direct analysis of \conch-SEG from Section~\ref{sec:mult_chpt_extension} is difficult because
segmentation and \conch{} are applied to the same observations, potentially violating the conditions needed to apply the finite-sample validity result for \conch{}. To decouple these steps, similarly to \conch-SPLIT, we introduce a closely related variant, \conch-SEG-\texttt{crossfit}, formally given in
\Cref{alg:conch-seg-crossfit}.

\IncMargin{1.2em}
\begin{algorithm}[t]
    \caption{\conch-SEG-\texttt{crossfit}}
    \label{alg:conch-seg-crossfit}
    \KwIn{$(X_t)_{t=1}^n$ (data);
    $S:\mathcal{X}^n\to\R^{n-1}$ (CPP score function);
    segmentation algorithm $\texttt{SEG}$}
    \KwOut{$\mathcal{C}^{\conch\text{-SEG-}\texttt{crossfit}}_{1-\alpha}$}
    $\mathcal{I}_1 \gets \{t\le n:\ t \text{ odd}\}$,\,
    $\mathcal{I}_2 \gets \{t\le n:\ t \text{ even}\}$\;
    $\mathcal{C} \gets \varnothing$\;
    \For{$r\in\{1,2\}$}{
        $(\hat{K}^{(r)},\hat{\xi}^{(r)}_1,\ldots,
        \hat{\xi}^{(r)}_{\hat{K}^{(r)}})
        \gets
        \texttt{SEG}\bigl((X_t)_{t\in\mathcal{I}_r}\bigr)$\;
        Compute
        $(\tilde{X}^{(r)}_0,\ldots,
        \tilde{X}^{(r)}_{\hat{K}^{(r)}})$
        by~\eqref{eq:segmentation} based on
        $(\hat{\xi}^{(r)}_1,\ldots,
        \hat{\xi}^{(r)}_{\hat{K}^{(r)}})$\;
        $J^{(r)}_\ell
        \gets
        [\,\tilde{X}^{(r)}_{\ell-1}+1,\,
        \tilde{X}^{(r)}_\ell\,]\cap\mathcal{I}_{3-r}$
        for $\ell\in[\hat{K}^{(r)}]$\;
        \label{line:segment}
        \For{$\ell\in[\hat{K}^{(r)}]$}{
            Let $X^{(r,\ell)}$ be the subsequence
            $(X_t)_{t\in J^{(r)}_\ell}$ ordered by increasing index~$t$\;
            Define
            $S^{(r,\ell)}:
            \mathcal{X}^{|J^{(r)}_\ell|}
            \to\R^{|J^{(r)}_\ell|-1}$\;
            Compute \conch{} $p$-values
            $\{p_t:
            t\in J^{(r)}_\ell\setminus
            \{\max J^{(r)}_\ell\}\}$
            as in~\eqref{eq:pvalue_conch}, using score
            $S^{(r,\ell)}$ on $X^{(r,\ell)}$\;
            $\mathcal{C}^{(r)}_\ell
            \gets
            \{\,t\in J^{(r)}_\ell
            \setminus\{\max J^{(r)}_\ell\}:p_t>\alpha\,\}$\;
            \label{line:segment_confidence_set}
            $\mathcal{C}\gets\mathcal{C}\cup
            \mathcal{C}^{(r)}_\ell$\;
        }
    }
    \Return{
    $\mathcal{C}^{\conch\text{-SEG-}\texttt{crossfit}}_{1-\alpha}
    \gets\mathcal{C}$}
\end{algorithm}
\DecMargin{1.2em}

In particular, we partition the index set into two disjoint folds
\[
\mathcal{I}_1:=\{t\in[n]:t\text{ odd}\},
\qquad\mathcal{I}_2:=\{t\in[n]:t\text{ even}\}.
\]
For $r\in\{1,2\}$, we run the segmentation algorithm on
$\mathcal{I}_r$ to obtain $\hat K^{(r)}$ and
\[
0=\hat{\xi}^{(r)}_0<\hat{\xi}^{(r)}_1<\cdots<\hat{\xi}^{(r)}_{\hat K^{(r)}}<\hat{\xi}^{(r)}_{\hat K^{(r)}+1}=n.
\]
Next, we form the segment boundaries as in~\eqref{eq:segmentation},
based on the estimated changepoints, to obtain
\[
0=\tilde X^{(r)}_0<\tilde X^{(r)}_1<\cdots<\tilde X^{(r)}_{\hat K^{(r)}-1}<\tilde X^{(r)}_{\hat K^{(r)}}=n,
\]
and define the disjoint segments
\[
J^{(r)}_\ell:=[\,\tilde X^{(r)}_{\ell-1}+1,\,
\tilde X^{(r)}_\ell\,]\cap\mathcal I_{3-r},\qquad
\ell\in[\hat K^{(r)}].
\]
We then run \conch{} on the restricted segment $J^{(r)}_\ell$ to produce a segmentwise confidence set and aggregate the resulting sets across segments and folds. In particular, when segmentation is performed on $\mathcal I_1$ (respectively, $\mathcal I_2$), \conch{} is run on $\mathcal I_2$ (respectively, $\mathcal I_1$).

Throughout this section, subsequences indexed by $\mathcal I_1$ or $\mathcal I_2$ retain their original time indices. In particular, the outputs of both $\texttt{SEG}$ and \conch{} are expressed on the original timeline $[n]$.

For the asymptotic analysis, we consider a sequence of multiple-changepoint models indexed by $n$. Fix $K>1$, and for each
$n$, let
\[
0=\xi_{0,n}<\xi_{1,n}<\cdots<\xi_{K,n}<\xi_{K+1,n}=n
\]
denote the true changepoints and the two boundary points. We assume that the observations are mutually independent and, for each $j\in[K+1]$,
\[
X_{\xi_{j-1,n}+1},\ldots,X_{\xi_{j,n}} \overset{\mathrm{i.i.d.}}{\sim}\Pcal_{j,n},
\]
where $\Pcal_{j,n}\in\mathcal M(\mathcal X)$ and $\Pcal_{j,n}\neq\Pcal_{j+1,n}$ for every $j\in[K]$. We define the minimum segment length
\[
\Delta_n:=\min_{0\le j\le K} \bigl(\xi_{j+1,n}-\xi_{j,n}\bigr),
\]
and assume that $\Delta_n\longrightarrow\infty$ as $n\to\infty$. A standard special case is $\xi_{j,n}=\lfloor n\tau_j\rfloor$ for fixed $0<\tau_1<\cdots<\tau_K<1$. We retain the dependence on $n$ in the notation for the true and estimated changepoints throughout the following theorem.

\begin{theorem}[Asymptotic coverage of\conch-SEG-\texttt{crossfit}]\label{thm:mult_changepoint}
Fix $\alpha\in(0,1)$ and consider the asymptotic
multiple-changepoint model described above. Suppose that, for each $r\in\{1,2\}$, the segmentation algorithm satisfies the following conditions.

\begin{enumerate}[label=(\alph*),ref=(\alph*)]
    \item \textbf{Consistent changepoint count.}
    \[
    \P\bigl(\hat K^{(r)}=K\bigr)\to1\qquad\text{as }n\to\infty.
    \]

    \item \textbf{Vanishing normalized localization error.}
    \[
    \max_{k\in[\hat K^{(r)}]}\min_{j\in[K]}\frac{
    |\hat\xi^{(r)}_{k,n}-\xi_{j,n}|}{\Delta_n}\xrightarrow{P}0
    \qquad\text{as }n\to\infty.
    \]

    \item \textbf{No clustering of estimated changepoints.}
    There exists $\eta>0$ such that
    \[
    \P\left(\hat K^{(r)}=K,\ \frac{
    \min_{k\in[K-1]}\bigl(\hat\xi^{(r)}_{k+1,n}
    -\hat\xi^{(r)}_{k,n}\bigr)
    }{\Delta_n}>\eta\right)\to1
    \qquad\text{as }n\to\infty.
    \]
\end{enumerate}
Then, for every $k\in[K]$, we have
\[
\P\left(\xi_{k,n}
\in\mathcal C^{\conch\text{-SEG-}\texttt{crossfit}}_{1-\alpha}
\right)\ge 1-\alpha-\mathrm{o}(1) \qquad\text{as }n\to\infty.
\]
\end{theorem}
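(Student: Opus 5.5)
The plan is to reduce the claim to the finite-sample validity of \conch{} (Theorem~\ref{thm:coverage-conch}), applied conditionally on the segmentation fold. Fix $k\in[K]$. Let $r$ be the fold that does not contain $\xi_{k,n}$ in the sense that matters. Concretely, we would pick $r$ so that $\xi_{k,n}\in\mathcal I_{3-r}$ and also $\xi_{k,n}+1$ is the next calibration index. A cleaner alternative, if the index bookkeeping is awkward, is to treat the changepoint between consecutive elements of $\mathcal I_{3-r}$ and work with whichever fold makes $\xi_{k,n}$ an admissible candidate $t$ in $J^{(r)}_\ell\setminus\{\max J^{(r)}_\ell\}$. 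We then condition on $\mathcal F:=\sigma((X_t)_{t\in\mathcal I_r})$. Because the observations are mutually independent, the segment boundaries $\tilde X^{(r)}_\ell$ are $\mathcal F$-measurable, while $(X_t)_{t\in\mathcal I_{3-r}}$ is independent of $\mathcal F$ and keeps its piecewise i.i.d.\ law.

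Next we define the good event $E_n$. On it, $\hat K^{(r)}=K$ and each $\hat\xi^{(r)}_{j,n}$ lies within $\epsilon\Delta_n$ of $\xi_{j,n}$. Conditions (a)--(c) give $\P(E_n)\to1$; condition (c) forces the matching between estimated and true changepoints to be one-to-one. On $E_n$ the midpoints $\tilde X^{(r)}_\ell$ from~\eqref{eq:segmentation} lie at distance at least roughly $(1/2-\epsilon)\Delta_n$ from every true changepoint. We would check this using $\Delta_n\to\infty$ and choosing $\epsilon<1/4$. Consequently the segment $J^{(r)}_\ell$ whose range contains $\xi_{k,n}$ contains exactly one true changepoint, namely $\xi_{k,n}$, and it has at least one calibration index on each side. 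Given $\mathcal F$, on $E_n$ the subsequence $X^{(r,\ell)}$ is therefore an independent two-piece i.i.d.\ sequence with its changepoint at the position of $\xi_{k,n}$. This satisfies \Cref{assn:exchangeability}. We note that the segment is $\mathcal F$-measurable, so selecting it does not break exchangeability, and the score $S^{(r,\ell)}$ may also depend on $\mathcal F$.

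Applying Theorem~\ref{thm:coverage-conch} conditionally then gives $\P(p_{\xi_{k,n}}>\alpha\mid\mathcal F)\ge1-\alpha$ on $E_n\in\mathcal F$. Taking expectations yields
\[
\P\bigl(\xi_{k,n}\in\mathcal C\bigr)\ge(1-\alpha)\P(E_n)\ge1-\alpha-\P(E_n^c)=1-\alpha-\mathrm o(1),
\]
since $\mathcal C^{(r)}_\ell\subseteq\mathcal C$.

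The main obstacle is the index bookkeeping. We must verify on $E_n$ that the true changepoint of the restricted subsequence is exactly $\xi_{k,n}$, i.e.\ that $\xi_{k,n}$ is a valid candidate in $J^{(r)}_\ell\setminus\{\max J^{(r)}_\ell\}$ in the original timeline. We must also verify that the fold parity works out: the change of distribution between $\xi_{k,n}$ and $\xi_{k,n}+1$ has to coincide with the split between consecutive retained indices. We would try to handle this by using the fold $r$ with $\xi_{k,n}\in\mathcal I_{3-r}$. In that fold the left block ends at $\xi_{k,n}$, and the right block starts at $\xi_{k,n}+2$, which is still post-change. This gives a genuine single changepoint at $\xi_{k,n}$.
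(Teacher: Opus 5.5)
Your proposal is correct and is essentially the paper's proof. You run \conch{} on the parity fold containing $\xi_{k,n}$, condition on the segmentation fold, and show that a good event has probability tending to one: correct count, index-matched estimates within $\epsilon\Delta_n$, and hence $\tilde X^{(r)}_{k-1}\in[\xi_{k-1,n},\xi_{k,n}-2]$ and $\tilde X^{(r)}_{k}\in[\xi_{k,n}+2,\xi_{k+1,n}]$. Applying Theorem~\ref{thm:coverage-conch} conditionally then gives $(1-\alpha)\P(E_n)\ge 1-\alpha-\P(E_n^c)$.

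A few details need tightening. First, $\epsilon$ must also satisfy $2\epsilon\le\eta$; the paper takes $\min\{1/8,\eta/2\}$. Second, you should show the matching is order-preserving, not just one-to-one, to conclude it is the identity. Third, drop your opening remark that $\xi_{k,n}+1$ is the next calibration index: it contradicts $\xi_{k,n}\in\mathcal I_{3-r}$, and your closing observation that the next index is $\xi_{k,n}+2$ is the correct one.
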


\begin{proof}
Fix $k\in[K]$. For each $n$, let
\[
r_n
:=
\begin{cases}
2, & \text{if $\xi_{k,n}$ is odd},\\
1, & \text{if $\xi_{k,n}$ is even}.
\end{cases}
\]
Thus, the segmentation algorithm is run on $\mathcal I_{r_n}$, while
\conch{} is run on $\mathcal I_{3-r_n}$, which contains the true
changepoint $\xi_{k,n}$.

We start by defining the event
\[
\mathcal G_{k,n}:=\Bigl\{\hat K^{(r_n)}=K,\quad
\xi_{k-1,n}\le\tilde X^{(r_n)}_{k-1}\le
\xi_{k,n}-2,\quad\xi_{k,n}+2
\le\tilde X^{(r_n)}_k\le\xi_{k+1,n}\Bigr\}.
\]
On $\mathcal G_{k,n}$, the segment $J^{(r_n)}_k$ contains the true changepoint $\xi_{k,n}$, contains no other true changepoint, and also contains the next index after $\xi_{k,n}$ belonging to the same parity fold. Consequently, $\xi_{k,n}\in J^{(r_n)}_k \setminus \{\max J^{(r_n)}_k\}$.
In particular,
\[
\left\{\xi_{k,n}\in\mathcal C^{(r_n)}_k\right\}\subseteq
\left\{ \xi_{k,n}\in\mathcal C^{\conch\text{-SEG-}\texttt{crossfit}}_{1-\alpha}\right\}.
\]
Since $\mathcal G_{k,n}$ and the segment boundaries are measurable with respect to $(X_t)_{t\in\mathcal I_{r_n}}$, we further obtain by the tower law,
\begin{align}
\P\left(\xi_{k,n}\in\mathcal C^{\conch\text{-SEG-}\texttt{crossfit}}_{1-\alpha}\right)
&\ge\P\left(\xi_{k,n}\in\mathcal C^{(r_n)}_k,\mathcal G_{k,n}\right)\notag\\
&=\E\left[\One{\mathcal G_{k,n}}\P\left(\xi_{k,n}\in\mathcal C^{(r_n)}_k\;\middle|\;(X_t)_{t\in\mathcal I_{r_n}}\right)\right].
\label{eq:lotp-G}
\end{align}

Conditional on $(X_t)_{t\in\mathcal I_{r_n}}$, the segment $J^{(r_n)}_k$ is fixed. Moreover, because the observations are mutually independent, the observations in the inference fold $\mathcal I_{3-r_n}$ are independent of the observations used for segmentation. On $\mathcal G_{k,n}$, the subsequence $(X_t)_{t\in J^{(r_n)}_k}$ therefore follows a single-changepoint model with changepoint at $\xi_{k,n}$, with i.i.d.\ observations on either side of the changepoint.

Consequently, by Theorem~\ref{thm:coverage-conch},
\[
\P\left( \xi_{k,n}\in\mathcal C^{(r_n)}_k\;\middle|\; (X_t)_{t\in\mathcal I_{r_n}}\right) \ge1-\alpha\qquad\text{almost surely on }\mathcal G_{k,n}.
\]
Substituting this inequality into~\eqref{eq:lotp-G} yields
\[
\P\left(\xi_{k,n}\in\mathcal C^{\conch\text{-SEG-}\texttt{crossfit}}_{1-\alpha}\right)
\ge(1-\alpha)\P(\mathcal G_{k,n})\ge1-\alpha-\P(\mathcal G_{k,n}^{c}).
\]
It remains to show that $\P(\mathcal G_{k,n})\to1$. Define $c_\eta:=\min\{1/8,\eta/2\}$ and let
\[
\mathcal A_n:=\Bigl\{\hat K^{(r_n)}=K,\quad
\min_{j\in[K-1]}\bigl(\hat\xi^{(r_n)}_{j+1,n}-\hat\xi^{(r_n)}_{j,n}\bigr)>\eta\Delta_n,\quad\max_{j\in[K]}\min_{i\in[K]}\left|\hat\xi^{(r_n)}_{j,n}-\xi_{i,n}\right|\le c_\eta\Delta_n\Bigr\}.
\]
By the theorem hypothesis, $\P(\mathcal A_n)\to1$.
On $\mathcal A_n$, for each $j\in[K]$, choose $i_j\in[K]$ attaining
the inner minimum:
\[
\min_{i\in[K]}\left|\hat\xi^{(r_n)}_{j,n}-\xi_{i,n}
\right|=\left|\hat\xi^{(r_n)}_{j,n}-\xi_{i_j,n}\right|.
\]
We first observe that $i_{j+1}\neq i_j$ for every $j\in[K-1]$. Indeed, otherwise,
\begin{align*}
\eta\Delta_n<\hat\xi^{(r_n)}_{j+1,n}-\hat\xi^{(r_n)}_{j,n}
\le\left|\hat\xi^{(r_n)}_{j+1,n}-\xi_{i_{j+1},n}\right|+
\left|\hat\xi^{(r_n)}_{j,n}-\xi_{i_j,n}\right|
\le 2c_\eta\Delta_n \le \eta\Delta_n,
\end{align*}
which is a contradiction.

We next claim that
\[
i_1<i_2<\cdots<i_K.
\]
Suppose instead that $i_{j+1}<i_j$ for some $j\in[K-1]$. Since
distinct true changepoints are separated by at least $\Delta_n$,
\begin{align*}
\hat\xi^{(r_n)}_{j+1,n}-\hat\xi^{(r_n)}_{j,n}
&=\bigl( \hat\xi^{(r_n)}_{j+1,n}-\xi_{i_{j+1},n}\bigr)
+\bigl(\xi_{i_{j+1},n}-\xi_{i_j,n}\bigr)+\bigl(\xi_{i_j,n}-\hat\xi^{(r_n)}_{j,n}\bigr)\\
&\le\frac{\Delta_n}{8}-\Delta_n+\frac{\Delta_n}{8}=-\frac{3\Delta_n}{4},
\end{align*}
contradicting
$\hat\xi^{(r_n)}_{j+1,n}>\hat\xi^{(r_n)}_{j,n}$.
Thus, $(i_1,\ldots,i_K)$ is a strictly increasing sequence of $K$
elements of $[K]$, and hence $i_j=j$ for every $j\in[K]$. Therefore, on $\mathcal A_n$,
\begin{equation}
\label{eq:close-hat-true}
\left|\hat\xi^{(r_n)}_{j,n}-\xi_{j,n}\right|\le\frac{\Delta_n}{8},\qquad j\in[K].
\end{equation}
For notational convenience, we set $\hat\xi^{(r_n)}_{0,n}=0$ and $\hat\xi^{(r_n)}_{K+1,n}=n$. Using~\eqref{eq:close-hat-true} and the definition of $\Delta_n$, we
obtain, for every $j\in\{0,\ldots,K\}$,
\begin{equation}
\label{eq:estimated-spacing}
\hat\xi^{(r_n)}_{j+1,n}-\hat\xi^{(r_n)}_{j,n}\ge\frac{3\Delta_n}{4}.
\end{equation}
For the two boundary spacings, the stronger lower bound
$7\Delta_n/8$ holds. Recall that $\tilde X^{(r_n)}_0=0$ and $\tilde X^{(r_n)}_K=n$, and for $j\in[K-1]$,
\[
\tilde X^{(r_n)}_j=\left\lfloor\frac{\hat\xi^{(r_n)}_{j,n}+\hat\xi^{(r_n)}_{j+1,n}}{2}\right\rfloor.
\]
Therefore, by~\eqref{eq:estimated-spacing}, we obtain
\begin{align}
\tilde X^{(r_n)}_j-\hat\xi^{(r_n)}_{j,n}
\ge\frac{1}{2}\left(\hat\xi^{(r_n)}_{j+1,n}-\hat\xi^{(r_n)}_{j,n}\right)-1\ge\frac{3\Delta_n}{8}-1,\quad
\hat\xi^{(r_n)}_{j+1,n}-\tilde X^{(r_n)}_j\ge\frac{1}{2}
\left(\hat\xi^{(r_n)}_{j+1,n}-\hat\xi^{(r_n)}_{j,n}\right)
\ge\frac{3\Delta_n}{8}.
\label{eq:floor-left}
\end{align}
Combining~\eqref{eq:close-hat-true}--\eqref{eq:floor-left}, we obtain, for every $j\in[K-1]$,
\begin{align}\label{eq:midpoint-lower-upper}
\tilde X^{(r_n)}_j-\xi_{j,n}\ge\frac{\Delta_n}{4}-1,\qquad
\xi_{j+1,n}-\tilde X^{(r_n)}_j\ge\frac{\Delta_n}{4}.
\end{align}
At the two endpoints,
\[
\xi_{1,n}-\tilde X^{(r_n)}_0=\xi_{1,n}\ge\Delta_n,
\qquad\tilde X^{(r_n)}_K-\xi_{K,n}=n-\xi_{K,n}\ge\Delta_n.
\]
Since $\Delta_n\to\infty$, by \eqref{eq:midpoint-lower-upper}, for all sufficiently large $n$, on $\mathcal A_n$,
\[\xi_{k-1,n}\le\tilde X^{(r_n)}_{k-1}\le\xi_{k,n}-2,\quad \xi_{k,n}+2\le\tilde X^{(r_n)}_k\le\xi_{k+1,n}.
\]
Thus, $\mathcal A_n\subseteq\mathcal G_{k,n}$ for all sufficiently large $n$. It follows that $\P(\mathcal G_{k,n})\ge\P(\mathcal A_n)\longrightarrow1$.
Consequently,
\[
\P\left(
\xi_{k,n}\in\mathcal C^{\conch\text{-SEG-}\texttt{crossfit}}_{1-\alpha}\right)\ge1-\alpha-\P(\mathcal G_{k,n}^{c})
=1-\alpha-\mathrm{o}(1),
\]
as claimed.
\end{proof}

In particular, kernel-based changepoint detection (KCP) yields
estimators satisfying conditions~(a)--(c) under standard regularity
conditions, including consistency of the estimated number of
changepoints and localization error that is negligible relative to
the minimum segment length
\citep[see, e.g.,][]{garreau2018consistent}.
The result therefore establishes asymptotically valid marginal coverage for each changepoint in the \conch-SEG-\texttt{crossfit} confidence set when \conch{} is wrapped around KCP.

To obtain simultaneous coverage of all changepoints, one may run \conch{} on each resulting segment at level $\alpha/\hat K$. Provided that $\hat K=K$ with probability tending to one, Bonferroni's inequality then yields asymptotic simultaneous coverage at level $1-\alpha$. In our experiments, whether or not we employ cross-fitting has little effect on performance; see Appendix~\ref{app:multiple_changepoint_expt} for empirical evidence.

\section{Additional Experiments}\label{app:additional_expt}
\subsection{Simulations}
\subsubsection{Gaussian mean-shift: comparison with \citet{dandapanthula2025conformal}}\label{app:gaussian_all_pvalue_comaprison}
In the Gaussian mean-shift setting described in Section~\ref{sec:gaussian_mean_shift}, we compare our framework against the changepoint localization method of \citet{dandapanthula2025conformal}, which also constructs distribution-free confidence sets for changepoints using a matrix of conformal $p$-values. 

To ensure a fair comparison, we compare the strongest versions of these two methods. Following the recommendation of \citet{dandapanthula2025conformal}, we equip MCP with the oracle likelihood-ratio between the true pre- and post-change distributions. For \conch{}, we consider the four CPP scores from Section~\ref{sec:practical_scores}, including the oracle and learned LLR scores. Thus, both methods are supplied with the same underlying information about the changepoint model, allowing us to compare the sharpness of the resulting confidence sets more closely.

Figure~\ref{fig:gaussian_mean_shift_all_pvalue} displays the $p$-value distributions from both methods. Their approach yields a confidence set over $[362,432]$, which is  broader than the widest interval obtained by \conch{} (using the weighted-mean score). 
\begin{figure}[!h]
    \centering
    \includegraphics[width=1\linewidth]{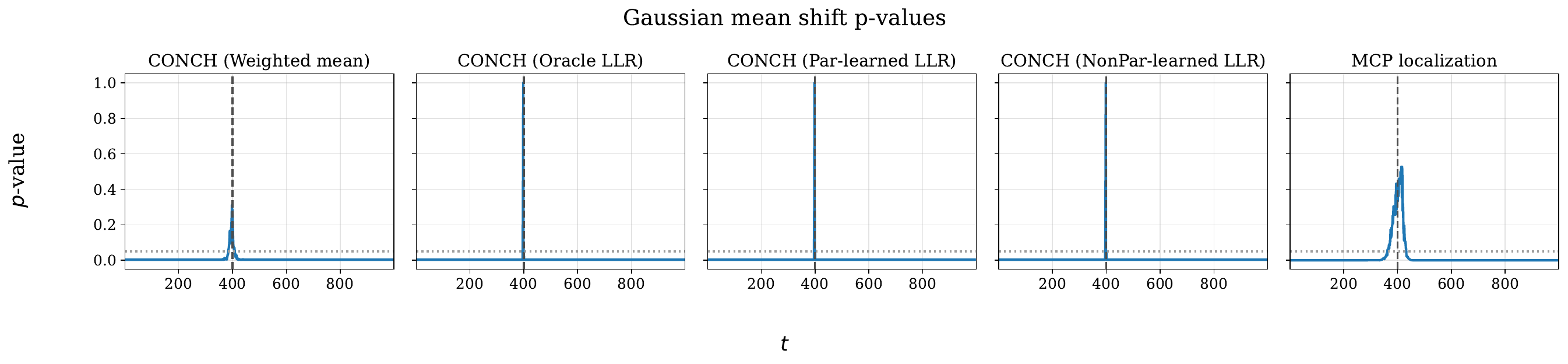}
    \caption{\conch{} and MCP $p$-value profiles under the Gaussian mean-shift model. Across all CPP scores, \conch{} produces a narrower confidence set than MCP.}
    \label{fig:gaussian_mean_shift_all_pvalue}
\end{figure}

\subsubsection{Comparison of \conch{} with existing changepoint estimation methods}
\label{app:conch_estimate_comaprison}

The primary goal of \conch{} is changepoint localization, namely constructing confidence sets that contain the true changepoint with target coverage level. While this differs from the well-studied changepoint detection problem, most existing changepoint detection procedures also produce a point estimate of the changepoint location. To facilitate comparison with such methods, we evaluate the point estimator induced by \conch{} as
\[
\hat{\xi}_{\conch}
=
\argmax_{t\in[n-1]} p_t,
\]
that is, the candidate changepoint attaining the largest \conch{} $p$-value.

We generate a sequence of $n=500$ i.i.d.\ observations with a changepoint at $\xi=200$. The pre-change distribution is $\Pcal_{0,\xi}
=\bigotimes_{t=1}^{\xi}\mathcal{N}(-\mu,1)$,
while the post-change distribution is $\Pcal_{1,\xi}=\bigotimes_{t=\xi+1}^{n}\mathcal{N}(\mu,1)$.
We consider three signal strengths,
\[
\mu=1,\quad\mu=0.5,\quad\text{and}\quad\mu=0.25,
\]
corresponding respectively to strong, moderate, and weak mean shifts. For each point estimation method, we report the localization accuracy
\[
\mathbb P\bigl(|\hat{\xi}-\xi|\le 25\bigr),
\]
estimated using $100$ independent repetitions. This metric measures the probability that the estimated changepoint lies within a neighborhood of radius $25$ around the true changepoint. Since the resulting window has length $50$, corresponding to only $10\%$ of the entire timeline, this provides a reasonably stringent measure of localization accuracy.

Under the aforementioned setting, we compare the following methods:
\begin{enumerate}
    \item \conch-MC (Algorithm~\ref{alg:conch_MC}) with the following CPP scores:
    (a) oracle log-likelihood ratio (LLR), where the pre- and post-change distributions are known exactly, and
    (b) parametrically learned LLR, where only the Gaussian family is assumed for the pre- and post-change distributions.

    \item The kernel changepoint procedure (KCP) of \citet{arlot2019kernel}.

    \item The graph-based changepoint detection procedure of \citet{chu2019asymptotic}, implemented with
    (a) the edge-count scan statistic,
    (b) the generalized edge-count scan statistic, and
    (c) the weighted edge-count scan statistic.

    \item The changeforest algorithm of \citet{londschien2023random}, which builds upon random-forest-based classification, similar to our classifier-based LLR score from Section~\ref{sec:practical_scores}.
\end{enumerate}

\begin{table}[t]
\centering
\begin{tabular}{lccc}
\toprule
Method & $\mu=1$ & $\mu=0.5$ & $\mu=0.25$ \\
\midrule
\conch{} (Oracle LLR) & \textbf{1.00} & \textbf{1.00} & \textbf{0.85} \\
\conch{} (Parametric LLR) & \textbf{1.00} & \textbf{1.00} & 0.83 \\
KCP & 0.97 & 0.47 & 0.02 \\
Graph-based (Edge Count) & \textbf{1.00} & 0.78 & 0.25 \\
Graph-based (Generalized Edge Count) & 0.99 & 0.75 & 0.18 \\
Graph-based (Weighted Edge Count) & 0.99 & 0.76 & 0.22 \\
Changeforest & \textbf{1.00} & 0.73 & 0.02 \\
\bottomrule
\end{tabular}
\caption{Localization accuracy $\mathbb{P}(|\hat{\xi}-\xi|\le 25)$ under the Gaussian mean-shift model. Larger values indicate more accurate localization, and the best result for each signal strength is shown in bold. The LLR-based \conch{} estimators remain highly accurate as the signal weakens and outperform the competing methods in the moderate- and weak-signal settings.}
\label{tab:point_estimate_comparison}
\end{table}

Table~\ref{tab:point_estimate_comparison} reports the localization accuracy of the competing methods. As expected, localization becomes more challenging as the signal strength decreases, leading to lower accuracy across all methods. Nevertheless, the point estimates induced by \conch{}, with both oracle and parametrically learned LLR scores, consistently achieve strong localization performance across all signal strengths.

Figure~\ref{fig:point_estimate_error} complements this comparison by plotting the absolute estimation error $|\hat{\xi}-\xi|$ as a function of the signal strength $\mu$ for all competing methods. As expected, the estimation error decreases as $\mu$ increases and the changepoint becomes easier to detect. Moreover, the \conch{} estimators based on oracle and parametrically learned LLR scores consistently attain the smallest errors across the range of signal strengths, demonstrating accurate and efficient localization of the changepoint.

\begin{figure}[!t]
    \centering
    \includegraphics[width=0.6\linewidth]{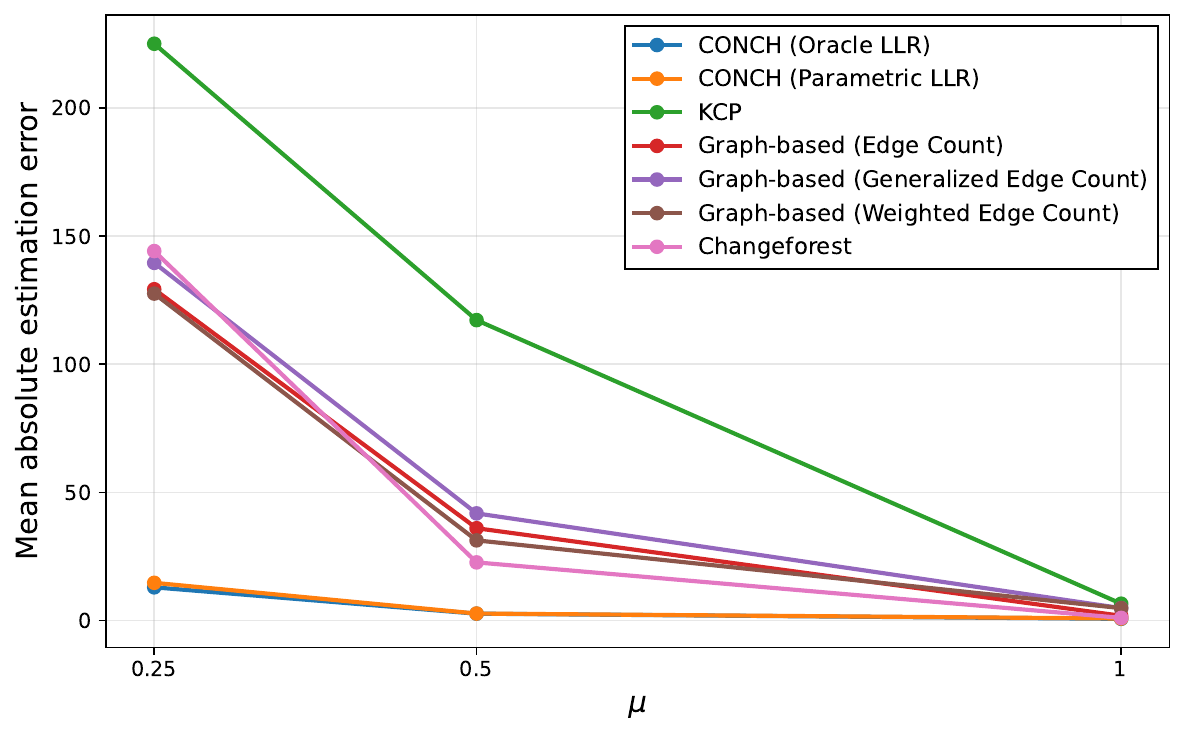}
    \caption{Mean absolute estimation error $|\hat{\xi}-\xi|$ across signal strengths for different changepoint estimators. Smaller values indicate more accurate localization. The oracle and parametrically learned LLR versions of \conch{} consistently achieve the smallest localization errors.}
    \label{fig:point_estimate_error}
\end{figure}

\subsection{Gaussian mean shift: evaluating coverage}\label{app:coverage_expt}

Now, we empirically evaluate the coverage of the \conch{}-MC algorithm (see~\Cref{alg:conch_MC}), and we empirically validate its coverage guarantee established in Theorem~\ref{thm:coverage-conch_MC}. Specifically, we generate a sequence of $n=1000$ i.i.d.\ observations with a changepoint at $\xi=400$: the pre-change distribution is $\Pcal_{0,\xi} = \bigotimes_{t=1}^{\xi}\mathcal{N}(-\mu,1)$, while the post-change distribution is $\Pcal_{1,\xi} = \bigotimes_{t=\xi+1}^{n}\mathcal{N}(\mu,1)$.

We evaluate \conch-MC using the same four CPP scores considered in Section~\ref{sec:gaussian_mean_shift}: 
(a) weighted mean difference, 
(b) oracle log-likelihood ratio (LLR), 
(c) parametrically learned LLR and 
(d) nonparametrically learned LLR.

Further, we consider three choices of $\mu$:
\[
\mu=1, 
\quad 
\mu=0.5, ~\text{and}
\quad 
\mu=0.25,
\]
respectively corresponding to a strong, moderate and weak shift in the mean.
Under each setting, we independently generate $500$ datasets and compute the empirical coverage of the resulting \conch-MC confidence sets (with $M=500$) for each choice of CPP score. 

\begin{figure}[!h]
\centering
\includegraphics[width=0.9\linewidth]{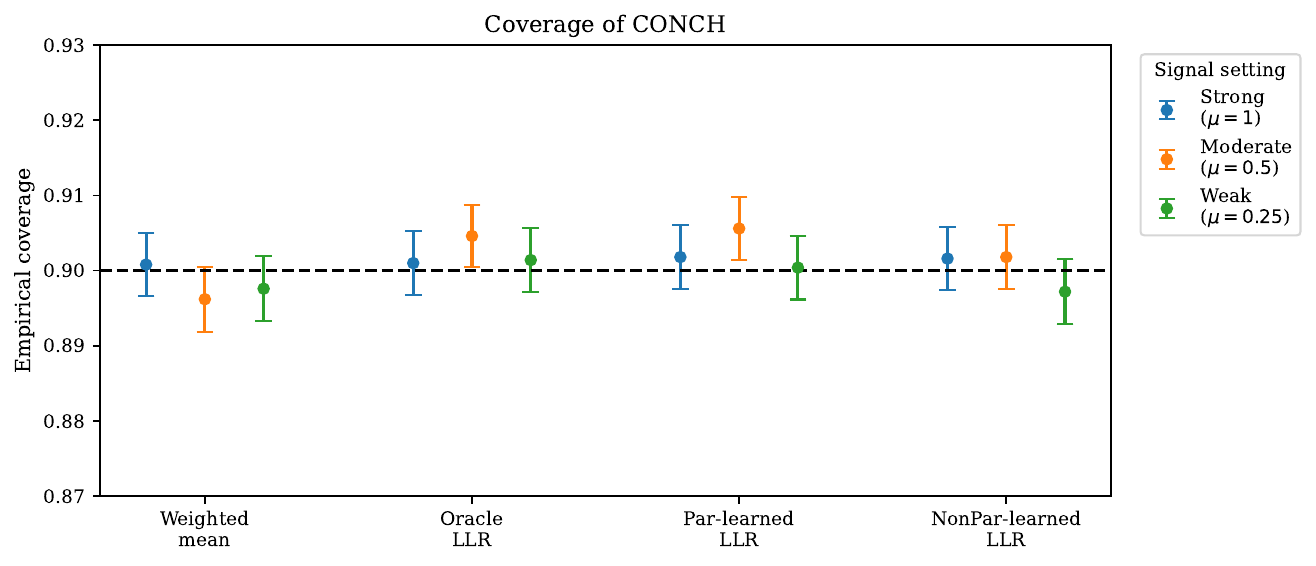}
\caption{Empirical coverage of \conch{}-MC confidence sets under the Gaussian mean-shift model. Coverage remains concentrated near the nominal level across all CPP scores and signal strengths.}

\label{fig:gaussian_coverage}
\end{figure}

Figure~\ref{fig:gaussian_coverage} reports the empirical coverage across the different score choices and signal strengths, together with standard error bars computed over the $500$ independent simulations. Across all settings, the observed coverage remains close to the nominal target level $0.9$ (shown by the horizontal dashed line), illustrating the finite-sample validity of the \conch-MC procedure. By Theorem~\ref{thm:coverage-conch_MC}, the coverage of \conch-MC is bounded below by $1-\alpha$. That said, the coverage can often be slightly conservative in finite samples, as is particularly visible for the oracle and parametrically learned LLR scores.

\subsection{\conch{}-SPLIT: effect of the split proportion on efficiency}
\label{app:conch_split_expt}

We evaluate \conch{}-SPLIT (Algorithm~\ref{alg:conch-split}) in the Gaussian mean-shift setting of Section~\ref{app:coverage_expt}. Specifically, we generate sequences of length $n=1000$ with a changepoint at $\xi=400$, where the pre-change distribution is $\Pcal_{0,\xi}=\bigotimes_{t=1}^{\xi}\mathcal{N}(-\mu,1)$ and the post-change distribution is $\Pcal_{1,\xi}=\bigotimes_{t=\xi+1}^{n}\mathcal{N}(\mu,1)$. We consider
\[
\mu=1, 
\quad 
\mu=0.5, ~\text{and}
\quad 
\mu=0.25,
\]
corresponding to strong, moderate, and weak mean shifts, respectively.

As stated in Section~\ref{sec:conch_split}, we use an interlaced full-timeline split, allocating a fraction $1-\delta$ of the observations to score learning and the remaining fraction $\delta$ to conformal calibration. We vary $\delta\in\{0.1,0.2,\ldots,0.9\}$ and examine how this split proportion affects the efficiency of \conch{}-SPLIT.

To learn the CPP score, we first obtain an initial changepoint estimate using the weighted mean-difference CUSUM statistic. This estimate induces preliminary pre- and post-change labels, which are used to train a logistic classifier and estimate the log-likelihood ratio, as described in Section~\ref{sec:practical_scores}. We then update the changepoint estimate by computing the MLE based on the corresponding estimated likelihood. The final classifier and changepoint estimate are held fixed while \conch{}-MC is applied to the calibration subsequence, after which the resulting confidence set is mapped back to the original timeline according to~\eqref{eqn:conch_split_CI}.

Figure~\ref{fig:conch_split_fraction} reports the coverage and mean relative confidence-set length $|\mathcal C^{\conch\text{-SPLIT}}_{1-\alpha}|/(n-1)$ across calibration fractions, with standard error bars. When $\delta$ is small, the calibration subsequence is sparse, and mapping the initial \conch{} confidence set back to the full timeline introduces an approximately $1/\delta$-fold expansion. Conversely, when $\delta$ is large, fewer observations remain for learning the classifier and changepoint estimate, resulting in a less informative CPP score and wider confidence sets. This deterioration for larger $\delta$ is generally less pronounced than the expansion observed for very small $\delta$. Overall, an intermediate calibration fraction provides the best tradeoff between calibration resolution and score-learning quality.

\begin{figure}[t]
\centering
\includegraphics[width=0.9\linewidth]{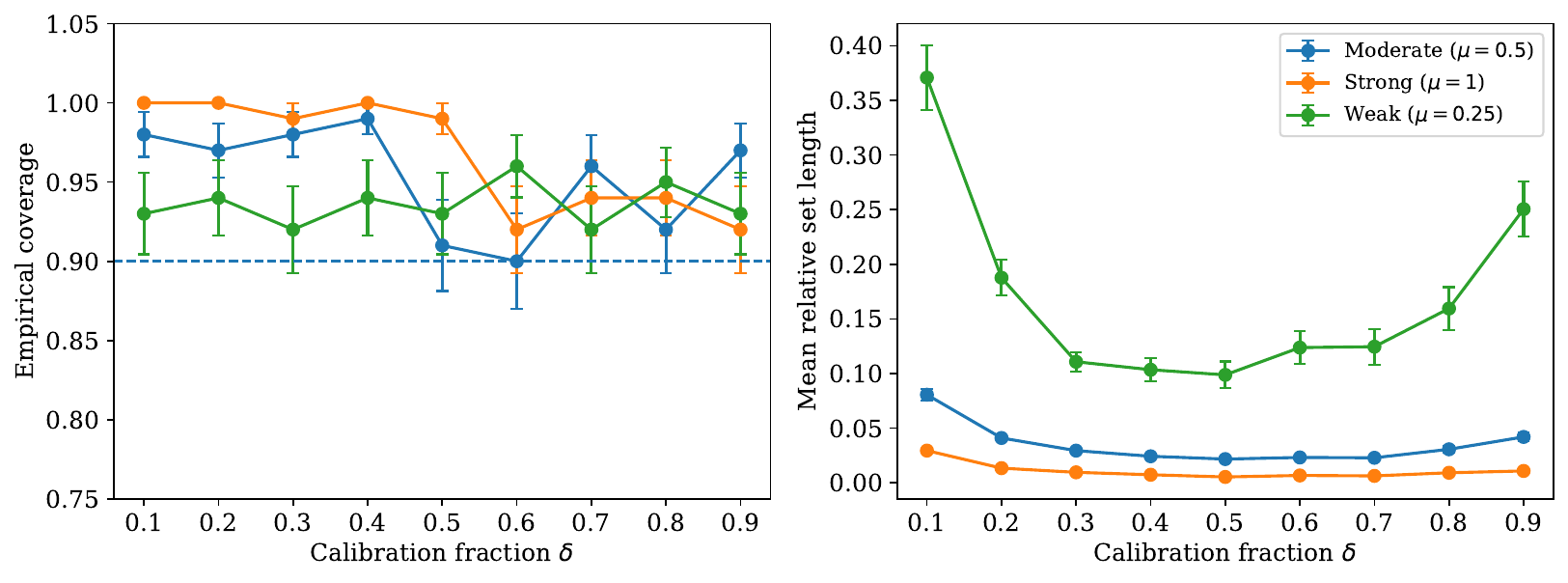}
\caption{Coverage and efficiency of \conch{}-SPLIT across calibration fractions $\delta$ under strong, moderate, and weak Gaussian mean shifts. Very small $\delta$ leads to coarse remapping onto the full timeline, whereas very large $\delta$ leaves insufficient data for score learning; intermediate fractions provide the best performance.}
\label{fig:conch_split_fraction}
\end{figure}

\subsection{Evaluating \conch{} as a calibration procedure}
\subsubsection{Calibrating resampling-based confidence sets using \conch-CAL}\label{sec:calibration_expt}
We recall that the \conch-CAL procedure (\Cref{alg:conch-calib}) can refine confidence sets that were not originally designed with distribution-free validity guarantees. In particular, the Gaussian mean-shift model has been extensively studied, and several bootstrap-based methods provide asymptotically valid intervals that perform well in practice. However, under even mild model misspecification, these intervals can become overly wide or may miss the true changepoint~$\xi$. In the following experiment, we consider two settings with $n=500$ observations, and a changepoint at $\xi=200$:
\begin{enumerate}[label=(\roman*), ref=(\roman*), noitemsep]
    \item Gaussian mean-shift model: $\Pcal_{0,\xi} = \bigotimes_{t=1}^{\xi}\mathcal{N}(-1,3)$ and $\Pcal_{1,\xi} = \bigotimes_{t=\xi+1}^{n}\mathcal{N}(1,3)$
    \item Laplace mean-shift model: $\Pcal_{0,\xi} = \bigotimes_{t=1}^{\xi}\mathrm{Laplace}(-1,3)$ and $\Pcal_{1,\xi} = \bigotimes_{t=\xi+1}^{n}\mathrm{Laplace}(1,3)$.
\end{enumerate}

In both settings, we compute bootstrap confidence intervals and evaluate how the \conch-CAL procedure refines them to produce distribution-free confidence sets. We employ a residual bootstrap scheme to construct the initial confidence sets: for each replicate, the changepoint is re-estimated on a resampled sequence formed from centered residuals, producing an empirical distribution of $\hat{\tau}$ from which the bootstrap $p$-values are obtained. As a baseline, we also apply \conch{} directly using the unnormalized bootstrap $p$-values as the CPP score. This provides a naive distribution-free calibration against which the normalization used by \conch-CAL can be compared.

\begin{figure}[!h]
    \centering
    \includegraphics[width=0.73\linewidth]{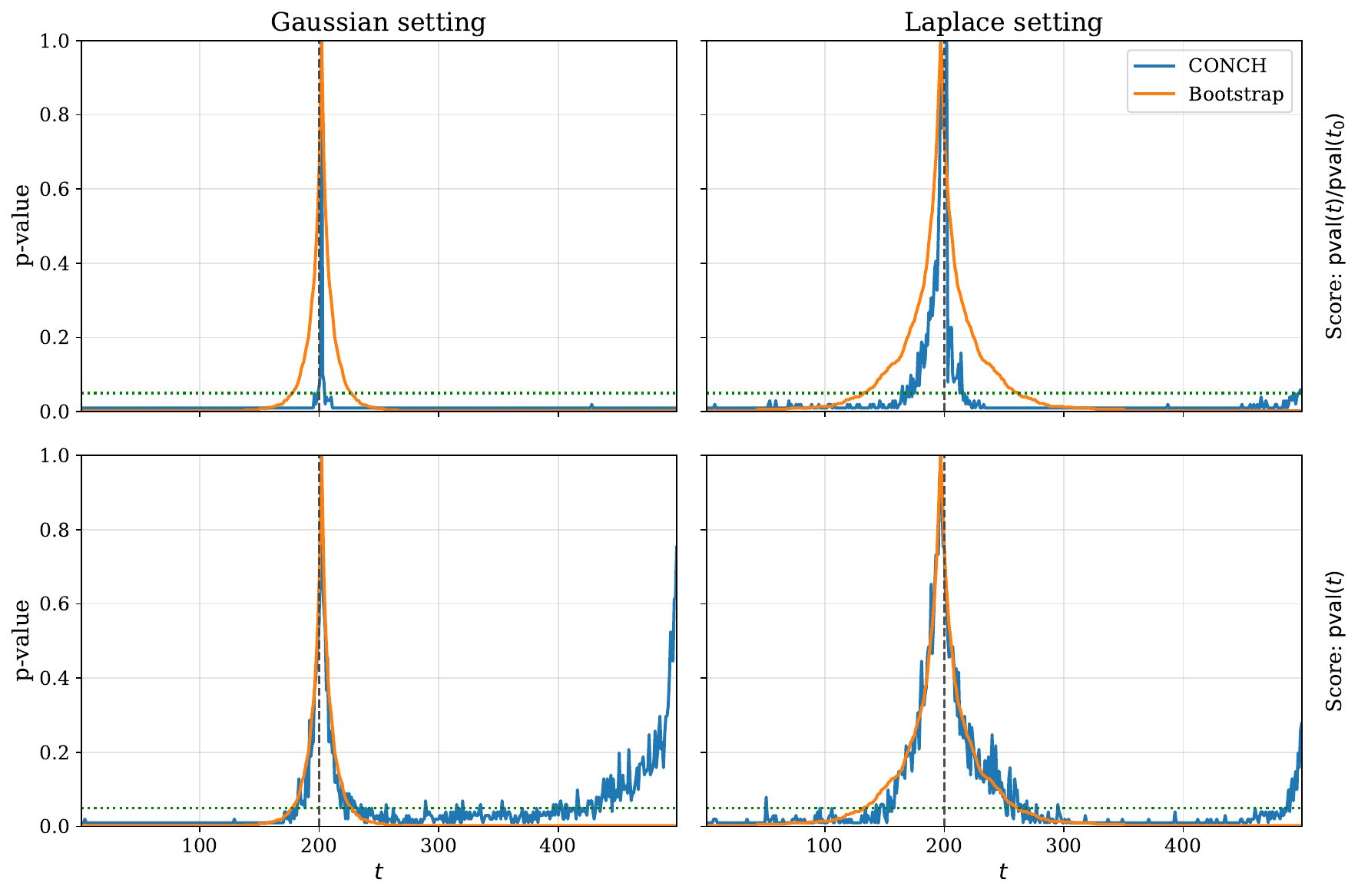}
    \caption{Calibration of bootstrap-based confidence sets under Gaussian and Laplace mean shifts. The top row uses the normalized \conch{}-CAL score $\mathrm{pval}(t)/\mathrm{pval}(t_0)$, while the bottom row uses the unnormalized score $\mathrm{pval}(t)$. Normalization yields substantially sharper confidence sets, even under model misspecification.}

    \label{fig:conch_calibration}
\end{figure}

Under the correctly specified Gaussian model, the bootstrap interval $[180,224]$ is refined by \conch-CAL to produce the narrower interval $[197,205]$. Under model misspecification, in the Laplace setting, the bootstrap interval $[140,258]$ is considerably inflated by the heavy-tailed noise. After calibration, it is reduced to $[196,202]$, enabling more precise localization. The bootstrap $p$-values are generally diffuse (see Figure~\ref{fig:conch_calibration} top row for a visualization), whereas the \conch-CAL $p$-values remain sharply concentrated near the true changepoint, highlighting the robustness of the procedure across distributional regimes.

Applying \conch{} directly with the bootstrap $p$-values as CPP scores also produces a valid distribution-free confidence set, but yields substantially wider sets in both settings, as shown in the bottom row of Figure~\ref{fig:conch_calibration}. This empirically supports the intuition in Section~\ref{sec:calibration}: the normalization in the CPP score in \eqref{score:p-value_score} is not essential for validity, but is important for retaining the relative localization information encoded by the original procedure and obtaining sharper confidence sets.

\subsubsection{Calibrating point estimates into confidence sets}\label{app:conch_around_changepoint}

Next, we empirically investigate the procedure for calibrating a point estimate from Section~\ref{sec:point_estimate_calibration}. In particular, we take the changepoint estimate produced by the kernel changepoint procedure (KCP) of \citet{arlot2019kernel}, and we consider the Gaussian mean-shift setting from Section~\ref{sec:gaussian_mean_shift} with $n=200$ observations and a changepoint at $\xi=80$, where the pre-change and post-change samples are generated i.i.d.\ from $\mathcal{N}(-1.5,1)$ and $\mathcal{N}(1.5,1)$, respectively. The KCP estimate for the observed data sequence is given by $\hat{\xi}_0=79$.

We apply \conch{} with the recommended scores from Section~\ref{sec:point_estimate_calibration}: the weighted mean-difference score and the learned LLR score. For the latter, the log-likelihood ratio is either estimated parametrically under a Gaussian model or approximated using classifier logits obtained from a random forest classifier with maximum depth $4$. Figure~\ref{fig:KCP_conch_single} displays the resulting \conch{} $p$-values. For all the scores, the largest $p$-values occur in a small neighborhood of the true changepoint and the KCP estimate, while the $p$-values away from this region are typically close to zero. Consequently, the resulting confidence sets provide a meaningful quantification of the uncertainty associated with the initial point estimate.

The parametrically learned LLR score produces a noticeably sharper $p$-value profile than the weighted mean-difference score, consistent with our other experimental findings in Section~\ref{sec:gaussian_mean_shift}. Overall, the experiment demonstrates that \conch{} can be used to wrap an existing changepoint estimator and transform an almost accurate point estimate into a finite-sample valid confidence set for the true changepoint.

\begin{figure}[!h]
\centering
\includegraphics[width=0.84\linewidth]{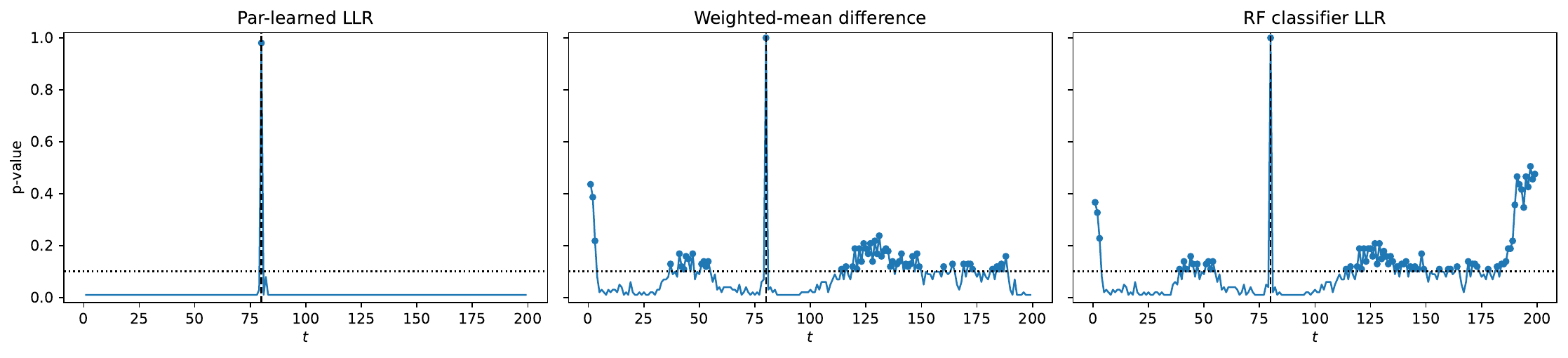}
\caption{\conch{} $p$-value profiles obtained by calibrating the KCP estimate $\hat{\xi}_0=79$ using three CPP scores. The $p$-values concentrate near the true changepoint $\xi=80$, with the parametrically learned LLR score producing the sharpest localization. }
\label{fig:KCP_conch_single}
\end{figure}

\subsection{Empirical evaluation of \conch{} extensions}
\subsubsection{\conch-SEG: localization of multiple changepoints}
\label{app:multiple_changepoint_expt}
We consider a multiple-changepoint Gaussian mean-shift model to illustrate the performance of \conch-SEG (\Cref{alg:conch-seg}). In particular, we generate $n=1500$ observations with true changepoints at $\xi_1=150$, $\xi_2=500$, $\xi_3=820$, and $\xi_4=1100$, segment means $\mu_1=-1$, $\mu_2=0.5$, $\mu_3=1.5$, $\mu_4=-2$, and $\mu_5=-1$, and common variance $\sigma^2=1$, i.e.,
\[
X_t \sim \mathcal{N}(\mu_j,\sigma^2)\quad \text{for } t\in(\xi_{j-1},\xi_j], 
\qquad \text{and}~~\xi_0=0,\ \xi_5=n.
\]

Initial changepoint estimates are obtained via KCP \citep{garreau2018consistent} with a Gaussian kernel, yielding the sequence $(150, 497, 820, 1091)$. As expected, when the pre- and post-change distributions around a changepoint are more distinct, KCP detects the changepoint more accurately, whereas the estimate is offset by a small margin when the adjacent distributions are more similar.

We then apply \conch-SEG, wrapping the \conch{} framework around KCP and using the parametric CPP score specialized to the Gaussian family with known variance (see~\eqref{score:learned_llr}). The resulting confidence set is
\[
[145,152]\ \cup\ [483,515]\ \cup\ [820,821]\ \cup\ [1084,1103].
\]
Figure~\ref{fig:KCP_conch} displays the observed sequence (left) and the corresponding \conch{} $p$-values (right). The procedure sharply localizes all the changepoints. The sets around $\xi_2$ and $\xi_4$ are wider than the sets around $\xi_1$ and $\xi_3$ because the distributions on either side are more similar.

\begin{figure}[!h]
    \centering
    \includegraphics[width=0.9\linewidth]{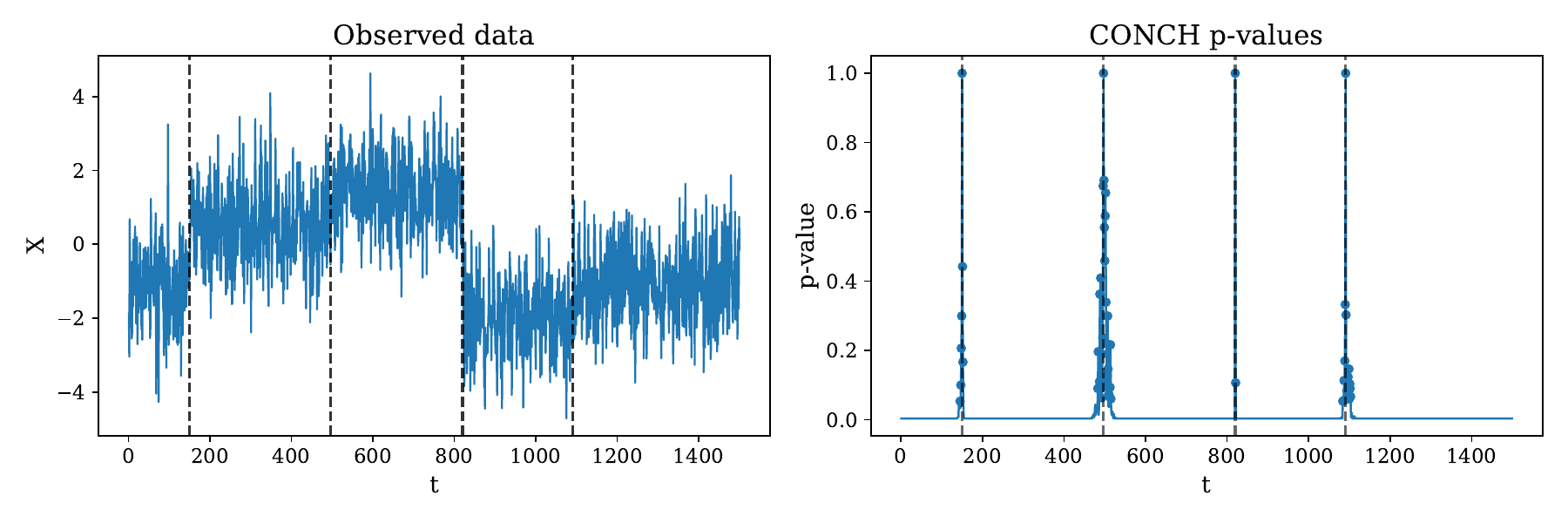}
    \caption{Multiple-changepoint localization using \conch{}-SEG. Left: observed Gaussian mean-shift sequence. Right: \conch{} $p$-values obtained by wrapping KCP with the parametric LLR score. All four changepoints are sharply localized, with wider confidence sets for the more difficult-to-detect changes.}
    \label{fig:KCP_conch}
\end{figure}

\subsubsection{\conch-DEP: changepoint localization under temporal dependence}
\label{app:dependent_expt}

We now investigate the empirical performance of \conch-DEP, introduced in Appendix~\ref{app:conch-dep}, on temporally dependent data. Throughout, we consider a sequence of length $n=500$ with a changepoint at $\xi=200$ and compute the candidate-wise $p$-values $p_t^{(m)}$ for all $t\in[n-1]$ using the \conch-DEP procedure.

We consider two forms of temporal dependence. The first corresponds to the idealized setting of $m$-dependence covered by Theorem~\ref{thm:conch_dep}. Specifically, on either side of the changepoint, observations are generated from independent stationary moving-average processes of order $m$. For $k\in\{0,1\}$, let
\[
Y_i^{(k)}=\mu_k+\varepsilon_i^{(k)}+
\sum_{j=1}^{m}
\theta_j \varepsilon_{i-j}^{(k)}, \qquad
\varepsilon_i^{(k)}
\stackrel{\mathrm{iid}}{\sim}
N(0,\sigma^2).
\]
The observed sequence is then given by
\[
X_i=\begin{cases}
Y_i^{(0)}, & i\le \xi,\\
Y_{i-\xi}^{(1)}, & i>\xi,
\end{cases}
\]
so that the process is stationary and $m$-dependent on either side of the changepoint. In this setting, we can apply \conch-DEP with the correct spacing parameter $m$.

Our second experimental setting considers a dependent model that falls outside the scope of the finite-sample validity result in Theorem~\ref{thm:conch_dep}. Specifically, we generate an autoregressive process of order one, by computing 
\[
X_i=\begin{cases}
\mu_0+\phi\cdot (X_{i-1}-\mu_0)+\varepsilon_i,
&i\le \xi,\\
\mu_1+\phi \cdot (X_{i-1}-\mu_1)+\varepsilon_i,
&i>\xi,
\end{cases}
\]
where $\varepsilon_i\stackrel{\mathrm{iid}}{\sim}N(0,\sigma^2)$ and $\phi=0.5$. Here, the dependence weakens geometrically with the lag, so observations that are sufficiently far apart are approximately independent. However, the process does not satisfy the stationarity or $m$-dependence assumptions exactly. Nevertheless, as discussed in Appendix~\ref{app:conch-dep}, we apply \conch-DEP with spacing parameter $m=3$.

For both dependence structures, we evaluate the performance of \conch-DEP with two CPP scores: the oracle likelihood-ratio score, which assumes complete knowledge of the pre- and post-change Gaussian models, and a parametric learned likelihood-ratio score, which estimates the Gaussian parameters from the observations on either side of the candidate changepoint.

Figure~\ref{fig:conch_dep_profiles} displays the resulting $p$-values for both dependence structures and both choices of CPP score. In the $m$-dependent setting, the oracle and parametrically learned LLR scores produce confidence sets that are constrained to the ranges $[169,227]$ and $[180,215]$, respectively. In the AR(1) setting, the corresponding ranges are $[160,220]$ and $[180,220]$. Despite the presence of temporal dependence, all four confidence sets remain tightly localized around the true changepoint.

\begin{figure}[t]
    \centering
    \includegraphics[width=0.49\linewidth]{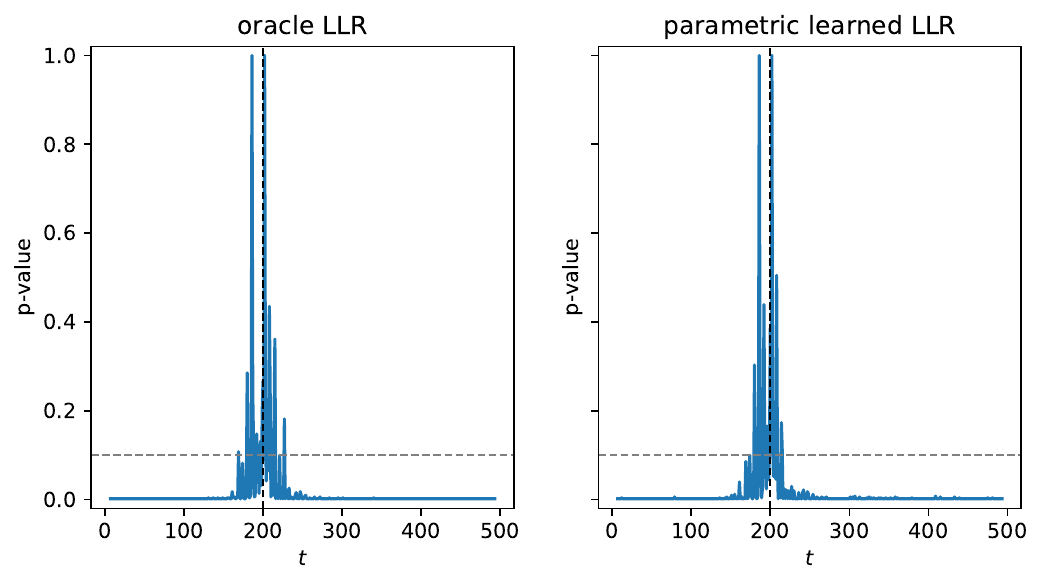}
    \hfill
    \includegraphics[width=0.49\linewidth]{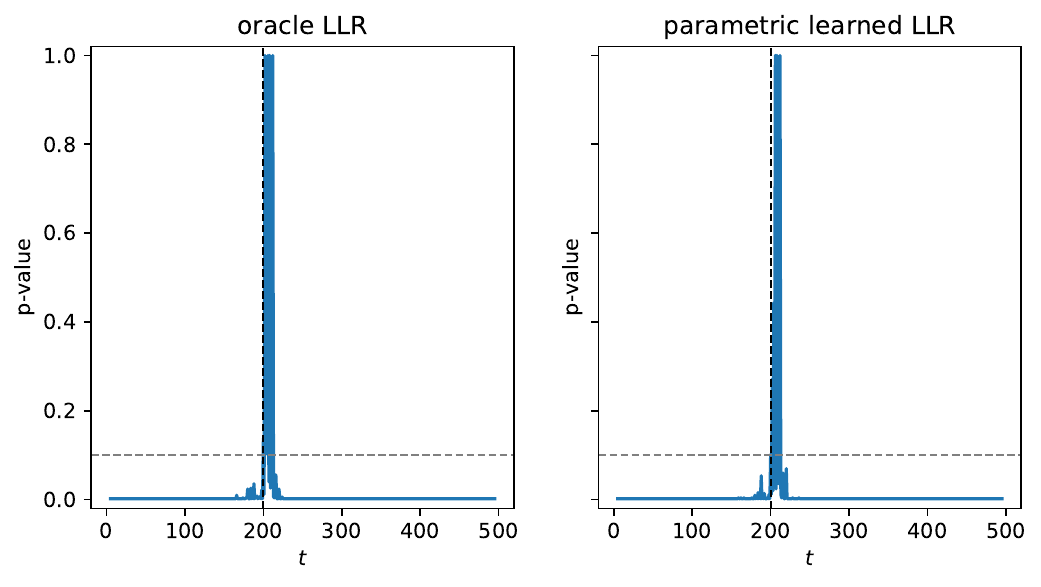}
    \caption{\conch{}-DEP $p$-value profiles under temporal dependence. The left two panels correspond to the stationary $m$-dependent setting, and the right two to the AR(1) setting. The confidence sets remain tightly localized despite temporal dependence.}
    \label{fig:conch_dep_profiles}
\end{figure}
\subsection{Two urns model: effect of dissimilarity between $\Pcal_{0,\xi}$ and $\Pcal_{1,\xi}$ on confidence set length}
While we have demonstrated the performance of \conch{} on a variety of changepoint detection tasks, our experiments so far have focused on i.i.d.\ settings, that is, changepoint models within $\Pcal_{\textnormal{IID}}$. 
In what follows, we go beyond the i.i.d.\ assumption and show that the \conch{} framework requires only exchangeability to produce valid confidence sets. 

To illustrate this, we evaluate the performance of the \conch{} confidence sets on a two-urn model with finite populations. 
Specifically, we consider two urns, each containing $2500$ balls colored either red or blue. 
The proportions of red balls in the first and second urns are $0.5 - \delta$ and $0.5 + \delta$, respectively, for some $\delta \in (0,0.5]$. 
We draw balls without replacement: the first $\xi = 350$ draws come from urn~1, and the remaining from urn~2, yielding a total of $n = 800$ observations. 
Our goal is to detect the changepoint $\xi$. 
We use the weighted mean difference as the CPP score, and for each $\delta \in \{0.05, 0.10, \ldots, 0.50\}$, we run the \conch-MC algorithm (\Cref{alg:conch_MC}) with $M = 300$ permutations to obtain confidence sets.

When $\delta$ is small, the pre-change and post-change distributions are nearly indistinguishable. Consequently, no method can sharply localize the changepoint, including \conch{} confidence sets. As $\delta$ increases, the two distributions become more distinct. In the extreme case $\delta=0.5$, the first urn contains only blue balls and the second only red balls, allowing perfect localization with absolute confidence. Accordingly, the length of the \conch{} confidence sets, averaged over $15$ independent runs, decreases with $\delta$, as shown in the right panel of Figure~\ref{fig:two_urn_expt}, where the shaded region denotes one standard error around the mean.
 Across the whole collection of $\delta$ values, the true change-point $\xi=350$ lies within the reported confidence set, demonstrating the validity of our procedure (left panel of Figure~\ref{fig:two_urn_expt}).
\begin{figure}[!h]
    \centering
    \includegraphics[width=0.85\linewidth]{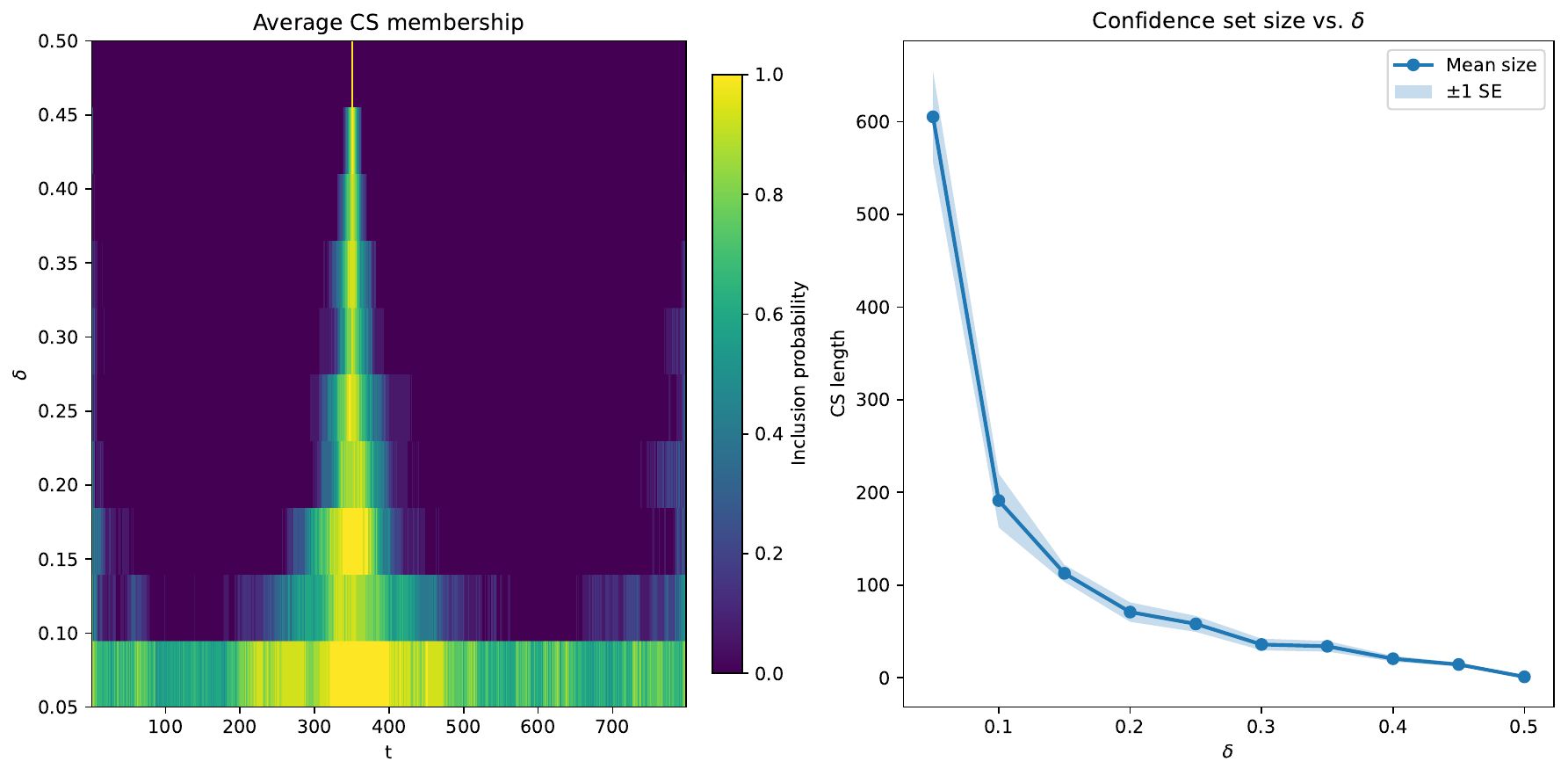}
    \caption{Two-urn changepoint experiment across dissimilarity levels $\delta$. Left: empirical confidence-set membership probabilities across candidate indices. Right: mean confidence-set length, with the shaded region denoting one standard error. The true changepoint remains covered, while the confidence set contracts as the two urn distributions become more distinguishable.}
    \label{fig:two_urn_expt}
\end{figure}

\subsection{Real data experiments}
\subsubsection{MNIST: detect change in digits}
We conduct a simulation based on the MNIST handwritten digits dataset \citep{deng2012mnist} to evaluate the performance of \conch{} for a digit shift localization. In particular, suppose we observe a sequence of $1,000$ images: the first $\xi=400$ observations consist of i.i.d. samples of the digit ``$1$'', and the latter observations are i.i.d. samples of the digit ``$7$'' (Figure \ref{fig:mnist-sample}).
\begin{figure}[!h]
    \centering
    \includegraphics[width=0.8\linewidth]{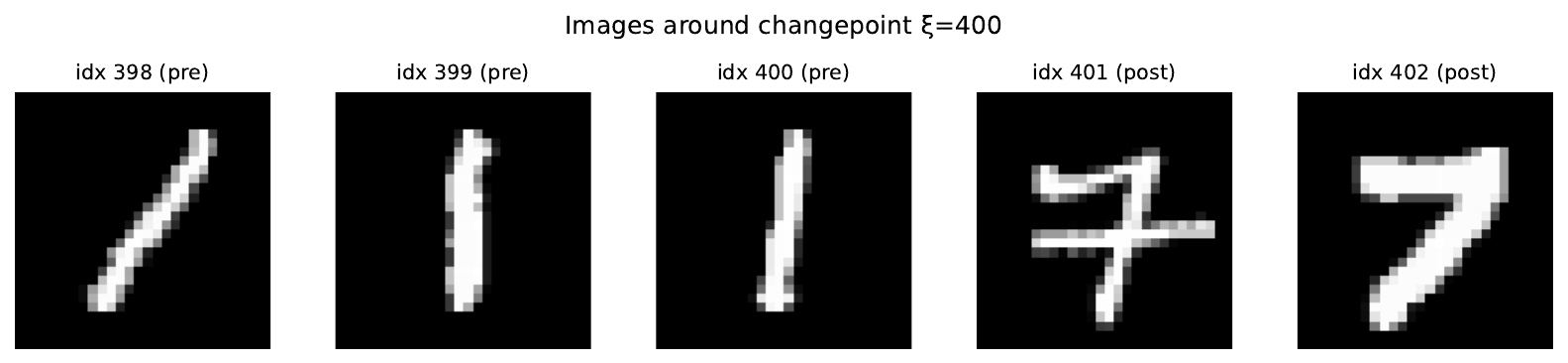}
    \caption{MNIST changepoint setup with a shift from digit ``1'' to digit ``7'' at $\xi=400$ $(n=1000)$. The visual similarity between the two digit classes makes localization nontrivial.}
    \label{fig:mnist-sample}
\end{figure}

As in our main experiments, we use a classifier based log-likelihood ratio as CPP score in our \conch{} algorithm. Specifically, we first train a convolutional neural network classifier on an independent labeled dataset to distinguish between the two digits; its logits then define the CPP score, which is then passed to \conch{} to produce a confidence interval for the changepoint. 

Although the handwritten digits ``$1$'' and ``$7$'' often exhibit substantial visual similarity, our approach accurately detects the changepoint, yielding an exceptionally narrow, in fact singleton confidence set $\{400\}$ (Figure~\ref{fig:mnist_pvalue}). We remark that the sharp localization here is partially a consequence of the strong classifier, which can confidently distinguish between the two digits. In the next section, we investigate how classifier strength influences the width of \conch{} confidence sets on the CIFAR-100 dataset.

\begin{figure}[!h]
    \centering
    \includegraphics[width=0.6\linewidth]{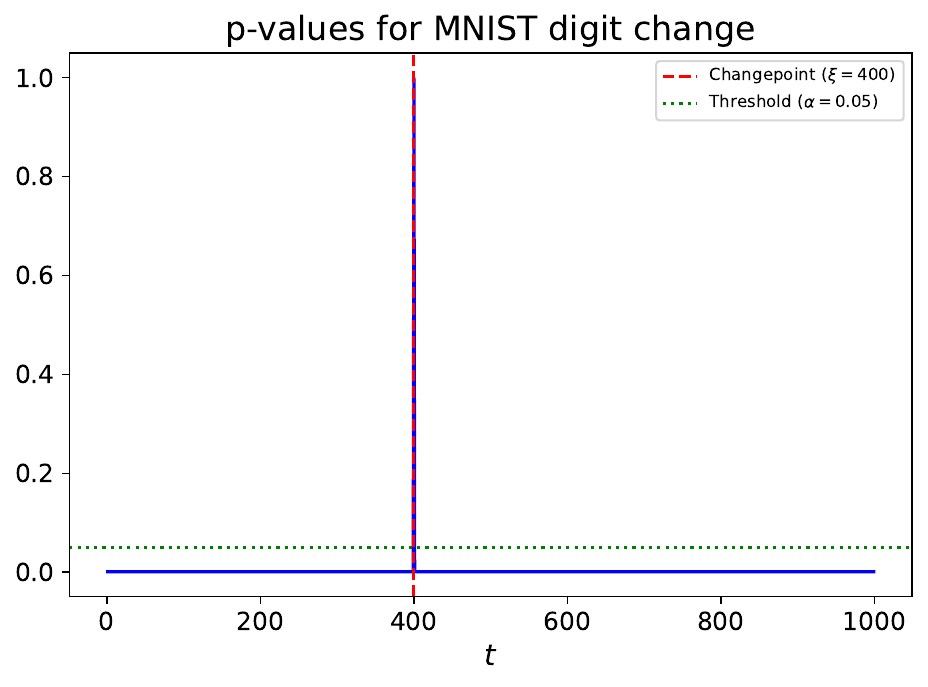}
    \caption{\conch{} $p$-values for the MNIST shift from digit ``1'' to digit ``7'' at $\xi=400$. The dotted line marks $\alpha=0.05$. The classifier-based CPP score exactly localizes the changepoint, yielding the singleton confidence set $\{400\}$.}

    \label{fig:mnist_pvalue}
\end{figure}

\subsubsection{CIFAR100: classifier strength affects power of \conch{}}\label{app:cifar100_expt}
We simulate a class-shift scenario using the CIFAR-100 image dataset \citep{krizhevsky2009learning} to evaluate \conch{} under a challenging setting. Specifically, we construct a sequence of $n=1,000$ observations with a changepoint at $\xi=400$: the pre-change distribution $\P_{0,\xi}$ consists of i.i.d. images of bears, while the post-change distribution $\P_{1,\xi}$ consists of i.i.d. images of beavers (Figure~\ref{fig:cifar100-sample}). Because bears and beavers share many visual attributes, accurately localizing the changepoint is a non-trivial task.

\begin{figure}[!h]
    \centering
    \includegraphics[width=0.85\linewidth]{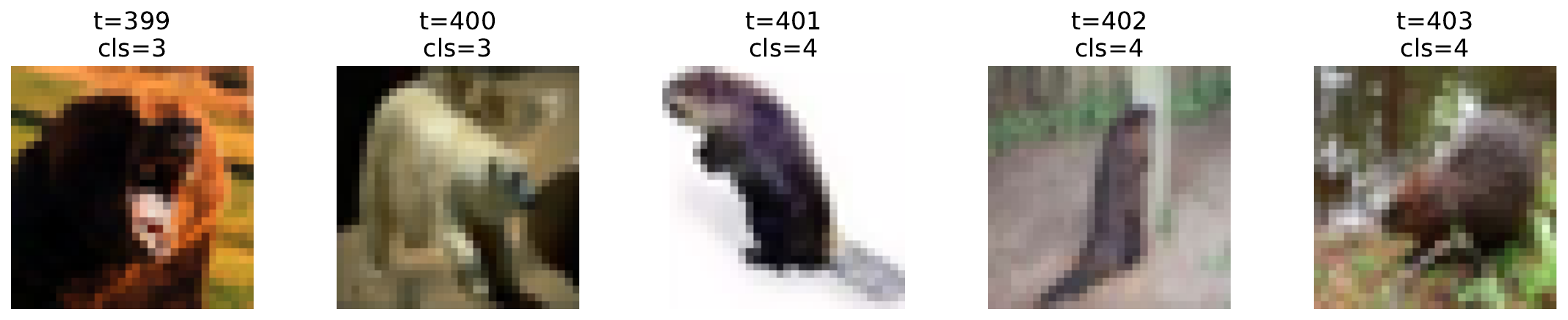}
    \caption{CIFAR-100 changepoint setup with a shift from bear images to beaver images at $\xi=400$ $(n=1000)$. The substantial visual similarity between bears and beavers creates a challenging localization problem.}
    \label{fig:cifar100-sample}
\end{figure}

We pre-train a small three-block convolutional network with a lightweight classification head on an independent labeled dataset. We first train this network for $5$ epochs to obtain a weak classifier and then train it further for an additional $20$ epochs to obtain a stronger classifier. The resulting logits from each model define a CPP score, which we pass to \conch{} to produce a changepoint confidence interval.

Figure~\ref{fig:cifar_pvalue} reports the $p$-value distributions and confidence sets produced by \conch{}. As anticipated, the stronger classifier yields sharper separation between the two classes, leading to a much narrower confidence set 
$[398,405]\cup\{408,415,419\}$ compared to the weaker model’s wider interval 
$[387,434]$. This experiment highlights both the sensitivity of \conch{} to classifier quality and its ability to localize changepoints even under subtle visual differences between classes.

\begin{figure}[!h]
    \centering
    \includegraphics[width=0.85\linewidth]{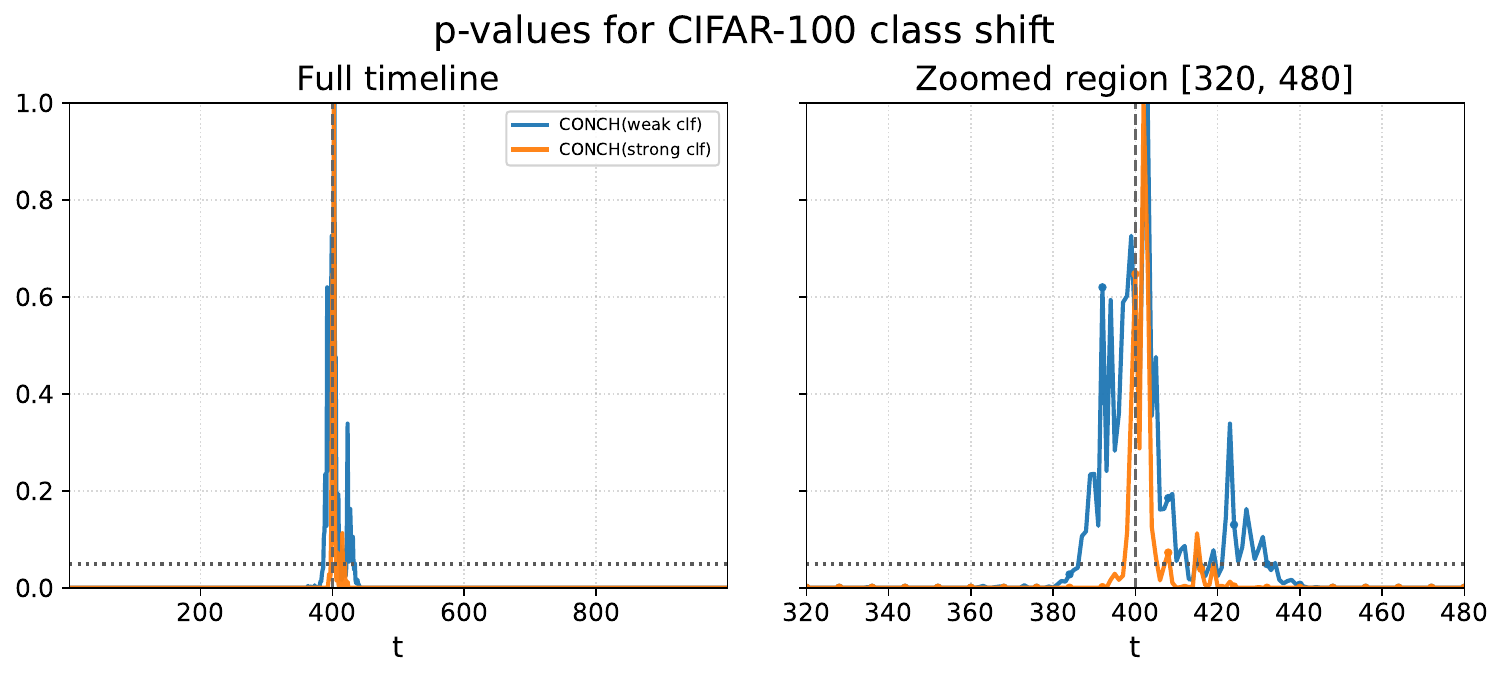}
    \caption{\conch{} $p$-value profiles for the CIFAR-100 bear-to-beaver shift using weak and strong classifiers. Left: full timeline. Right: enlarged region around the true changepoint $\xi=400$. A stronger classifier produces a substantially sharper confidence set, illustrating the effect of classifier quality on localization power.}
    \label{fig:cifar_pvalue}
\end{figure}

\clearpage

\end{document}